\documentclass[11pt]{article}
\usepackage[margin=1in]{geometry}
\parskip=0.5ex

%asdf
\usepackage{epsfig}
\usepackage{amssymb,amsmath,amsthm}
\usepackage{mathrsfs}
\setcounter{tocdepth}{3}
\usepackage{graphicx}
\usepackage{multirow}
\usepackage{color}
\usepackage{url}
%\urldef{\mailsa}\path|{alfred.hofmann, brigitte.apfel, ursula.barth, christine.guenther,|
%\urldef{\mailsb}\path|ingrid.haas, frank.holzwarth, anna.kramer, leonie.kunz, nicole.sator,|
%\urldef{\mailsc}\path|erika.siebert-cole, peter.strasser, lncs}@springer.com|

%NOTE: All 'RS code'  folded RS code FRS code
% GIVE a definition RMT

\newcommand{\remove}[1]{}

%%%%%%%%%%%%%%%%%%REI-start

%%%%%%%%%%%%%%%%%%% REI-end

%%%%%%%%%%%%%%%%%%MAHDI-start

\newcommand{\ext}{\mathsf{Ext}}
\newcommand{\zo}{\{0,1\}}

\newtheorem{thm}{Theorem}

\theoremstyle{definition}

%\newtheorem{remark}[thm]{Remark}
%%%%%%%%%%%%%%%%%%MAHDI-END

%below from icits

  %restricted bit tampering with leakage
 %adaptive version
 %BIT
  %restricted bit tampering with leakage:rho_r>0

  %restricted bit tampering with leakage:rho_w<1

\newtheorem{theorem}[thm]{Theorem}
\newtheorem{lemma}[thm]{Lemma}
\newtheorem{definition}[thm]{Definition}

\newtheorem{corollary}[thm]{Corollary}

\newtheorem{remark}[thm]{Remark}

\newcommand*\samethanks[1][\value{footnote}]{\footnotemark[#1]}

\title{Leakage-Resilient Non-Malleable Secret Sharing in Non-compartmentalized Models} %from Stochastic Affine Codes and Randomness Extractors}  against Affine Functions
%Binary Secret Sharing Schemes with Asymptotically Zero Gap
%Secret Sharing Schemes from Error Correcting Codes and Affine Extractors
%all connectivities Lower bounds of $\delta$ and a new constrcution of }

\author{Fuchun Lin\thanks{Division of Mathematical Sciences, School of Physical and Mathematical Sciences, Nanyang Technological University, SG}
\and Mahdi Cheraghchi\thanks{Department of Computing, Imperial College London, UK} 
\and Venkatesan Guruswami\thanks{Computer Science Department, Carnegie Mellon University, USA}  
\and Reihaneh Safavi-Naini\thanks{Department of Computer Science, University of Calgary, CA}
\and Huaxiong Wang\samethanks[1]}%{Division of Mathematical Sciences, School of Physical and Mathematical Sciences, Nanyang Technological University, SG}}

\date{}

%\geometry{a4paper}
%\usepackage{epstopdf}
%\DeclareGraphicsRule{.tif}{png}{.png}{`convert #1 `dirname #1`/`basename #1 .tif`.png}
\usepackage[hidelinks]{hyperref}

\begin{document}

\maketitle
\begin{abstract} 

Non-malleable secret sharing was recently proposed by Goyal and Kumar in independent tampering and joint tampering models for threshold secret sharing (STOC18) and secret sharing with general access structure (CRYPTO18). The idea of making secret sharing non-malleable received great attention and by now has generated many papers exploring new frontiers in this topic, such as multiple-time tampering and adding leakage resiliency to the one-shot tampering model. Non-compartmentalized tampering model was first studied by Agrawal et.al (CRYPTO15) for non-malleability against permutation composed with bit-wise independent tampering, and shown useful in constructing non-malleable string commitments. In spite of strong demands in application, there are only a few tampering families studied in non-compartmentalized model, due to the fact that compartmentalization (assuming that the adversary can not access all pieces of sensitive data at the same time) is crucial for most of the known techniques. 

We initiate the study of leakage-resilient secret sharing in the non-compartmentalized model.
Leakage in leakage-resilient secret sharing is usually modelled as arbitrary functions with bounded total output length applied to each share or up to a certain number of shares (but never the full share vector) at one time. Arbitrary leakage functions, even with one bit output, applied to the full share vector is impossible to resist since the reconstruction algorithm itself can be used to construct a contradiction. We allow the leakage functions to be applied to the full share vector (non-compartmentalized) but restrict to the class of affine leakage functions. The leakage adversary can corrupt several players and obtain their shares, as in normal secret sharing. The leakage adversary can apply arbitrary affine functions with bounded total output length to the full share vector and obtain the outputs as leakage. These two processes can be both non-adaptive and do not depend on each other, or both adaptive and depend on each other with arbitrary ordering. We use a generic approach that combines randomness extractors with error correcting codes to construct such leakage-resilient secret sharing schemes, and achieve constant information ratio (the scheme for non-adaptive adversary is near optimal).

We then explore making the non-compartmentalized leakage-resilient secret sharing also non-malleable against tampering. We consider a tampering model, where the adversary can use the shares obtained from the corrupted players and the outputs of the global leakage functions to choose a tampering function from a tampering family $\mathcal{F}$. We give two constructions of such leakage-resilient non-malleable secret sharing for the case $\mathcal{F}$ is the bit-wise independent tampering and, respectively, for the case $\mathcal{F}$ is the affine tampering functions, the latter is non-compartmentalized tampering that subsumes the permutation composed with bit-wise independent tampering mentioned above.

%This approach gives our leakage-resilient non-malleable secret sharing a unique non-compartmentalized feature that is not known in the literature. More concretely, our secret sharing schemes tolerate any affine leakage function, bounded only by the output length, applied to the full share vector. %at bit-level. 

 %Concerning tamper-resilience, our first construction (based on affine non-malleable extractors) is non-malleable with respect to any affine tampering functions applied to the full share vector. Tolerating non-compartmentalized tampering is an exciting feature that is highly interesting in non-malleable coding and is rarely achieved. 
%Our second construction (based on linear seeded non-malleable extractors) ... 
%We then focus on the tampering family of affine functions and construct non-malleable secret sharing in the adaptive tampering model. The constructions are modular with an erasure code and an extractor that provides both privacy and non-malleability. We make use of randomness extractors of various flavours, including the seeded/seedless non-malleable extractors. 
%We discuss our results and open problems.
\end{abstract}
%\textcolor{blue}{When $\mathcal{F}$ is $P$-split state, threshold scheme is impossible even without leakage.}

%A new model of non-malleable codes that includes the well-studied bit-wise independent non-malleable codes as a special case is proposed. The model is motivated by the physical layer attacks in an adversarial communication channel: assuming the adversary can read a fraction of codeword and tamper with a fraction of codeword. A rate upper bound and a rate lower bound of non-malleable codes in this model are obtained. Concrete non-malleable codes in this model are constructed, including a family of bit-wise independent non-malleable codes that can be implemented using Hamming codes with the smallest code (in the family) in length $31$. Continuous non-malleable codes in this model are also studied. Two scenarios are shown impossible in the information-theoretic setting while in the other two scenarios, constructions are given. 
%We consider NM codes for two special classes   of functions, \textcolor{red}{$F_{\rho_w}$} \textcolor{blue}{$F_{\rho}$}  and $F_{\rho_r,\rho_w}$. We motivate the study of these classes  and consider (i) rate bound and (ii) construction that achieves the bound.

%%%%%%%%%%%%%%
\section{Introduction}
%Physical 
%Attacks on computer systems are in general in the form of leakage and tampering, where in the former the adversary's goal is to learn part of the secret state of the system, and in the latter the goal is to modify the secret state. The most powerful attacks however are adaptive attacks that allow the adversary to execute the tampering attacks according to the information obtained through the leakage.
Secret sharing, introduced independently by Blakley \cite{Blakley} and Shamir \cite{Shamir},
is a fundamental cryptographic primitive with far-reaching
applications; e.g., a major tool in secure multiparty computation (cf.\ \cite{ref:CDB15}). 
%It provides leakage resilience while enabling fine grained access control. 
%Non-malleable codes were proposed by Dziembowski, Pietrzak and Wichs \cite{DzPiWi} to  protect  computer systems against physical   tampering where the tampering function  is from a function family $\mathcal{F}$.  The family $\mathcal{F}$ usually contains sophisticated tampering functions against which error correction and even detection are impossible. It is then a very interesting problem to investigate how tamper resilience against a sophisticated tampering family $\mathcal{F}$ can be facilitated for secret sharing while preserving its fine grained access control. The tampering should be the strongest type of tampering that is adaptive according to the leakage.
The goal in secret sharing
is to encode a secret $\mathsf{s}$ into a number of \emph{shares} 
$\mathsf{c}_1, \ldots, \mathsf{c}_P$ %for a constant integer $P$ 
that are distributed
among a set $\mathcal{P}=\{1,\ldots,P\}$ of players such that the access to the secret through collaboration of players can be 
accurately controlled. 
An \emph{authorized} subset of players is a set $A \subseteq \mathcal{P}$ such that
the shares with indices in $A$ can be pooled together to reconstruct the secret $\mathsf{s}$.
On the other hand, $A$ is an \emph{unauthorized} subset if the knowledge of
the shares with indices in $A$ reveals no information about the secret. 
The set  of %Structure of the
 authorized and unauthorized sets define an  access structure,
where the most widely used is the so-called
\emph{threshold} structure. A  threshold 
secret sharing scheme is defined
with respect to a reconstruction threshold $r$ and satisfies the following property:
Any set $A \subseteq \mathcal{P}$ with $|A| < r$ is an unauthorized set and any set $A \subseteq \mathcal{P}$ with $|A| \geq r$ is an authorized set.
%That is, the knowledge of any $t$ shares or fewer does not reveal any information about the secret. On the other hand, any set $A$ with $|A| > t$ is an authorized set; namely, the knowledge of any $t+1$ or more shares completely reveals the secret.  Remarkably, Shamir's scheme is optimal for threshold secret sharing in the following sense: 
Any threshold secret sharing scheme  sharing $\ell$-bit secrets necessarily requires shares of length at least $\ell$, and Shamir's scheme attains this lower bound \cite{ref:Sti92}. The {\em information ratio} defined as the ratio of the maximum share length to the secret length measures the storage efficiency of a secret sharing scheme. %Define information ratio? in technical part.

\remove{
These can wait till goes into technical part.
When the
share lengths are below the secret length,
%Of course, in this case, 
the threshold guarantee  that requires all subsets of participants be either authorized, or unauthorized can no longer
be attained. 
Instead, the notion can be relaxed to \emph{ramp}
secret sharing which  allows some subset of participants to learn some information about the secret. 
A $(t,r,P)$-ramp scheme is defined with respect to two thresholds, $t$ and $r$. The knowledge 
of any $t$ shares or fewer does not reveal any information about the secret.
On the other hand, any $r$ shares can be used to reconstruct the secret. The subsets of size $\geq t+1$ or $\leq r-1$ shares, may reveal some information  about the secret. The share length of a $(t,r,P)$-ramp scheme can be as small as $\ell/(r-t)$. We state our results in the language of $(t,r,P)$-ramp schemes and the results specialised to threshold schemes can be recovered by letting $r-t=1$. 
}

%Most constructions of secret sharing are algebraic. The homomorphic property of the algebraic secret sharing is crucial for their application in secure multiparty computation. Also nice algebraic structures allow for efficient sharing and reconstruction. But for protection against {\em active} adversary, such structures could be exploited by the adversary. 
%may become weakness of the design that the adversary can exploit. For example, by exploiting the additive homomorphic property of such schemes, an arbitrary offset can be added to the secret through corrupting a few shares (without violating the privacy) and \cite{CDFP08} showed an elegant counter measure called Algebraic Manipulation Detection (AMD) code for such simple tampering attacks. But once the active adversaries become more sophisticated, for example, adversaries whose tampering is defined by a class of functions that contains the constant functions, it was not known what guarantee is still available and how to achieve it.

%\textcolor{blue}{Non-malleable (codes) secret sharing:} Protection against sophisticated tampering is studied in the literature of 
Non-malleable codes \cite{DzPiWi} proposed %by Dziembowski, Pietrzak and Wichs \cite{DzPiWi} 
with applications in tamper-resilient cryptography in mind are codes with a randomized encoder and a deterministic decoder that provide non-malleability guarantee with respect to a  family $\mathcal{F}$ of tampering functions: Decoding the tampered codeword 
%for any  $f\in\mathcal{F}$, a  codeword $\mathsf{c}$ that encodes a message $\mathsf{m}$, the decoding of $f( \mathsf{c})$ 
yields the original message 
or a value that follows a fixed distribution, where the probability of the first case and the probability distribution in the second case are dictated by the particular tampering function $f\in\mathcal{F}$ alone %that is unrelated to $\mathsf{m}$, 
(all probabilities are taken over the randomness of the encoder). %) of which of the two happens is only determined independent of $\mathsf{m}$. %We only discuss the information-theoretic objects in this work. 
Intuitively, non-malleable coding prevents the adversary from tampering with the protected message in a message-specific way, which is the essence of {\em non-malleable cryptology} \cite{DDN00}. 
Perhaps the most widely studied tampering model for non-malleability is the {\em compartmentalized} model called the {\em $P$-split state} model, where for a constant integer $P$,  %the codeword $\mathsf{c}$ consists of a sequence of $C$ blocks, $\mathsf{c}=(\mathsf{c}_1||\ldots||\mathsf{c}_C)$, and each block is tampered independently.  
a tampering function is described by $f=(f_1,\ldots,f_P)$, for arbitrary functions $f_i\colon\{0,1\}^{N/P}\rightarrow\{0,1\}^{N/P}$.
Goyal and Kumar initiated a systematic study of non-malleable secret sharing \cite{STOC18,CRYPTO18} with inspirations from the non-malleable codes. Their study started with the observation that a $2$-split state non-malleable code is a non-malleable $2$-out-of-$2$ (statistical) secret sharing %in the sense that independently tampering with the two states/shares is non-malleable and each state/share is statistically indistinguishable 
(the privacy follows directly from non-malleability in the 2-split state model, see \cite{AggarwalTCC15} for a proof). %The most difficult problem in $C$-split state model is the extreme case when $C=2$, which has attracted a lot of attention 
So the $2$-out-of-$2$ case has many constructions
(just to name a few and restrict to information-theoretic security)\cite{onebit,additivecombinatorics,AggarwalTCC15,NMreduction,CGL16,Li17,XinLi18,CL19}. 
%Unfortunately, this connection does not generalise to $C$-split state for $C>2$. A $C$-split state non-malleable code for $C>2$ is not a threshold secret sharing with $C$ players, due to the absence of privacy guarantee. A threshold secret sharing in general can have a reconstruction threshold $r<C$, while the decoder of a $C$-split state non-malleable code always need all the $C$ shares. This second issue is easier to resolve. Non-malleability of secret sharing with respect to a tampering family $\mathcal{F}$ now means reconstructing from any reconstruction set $R$ of tampered shares yields the original secret or a value that follows a background distribution, where the probability of the first case and the probability distribution in the second case are dictated by the particular tampering function $f\in\mathcal{F}$ and the particular reconstruction set $R$ (all probabilities are taken over the randomness of the sharing algorithm). %And this non-malleability guarantee should hold for any $f\in\mathcal{F}$ and any authoried set $R$ of the secret sharing.
Goyal and Kumar~\cite{STOC18} proposed two tampering models
%independent tampering and joint tampering models 
for $r$-out-of-$P$ secret sharing for $P>2$ and any $r\leq P$. %, and explicitly construct them for $P>2$ players and any reconstruction threshold $r\leq P$. 
The {\em independent tampering} model of non-malleable secret sharing is essentially a secret sharing with $P$ players, which is non-malleable with respect to the $P$-split state tampering family. %(seen as a non-malleable code, except that the decoder may take less than $C$ states).  
The {\em joint tampering} model allows the adversary to group any $r$ shares into two subsets of different size and tamper jointly with the shares within each group but independently across the two groups. 
%This defines a stronger tampering model that also takes the leakage into account.
%Since the reconstruction takes $t+1$ shares and completely recovers the secret, $t$ is the maximum number of shares that are allowed to be jointly tampered. 
%But as it is trivial to observe that such joint tampering can not exceed $t+1$ shares, since any $t+1$ shares
%There is one more requirement on the grouping. For any $t+1$ shares, the adversary must not group them into two subsets of equal size. %This turned out to be only limitation of the construction (not inherit). 
In the follow up work \cite{CRYPTO18}, non-malleability was generalized to %prefect 
secret sharing with general access structures. In the independent tampering model, they constructed a compiler that transforms any plain secret sharing into a non-malleable secret sharing with the same access structure. In the joint tampering model, explicit $P$-out-of-$P$ threshold secret sharing against more powerful adversaries that can group shares into two overlapping subsets, as long as no authorized set is jointly tampered, are constructed.
The idea of making secret sharing non-malleable against tampering has attracted a lot of attention and generated many papers exploring new frontiers in this topic. Srinivasan and Vasudevan \cite{SV18} constructed the first non-malleable secret sharing for $4$-monotone access structures with constant information ratio. %then avoid using 2-split state NMC?
Badrinarayanan and Srinivasan \cite{BS18} considered a multiple-time tampering model (corresponding to {\em continuous non-malleable codes}) for secret sharing where the tampering adversary can non-adaptively specify a sequence of tampering functions in the independent tampering model and non-malleability guarantee should hold for the whole sequence of tampering (assuming the same reconstruction set). Aggarwal et al. \cite{ADNOPRS18} considered a strengthening of the above multiple-time tampering model that takes into account the subtlety of secret reconstruction in secret sharing. In particular, they allow the tampering adversary to control the secret reconstruction from tampered shares by specifying the reconstruction set in each time (they dub this {\em non-adaptive concurrent reconstruction}). Kumar, Meka, and Sahai \cite{KMS18} initiated the study of leakage-resilient non-malleable secret sharing, where the tampering adversary is allowed to base the choice of tampering on the information about the encoding obtained from leaking every share independently. This defines a stronger type of tampering (than without leakage) because the randomness of the encoder decreases conditioned on the leaked value, which has an effect on non-malleability guarantee (relying on the randomness of the encoder by definition). Faonio and Venturi \cite{FV19} considered a strengthen model that has multiple-time tampering with {\em adaptive} concurrent reconstruction and leakage-resilience, but had to switch to computational security.
See Table \ref{tab: comparison} for a summary of different models.

\begin{table}[h]
    \caption{Comparison of models for the existing LR-SS and NM-SS with $P>2$ players}\label{tab: comparison}
      \begin{center}
       \begin{tabular}{|l|l|l|l|l|l|p{4cm}|}\hline
         Reference      &   Access Structure  &Design Goal&Leakage/Tampering Model\\
        % ~~~~\textbf{}            &   ~~~~~~~~~           &\textbf{ation  Costs}&~~~~~~~~~\\ 
        \hline\hline

            \cite{DP07}   &Round complexity based& LR-SS&Independent Leakage (Ind. L.)\\
            \hline
            
          \cite{BDIR18} &$r$-out-of-$P$  &LR-SS &Ind. L. \\ 
           \hline

                           &$2$-out-of-$P$ & LR-SS&Ind. L.\\%$r$-out-of-$P$ ($r=2$)
          \cite{STOC18}   &$r$-out-of-$P$ & NM-SS&Independent Tampering (Ind. T.)\\
                           &$r$-out-of-$P$ & NM-SS &Joint Tampering (Joint T.)\\ 
           \hline

           \cite{CRYPTO18} & Arbitrary  & NM-SS &Ind. T.\\
                         &$P$-out-of-$P$ & NM-SS &Joint T. \\%$r$-out-of-$P$ ($r=P$)
           \hline

            %[BS18] &$r$-out-of-$P$ $r\geq 4$ & NM-SS&Multiple-Time Ind. T \\
            \cite{BS18} &Arbitrary (4-monotone) &CNM-SS&Continuous Ind. T. (CNM-SS)\\ 
           \hline

           \cite{ADNOPRS18} & Arbitrary& LR-SS & Ind. L. \\ %\& unauthorized set shares \\
                           & Arbitrary (3-monotone)& CNM-SS &Non-adap. concurrent reconstruct\\ 
          \hline

          \cite{SV18}& $r$-out-of-$P$ &LR-SS&Ind. L. $\leftarrow$ $r-2$ shares \\
                     &  Arbitrary (4-monotone)& NM-SS &Ind. T. \\ 
           \hline

          \cite{KMS18}& Arbitrary &CLR-SS&Continuous adap. Joint Leakage\\
                        & Arbitrary & LR-NM-SS  &Ind. T. $\leftarrow$Ind. L.\\ 
           \hline

           \cite{FV19}* &Arbitrary & LR-CNM-SS&Ind. noisy L.\\
                                        &&&Adap. concurrent reconstruct \\ 
           \hline\hline

                           &$r$-out-of-$P$ & LR-SS&Affine L. \ ------\underline{first NComp. L.}\\
      This work      &$r$-out-of-$P$ & LR-NM-SS&Bit-wise Ind. T. $\leftarrow$ Affine L.\\
                           &$r$-out-of-$P$ & LR-NM-SS&NComp.T.$\leftarrow$Affine L. ---\underline{first NComp.T.}\\
                           \hline
     \end{tabular}
     \end{center}
         \footnotesize{Only the features concerning modelling are captured in this table. Shorthands are defined where they first appear in the table. %The symbol ``\&'' denotes ``and''. 
         The symbol ``$\leftarrow$'' denotes ``based on''. %The short hand ``BIT.'' denotes ``Bit-wise Independent Tampering''. 
         \cite{FV19}* has a * because it is the only one using computational assumptions. ``NComp.'' is short for ``Non-Compartmentalized''. The view of the leakage adversary in ``Affine L.'' model contains a choice of $r-1$ shares and a bounded length output of a choice of affine functions applied to the full share vector. The choice of the $r-1$ shares and the affine functions can be non-adaptive or adaptive. }
     \end{table}

A leakage-resilient secret sharing scheme hides the secret from an adversary, who in addition to having access to an unqualified set of shares, also obtains some bounded length leakage from all other shares. Leakage-resiliency for secret sharing was in fact studied much earlier than non-malleable secret sharing. Dziembowski and Pietrzak \cite{DP07} developed an intrusion-resilient secret sharing scheme using alternating extractors. Dav{\`{\i}}, Dziembowski and Venturi \cite{DDV10} constructed the first 2-out-of-2 secret sharing scheme that statistically hides the secret even after an adaptive adversary executes a bounded communication leakage protocol on the two shares. The leakage-resilient non-malleable codes in $2$-split state model of Liu and Lysyanskaya \cite{LL12} (computational security) and \cite{AggarwalTCC15} %leakage-resilient non-malleable codes, which can be seen as 
are also $2$-out-of-$2$ leakage-resilient secret sharing which also feature non-malleability. %[CL19] is following this line and achieve leakage-resiliency? They did NOT look at leakage-resiliency.
Recently, as the dual result of \cite{GW17}, which shows that by leaking one bit from each share, the secret of the a Shamir scheme over finite field with characteristic $2$ can be completely reconstructed, Benhamouda, Degwekar, Ishai and Rabin \cite{BDIR18} showed that the Shamir $r$-out-of-$P$ secret sharing
%1  using tools from Fourier analysis developed for additive combinatorics,
scheme, when the underlying field is of a large prime order and for large values of $r=P-o(\log P)$ is leakage-resilient against a non-adaptive adversary who independently leaks bounded amount of information from each share.
Goyal and Kumar \cite{STOC18,CRYPTO18} constructed a $2$-out-of-$P$ leakage-resilient secret sharing scheme as a building block for their constructions of non-malleable secret sharing. Aggarwal et al. \cite{ADNOPRS18} proposed a construction for general access structure and a new application to leakage-resilient threshold signatures.
Several strengthened leakage-resilient secret sharing models have been proposed. Srinivasan and Vasudevan \cite{SV18} proposed a leakage model for $r$-out-of-$P$ threshold schemes, where the choice of each local leakage function can be based on a choice of $r-2$ shares.  Kumar, Meka, and Sahai \cite{KMS18} proposed a bounded length multiple-round adaptive joint leakage model. The adversary can choose different unauthorized sets of shares to jointly leak from them and output messages multiple times. Adaptive here means that each time the choice of the unauthorized set and the leakage function are based on all previous outputted messages. %There is no restriction on the number of rounds but only on the total length of the output messages. 
\cite{FV19} (computational assumption) considered a noisy leakage model that, instead of bounding the output length of the leakage functions, bounds the min-entropy of the share conditioned on the output. 
See Table \ref{tab: comparison} for a summary of different models.
%Among the above works, only [LL12] (computational assumption), [ADKO15], \cite{KMS18} and \cite{FV19} (computational assumption) constructed leakage-resilient non-malleable secret sharing, where the non-malleability guarantee holds even the tampering is chosen based on the leakage. See Table \ref{tab: comparison} for the modelling information of these works that are relevant to this work.

%\textcolor{blue}{Non-compartmentalized models are known for tampering but never considered for leakage in the literature:}
In the context of non-malleable codes, Agrawal et.al \cite{AGMPP15} initiated the study of non-compartmentalized tampering models. They considered non-malleability against permutation composed with bit-wise independent tampering, and showed that non-malleable codes in such a tampering model transform non-malleable bit-commitments into a non-malleable string-commitment. They also gave a rate $1$ construction for such non-malleable codes \cite{MajiTCC}. There are a few other non-compartmentalized tampering families studied for non-malleable codes: local functions \cite{localNMC}, affine functions \cite{CL17}, small-depth circuits \cite{smalldepthcircuit} and decision tree \cite{decisiontree}. In particular, the affine tampering model not only includes the permutation composed with bit-wise independent tampering of \cite{AGMPP15,MajiTCC} as a special case, but also captures a much stronger adversary than \cite{AGMPP15,MajiTCC} and local tampering of \cite{localNMC} in that each output bit of a $\mathbb{F}_2$-affine function can depend on all input bits. In this sense, affine functions are arguably the best example of the non-compartmentalized model.

There has not been non-compartmentalized tampering model studied in non-malleable secret sharing. This is partly because currently known constructions of non-malleable secret sharing crucially rely on the tools that only work for compartmentalised models (e.g. independent source extractors and secret sharing schemes). %In particular (forget about the privacy requirement of a secret sharing), to go from a non-malleable code to a secret sharing, one should enable some error correcting property, say, using an error correcting code. A non-compartment
Almost all constructions of non-malleable secret sharing take the approach of building a compiler that transforms several plain secret sharing schemes with various extra properties into a non-malleable secret sharing. %Using this approach, we might need a plain secret sharing scheme tolerating non-compartmentalized tampering to begin with, which gets us into the dilemma of eggs and chicken. 
It is not clear how resiliency against a global tampering can be realized using this approach.

Dav{\`{\i}}, Dziembowski and Venturi \cite{DDV10}, apart from constructing the first $2$-out-of-$2$ leakage-resilient secret sharing, proposed a general leakage model called {\em Leakage-Resilient Storage (LRS)}, where there is an upper bound on the total output length and the leakage functions can be chosen from a set $\mathcal{L}$ of functions that is only restricted by its cardinality $|\mathcal{L}|$. The cardinality $|\mathcal{L}|$ can still be exponential in the length of the encoding and functions computable by Boolean circuits of a fixed size was given as an example for this model. 

Again, there has not been non-compartmentalized leakage model studied for leakage-resiliency for secret sharing. 
The leakage for secret sharing is usually modelled as an arbitrary function with bounded output length applied to each share or up to a certain number of shares (but never the full share vector) at one time. Note that arbitrary leakage functions, even with one bit output, applied to the full share vector is impossible to resist since the reconstruction algorithm itself can be used to construct a contradiction. Indeed, a counter example could be the reconstruction algorithm outputting the first bit of the secret. It is not clear how the LRS with $\mathcal{L}$ only restricted by its cardinality $|\mathcal{L}|$ can be realized for secret sharing. 
%Other than a reasonable bound on the total output length, leakage functions in the non-compartmentalized model should be restricted in a way to, at least, exclude the reconstruction algorithm. 
%comeback

\remove{
\textcolor{blue}{These go to a box treating ``constructions'' for non-malleability:}
In fact,
non-malleability is a coding guarantee that includes other coding guarantees as special cases and is achievable for the most general tampering families \cite{DzPiWi}. %,affine,local,lr2-split
%As examples to show how difficult the $2$-split state model is, let us just mention a few of these constructions. The first  construction \cite{onebit} can only protect one single bit message. After much efforts and with the help of a recent result of additive combinatorics, the construction is extended to encode multiple bits of message \cite{additivecombinatorics}. And even with this far-from-optimal construction of $2$-split state non-malleable codes, it is recently shown that we can obtained $4$-split state \cite{KOS17} and $3$-split state \cite{KOS18} non-malleable codes, which both achieve rate $1/3$ (as against the optimal $3/4$ and $2/3$ \cite{ChGu0}, respectively).
Despite the tremendous efforts, an explicit constant rate (the message length is a constant fraction of the codeword length) construction of $2$-split state non-malleable codes remains elusive even today. An important theoretical discovery %breakthrough
 in constructions of non-malleable codes is the connection between non-malleable codes and invertible {\em seedless non-malleable extractors} by Cheraghchi and Guruswami \cite{ChGu1}. A seedless non-malleable extractor is defined with respect to a family of tampering functions, %same as a non-malleable code. 
which are applied to the input of the extractor. Non-malleability here means that the output corresponding to the original input is independent of the output of a tampered input. Intuitively, if one uses the extractor as the decoder then non-malleability of the obtained code follows naturally from the independence of the two outcomes. This connection plays an important role in the construction of $C$-split state non-malleable codes \cite{10split,CGL16,Li17,XinLi18}. Recently, seedless non-malleable extractors with respect to affine tampering functions are constructed, yielding non-malleable codes with respect to non-compartmentalized (do not belong to $C$-split state) tampering families \cite{CL17}. (We will discuss various types of extractors and the non-compartmentalized tampering in more details later in {\em Our constructions}.)
%(see {\em Our approaches} for more information)
%All these  function families are naturally suited to the motivating scenario of protecting a stored secret against tampering of the device, and protection approaches that are based on splitting the secret and storing  each part on a different hardware (with the assumption that they are not all accessible to the adversary). 
%\textcolor{blue}{The second construction is for BIT. Do we say something about it here?}
}

\smallskip
\noindent
{\bf Our contributions.} 
We take inspiration from the definition of non-malleable (codes) secret sharing and propose a general notion of leakage-resilient secret sharing with respect to a structured family $\mathcal{L}$ of leakage functions and a total output size bound $\beta$, which is a non-negative integer. We call a leakage adversary in this model a {\em $\beta$-bounded $\mathcal{L}$-leakage adversary}. %Unlike tampering functions, which have their output length identical with the input length, leakage functions typically have the output length much smaller. 
%This is in fact very similar to the {\em Leakage-Resilient Storage (LRS)} of \cite{DDV10}, where two explicit models of $\mathcal{L}$ have been studied. The first model, already mentioned above, is corresponding to the $2$-split state model, where $\mathcal{L}$ is consist of leakage functions whose input is either half of the encoding, but can not be both. The authors considered a second model where $\mathcal{L}$ is only restricted in its cardinality (of course the output length of these functions should be small such that the total output length bound $\beta$ is satisfied). The cardinality $|\mathcal{L}|$ can still be exponential in the length of the encoding and functions computable by Boolean circuits of a fixed size was given as an example for this model. We emphasize that the second leakage model is a non-compartmentalized leakage model. 
%In this work, we study structured (instead of a cardinality bound) families $\mathcal{L}$ of non-compartmentalized leakage functions. 
We fill the gap left open in current state of leakage-resilient secret sharing by considering a structured non-compartmentalized leakage family $\mathcal{L}$.
In particular, we focus on the family $\mathcal{L}_\mathsf{affine}$ of $\mathbb{F}_2$-affine leakage functions and design leakage-resilient secret sharing schemes against a $\beta$-bounded $\mathcal{L}_\mathsf{affine}$-leakage adversary. We emphasize that each output bit of the leakage function can depend on all input bits, namely, the full share vector.
When the context is clear, we simply call it {\em affine leakage-resilient secret sharing}. 

\smallskip
\noindent
\textbf{Definition} (Informal). \textit{An $r$-out-of-$P$ statistical $\beta$-bounded affine leakage-resilient secret sharing is a $r$-out-of-$P$ statistical secret sharing scheme that is also statistically 
leakage-resilient against a $\beta$-bounded $\mathcal{L}_\mathsf{affine}$-leakage adversary. More concretely,
\begin{enumerate}
\item Correctness: given any $r$ shares, the secret is correctly reconstructed with overwhelming probability, over the randomness of the sharing algorithm.
\item Privacy and Leakage-Resiliency:       
       \begin{itemize}
       \item Non-adaptive adversary: any non-adaptive choice of $r-1$ shares and the output of any non-adaptive choice of affine leakage functions of the full share vector with total output length bounded by $\beta$ are statistically indistinguishable for any pair of distinct secrets.
       \item Adaptive adversary: any adaptive choice of $r-1$ shares and the output of any adaptive choice of affine leakage functions of the full share vector with total output length bounded by $\beta$ are statistically indistinguishable for any pair of distinct secrets. (The choice of the $r-1$ shares and the choice of the affine leakage functions can adaptively depend on each other.)
       \end{itemize}
\end{enumerate}
}

Using the construction of optimal non-adaptive binary secret sharing in \cite{preprint}, we immediately have a non-adaptive $r$-out-of-$P$ statistical secret sharing with asymptotic information ratio $1$. We are able to prove that by shortening the secret by $\beta$ bits, the $r$-out-of-$P$ statistical secret sharing can be made leakage-resilient against a non-adaptive $\beta$-bounded $\mathcal{L}_\mathsf{affine}$-leakage adversary. We then have the following.

\smallskip
\noindent
\textbf{Theorem} (Informal). \textit{There is a non-adaptive $r$-out-of-$P$ statistical $\beta$-bounded affine leakage-resilient secret sharing for any constant $r$ and $P$ with secret length $\ell$ and information ratio $\frac{\ell+\beta+o(\ell)}{\ell}$.}

We note that this information ratio is almost the best one can hope for. Intuitively, any $r$ shares contain the full information about the $\ell$ bits secret, while $r-1$ shares among them do not contain any information. This means that the amount of secret information possible is the upper-bounded by the length of one share. Now there are $\beta$ bits information about these $r$ shares leaked to an unconditional adversary. The upper bound on the amount of secret information must reduce by $\beta$ bits. In other words, an information ratio of $\frac{\ell+\beta}{\ell}$ would be the optimal.

\bigskip

One could use the construction of adaptive binary secret sharing in \cite{preprint} to construct affine leakage-resilient secret sharing. We propose a new construction that have a better information ratio. As a result of independent interest, our construction of adaptive leakage-resilient secret sharing here also gives an adaptive binary secret sharing with improved {\em coding rate} (see {\bf Related works} for more details).

\smallskip
\noindent
\textbf{Theorem} (Informal). \textit{There is an adaptive $r$-out-of-$P$ statistical $\beta$-bounded affine leakage-resilient secret sharing for any constant $r$ and $P$ with secret length $\ell$ and constant information ratio.} %$\frac{\Theta(\ell)+\beta}{\ell}$

\bigskip

We extend our affine leakage-resilient secret sharing model to a leakage-resilient non-malleable secret sharing model. We again consider a general tampering family $\mathcal{F}$ that can possibly be non-compartmentalized. We allow the tampering adversary to base the choice of the tampering function $f\in\mathcal{F}$ on any unauthorised set of shares and 
the output of the $\mathcal{L}$-leakage from the full share vector. We call it {\em affine leakage-resilient non-malleable secret sharing}, when the tampering family $\mathcal{F}$ needs not be specified.

\smallskip
\noindent
\textbf{Definition} (Informal). \textit{An adaptive $r$-out-of-$P$ statistical $\beta$-bounded affine leakage-resilient secret sharing is said to be non-malleable with respect to a tampering family $\mathcal{F}$ if the following non-malleability property is satisfied.
}

\textit{Non-malleability: for any up to $r-1$ shares, any $\beta$-bounded $\mathcal{L}_\mathsf{affine}$-leakage adversary, any $\mathcal{F}$-tampering strategy $\sigma$ and any reconstruction set $R$ of size $r$, reconstructing from the set $R$ of the tampered shares yields the original secret or a value that follows a background distribution, where the probability of the first case and the probability distribution in the second case are dictated by the particular leakage adversary, the particular tampering strategy $\sigma$ and the particular reconstruction set $R$ (all probabilities are taken over the randomness of the sharing algorithm).  
}

The first family $\mathcal{F}$ of tampering functions we consider is the family $\mathcal{F}_\mathsf{affine}$ of $\mathbb{F}_2$-affine tampering functions. 
%Building on the above construction of adaptive affine leakage-resilient secret sharing, we strengthen one of the building block to 
By strengthening one of the building blocks of the adaptive binary secret sharing construction in \cite{preprint} to its ``non-malleable counterpart'' (from an {\em affine extractor} to an {\em affine non-malleable extractor}, see {\bf Overview of constructions} below for more information), we are able to prove that the non-malleability property, in addition to correctness, privacy and leakage-resiliency of affine leakage-resilient secret sharing, is satisfied. This gives us a leakage-resilient non-malleable secret sharing fully in non-compartmentalized model. That is the leakage model is $\mathcal{L}_\mathsf{affine}$ and the tampering model is $\mathcal{F}_\mathsf{affine}$, both are non-compartmentalized. 

\smallskip
\noindent
\textbf{Theorem} (Informal). \textit{There is an adaptive $r$-out-of-$P$ statistical $\beta$-bounded affine leakage-resilient secret sharing for any constant $r$ and big enough $P$ that is non-malleable with respect to $\mathcal{F}_\mathsf{affine}$.}

\smallskip

The above construction in fact proves a reduction from an affine leakage-resilient non-malleable secret sharing with respect to $\mathcal{F}_\mathsf{affine}$ to a special type of randomness extractor (affine non-malleable extractor), an object in pseudo-randomness. We would be able to get explicit secret sharing schemes for any constant $r$ and $P$, once affine non-malleable extractors with better parameters (one that can extract from any constant fraction of entropy) are explicitly constructed.

\bigskip

The second family $\mathcal{F}$ of tampering functions we consider is the family $\mathcal{F}_\mathsf{BIT}$ of {\em Bit-wise Independent Tampering (BIT)} functions. Let $q$ be the size of each share. A function $f\in\mathcal{F}_\mathsf{BIT}$ for a secret sharing with $P$ players is described by $f=(f_1,\ldots,f_{P\log q})$, where $f_i$ is a binary  tampering function belonging to $ \{ \mathsf{Set0}, \mathsf{Set1}, \mathsf{Keep}, \mathsf{Flip}\}$, where $\mathsf{Set0}$ and $\mathsf{Set1}$  set the value of the bit to $0$ and $1$, respectively, and  $ \mathsf{Keep}$ and $ \mathsf{Flip}$ will keep and flip the bit, respectively. For this tampering family, we are able to modify our construction of adaptive affine leakage-resilient secret sharing to also satisfy non-malleability, for any constants $r$ and $P$.

\smallskip
\noindent
\textbf{Theorem} (Informal). \textit{There is an adaptive $r$-out-of-$P$ statistical $\beta$-bounded affine leakage-resilient secret sharing for any constant $r$ and $P$ that is non-malleable with respect to $\mathcal{F}_\mathsf{affine}$.}

\bigskip

Note that since the tampering function $f\in\mathcal{F}$ is chosen based on any up to $r-1$ shares and the output of the leakage function, the non-malleable secret sharing model is in fact not weak even when $\mathcal{F}=\mathcal{F}_\mathsf{BIT}$. In particular, the tampering at the $r-1$ shares chosen by the privacy adversary is similar to joint tampering model, though the tampering at other shares is weaker than the leakage-resilient version of independent tampering model. The modification we have here is similar to the one in the construction for $\mathcal{F}_\mathsf{affine}$. We replace the {\em linear seeded extractor} with a {\em linear seeded non-malleable extractor} (see {\bf Overview of constructions} below for more information). Seeded non-malleable extractors are under scrutiny in the past few years and many good constructions are known (just to name a few \cite{DW07,XinLiIP,DLTWZ14,overcome,Li17,XinLi18}). But as far as we know, only the inner product construction of \cite{XinLiIP} gives a linear seeded non-malleable extractor. We then prove an existence result concerning the linear seeded non-malleable extractors with our required properties (one that can extract from less than one half of the entropy) and leave its explicit construction as an interesting open problem.

\remove{
\begin{itemize}
\item Affine leakage-resilient SS 
\begin{itemize}
\item non-adaptive leaking: linear seeded ext + stochastic affine ecc, claim lower bound. upper bound follows from bit-wise leakage (need some work? can leave it till journal version). 
\item adaptive leaking: affine ext (more milage trick) + stochastic affine ecc.
\end{itemize}
\item Affine leakage-resilient non-malleable SS
\begin{itemize}
%\item non-adaptive leaking:
\item Bit-wise Ind. T. $\leftarrow$ NC L.
\item NC T.$\leftarrow$ NC L.
\end{itemize}
\end{itemize}
}

\remove{use later
We consider the following natural way of combining secret sharing (a primitive for passive adversary) with non-malleable codes (a primitive for active adversary).
%\footnote{This is posted as an open question for threshold secret sharing in \cite{STOC18}. We consider this tampering for a weaker primitive, ramp secret sharing.}
We view the $P$ shares $\mathsf{c}_1,\ldots,\mathsf{c}_P$, where $\mathsf{c}_i\in\mathbb{F}_q$, of the secret sharing scheme as an $N$-bit string $(\mathsf{c}_1||\ldots||\mathsf{c}_P)$, where $\mathsf{c}_i\in\{0,1\}^{\log q}$ and $N=P\log q$.
An adversary chooses any unauthorized set of shares to read. After that this adversary chooses, adaptively according to the value of the unauthorized set of shares, a tampering function $f:\{0,1\}^N\rightarrow\{0,1\}^N$ and apply it to the whole share vector. Obviously, the tampering of the share vector needs to be restricted to some tampering family $\mathcal{F}$ to prevent trivial attacks such as exploiting the functionalities of the secret sharing to reconstruct the original secret then re-share a related secret. 
%(can not be the set of all functions) in order to have non-malleability.
We term this non-malleable secret sharing in {\em adaptive tampering model} as opposed to the independent tampering and joint tampering models in \cite{STOC18,CRYPTO18}.

To fix ideas, we restrict our study to $(t,r,P)$-ramp schemes. We allow $t=r-1$, which is the threshold secret sharing. For a family $\mathcal{F}$ of tampering functions from $\{0,1\}^N$ to $\{0,1\}^N$. %, where $P|N$. 
Let $\mathcal{F}^{\mathcal{O}_{t/P}(\cdot)}=\{(\mathcal{A},\sigma)|\mbox{ all } \mathcal{A},\mbox{ all } \sigma\}$ be the family of tampering functions, where each function is described by specifying a {\em leakage adversary} $\mathcal{A}$ and a {\em tampering strategy} $\sigma$. 
A $t/P$-leakage adversary $\mathcal{A}$ is a process of adaptively reading $t$ blocks out of $P$ blocks. %The adversary chooses a block out of the $P$ blocks to read according to all previously read values and the process stops when $t$ out of $P$ blocks are read. 
A $\mathcal{F}$-tampering strategy (associate with $\mathcal{A}$) is a metafunction $\sigma:\{0,1\}^{t\log q}\rightarrow\mathcal{F}$ describing that on seeing the value $\alpha\in\{0,1\}^{t\log q}$ during the leaking phase, a corresponding function $f^\alpha\in\mathcal{F}$ is to be applied to the share vector. %The set $\mathcal{F}^{\mathcal{O}_{t/P}(\cdot)}=\{(\mathcal{A},\sigma)|\mbox{ all } \mathcal{A},\mbox{ all } \sigma\}$ captures the essence of our model: the leakage adversary of secret sharing collude with the tampering adversary $\mathcal{F}$. 
We define non-malleability of a $(t,r,P)$-ramp scheme with respect to a tampering family $\mathcal{F}$ as follows. 
{\em Reconstructing from any reconstruction set $R$ of the tampered shares yields the original secret or a value that follows a background distribution, where the probability of the first case and the probability distribution in the second case are dictated by the particular leakage adversary $\mathcal{A}$, the particular tampering strategy $\sigma$ and the particular reconstruction set $R$ (all probabilities are taken over the randomness of the sharing algorithm).} This non-malleability guarantee should hold for any $t/P$-leakage adversary $\mathcal{A}$, any $\mathcal{F}$-tampering strategy $\sigma$ and any reconstruction set $R$ of size $r$ (see Definition \ref{def: nmSSS}). 
}

%For example, through tampering with only one share of a linear threshold secret sharing, the adversary can add any value to the secret. More generally, partial tampering and its detection was considered within the linear secret sharing paradigm in \cite{CDFP08}. It is shown that even if the adversary corrupts $t$ players (gain read and write access to their shares) in a linear secret sharing, the changes to the secret that the adversary is capable of introducing is an independent shift due to the privacy of secret sharing. This observation leads to an elegant solution of pre-coding the secret using an Algebraic Manipulation Detection (AMD) code (proposed and constructed in \cite{CDFP08}) that detects the changes through appending a random tag. But once the tampering becomes more sophisticated, includes overwriting all shares for example, it is impossible to always detect. One needs to consider a weaker guarantee and realize it using more fundamental approaches.  

 \smallskip
\noindent
{\bf Overview of constructions.} %approaches and contributions
A $(t,r,P)$-ramp scheme is defined with respect to two thresholds, $t$ and $r$. The knowledge of any $t$ shares or fewer does not reveal any information about the secret.
On the other hand, any $r$ shares can be used to reconstruct the secret. The subsets of size $\geq t+1$ or $\leq r-1$ shares, may reveal some information about the secret. 
Note that the guarantee  that requires all subsets of participants be either authorized, or unauthorized is no longer be attained when $r-t>1$. 
We state our results in the language of $(t,r,P)$-ramp schemes and the results specialised to threshold secret sharing mentioned above can be recovered by letting $r-t=1$. 

 \bigskip
 \noindent
{\em Affine leakage-resilient secret sharing.} 
An extractor is a function that turns non-uniform distributions (called {\em source}) over the domain  into an almost uniform distribution over the range (smaller in size than the domain). An affine source is a flat distribution on an affine subspace and an extractor for affine sources is called an affine extractor. An extractor is invertible if there is an efficient algorithm that, given an extractor output, samples a pre-image for that output uniformly at random. Very recently, Lin et.al. \cite{preprint} proposed a construction of secret sharing through combining an invertible affine extractor and a linear erasure correcting code.  
In their construction, the secret is the output of the affine extractor. The sharing algorithm first uses the inverter of the extractor to sample a random pre-image for the secret, then encodes the pre-image using the erasure correcting code. %uniformly at random. 
The key observation is if we start with a uniformly distributed secret, the inverter will output a distribution that is uniform over the domain of the extractor. The privacy analysis is focused on this uniform pre-image. Now this pre-image is further encoded using the erasure correcting code to yield the share vector. But since the erasure correcting code is linear, knowing several components of its codeword and knowing several bits output of an affine function of its codeword amount to putting several linear equations on the uniform pre-image, which is now
%The bit-wise leakage function is a linear function. Since the encoder of the erasure correcting code is also a linear function, the bit-wise leakage function induces (through composition of functions) an affine leakage function that is directly applied to the pre-image of the uniform secret. 
%Now, conditioned on any particular value of the affine leakage function, the pre-image is 
flatly distributed on an affine sub-space of the domain of the extractor, hence an affine source. 
If this affine source has enough entropy, then the distribution of the uniform secret conditioned on the adversary's view remains uniform. This means that the adversary's view and the secret are independent and hence privacy is provided. Using this construction, ramp secret sharing families with {\em statistical privacy} and {\em probabilistic reconstruction} over {\em binary} shares can be constructed, given any 
%These ramp schemes have efficient sharing and reconstruction despite the absence of algebraic structure. 
%When the share size is fixed to an absolute constant, the number $N$ of players is necessarily linear in the secret length. 
relative privacy threshold $\tau$ and relative reconstruction threshold $\rho$, for arbitrary constants $0\leq\tau<\rho\leq1$. %They show, somehow surprizingly, that this invertible extractor based construction gives families of ramp secret sharing with any relative thresholds $0\leq\tau<\rho\leq1$ even when each share is one bit. 
%(see {\em Related works} for more details).
Now given a privacy threshold $t$ and a reconstruction threshold $r$ for a ramp scheme with $P$ players, we set $\tau=t/P$ and $\rho=r/P$, and obtain a family of binary ramp schemes with $N$-bit share vector, where $N$ is a multiple of $P$. %, from the construction of \cite{preprint}. 
We then divide the $N$-bit share vector into $P$ blocks and call each block a share of a $(t,r,P)$-ramp scheme. %\textcolor{blue}{The scheme family has statistical privacy and probabilistic reconstruction with both errors vanishing in share vector bit length $N$ (or in share length $N/P$).} 

Our construction of non-adaptive leakage-resilient secret sharing uses the same high level ideas as described above but with a linear {\em seeded} extractor instead of a seedless one. A seeded extractor is a function that takes a second input (called the seed) which is uniform and independent of the source input. The interest in the use of seeded, as opposed to seedless affine, extractors is twofold. First, nearly optimal and very efficient constructions of seeded extractors are known in the literature that extract nearly the entire source entropy
with only a short seed. This allows us to attain nearly optimal rates for the non-adaptive case. Furthermore, and crucially, such nearly optimal extractor constructions (in particular, Trevisan's extractor \cite{Trevisan,improvement}) can in fact be linear functions for every fixed choice of the seed (in contrast, seedless affine extractors can never be linear functions). We take advantage of the linearity of the extractor in a crucial way and use a rather delicate analysis to show that in fact the linearity of the extractor can be utilized to prove that the resulting secret sharing scheme provides the stringent worst-case secret guarantee. %which is a key requirement distinguishing secret sharing schemes (a cryptographic primitive) from wiretap codes (an information-theoretic notion).
The construction and its proof follows similarly as the optimal construction of binary non-adaptive secret sharing in \cite{preprint}. 

Our construction of adaptive leakage-resilient secret sharing uses a classical paradigm of combining a seedless extractor with a seeded extractor \cite{moremileage}. The seedless extractor extracts a short seed for the seeded extractor and the combination is in effect a seedless extractor with the good properties of the seeded extractor. 
Unfortunately, the asymptotic optimal property that is enjoyed by the seeded extractor based construction above is not preserved due to an error bounding process that also involves the affine extractor. It is an interesting open problem that whether this slack can be tighten. Nevertheless, using this classical paradigm gives us significantly better parameters, which also has independent interest in secret sharing over small constant share size \cite{ref:CCX13,BGK16,preprint}. We in fact obtain, as a side result, an explicit secret sharing against an adaptive adversary with significantly  better parameters than \cite{preprint} (see {\bf Related works}). The improvement comes from making good use of the linearity of the seeded extractor and a more efficient way of inverting the extractor that exploits this classical structure.

 \bigskip
 \noindent
{\em Affine leakage-resilient non-malleable secret sharing.}
%The authors constructs the affine non-malleable extractors ($\mathsf{anmExt}$) which, different from other seedless non-malleable extractors for split state models, does not require the source to be consist of independent compartments.
In a nut shell, our constructions start with the extractor based construction of secret sharing scheme and strengthen the extractor towards obtaining non-malleability. 
%One obvious challenge of constructing non-malleable secret sharing instead of non-malleable codes lies in how to enable reconstruction with incomplete share vector. A convenient way is to use a linear erasure correcting code in the outer layer that encodes an inner non-malleable code codeword. 
%We start with the recent binary secret sharing construction of \cite{preprint} that combines an invertible affine extractor with a linear erasure correcting code, and then strengthen it towards obtaining non-malleability.
%Very recently, ramp secret sharing over small constant size shares are constructed from combining an {\em invertible} randomness extractor and an erasure correcting code~\cite{preprint}. 
This idea is inspired by the following extractor based construction of non-malleable codes.
An important theoretical discovery %breakthrough
 in constructions of non-malleable codes is the connection between non-malleable codes and invertible {\em seedless non-malleable extractors} by Cheraghchi and Guruswami \cite{ChGu1}. A seedless non-malleable extractor is defined with respect to a family of tampering functions, %same as a non-malleable code. 
which are applied to the input of the extractor. Non-malleability here means that the output corresponding to the original input is independent of the output of a tampered input. Intuitively, if one uses the extractor as the decoder then non-malleability of the obtained code follows naturally from the independence of the two outcomes. This connection plays an important role in the construction of $C$-split state non-malleable codes \cite{10split,CGL16,Li17,XinLi18,CL19}. This result was recently extended to affine tampering functions through explicitly constructing seedless non-malleable extractors with respect to affine tampering functions \cite{CL17}.

Our construction of adaptive affine leakage-resilient non-malleable secret sharing with respect to $\mathcal{F}_\mathsf{affine}$ strengthens the affine extractor to an affine non-malleable extractor. %, which is an affine extractor that has non-malleability guarantee for all affine tampering functions. Non-malleability with respect to a family of tampering functions for seedless extractors means that extracting from an original source and a source tampered by any function from the family have independent output. 
Intuitively, we trivially have an affine leakage-resilient secret sharing scheme, since an affine non-malleable extractor is in particular an affine extractor. We can further show that the scheme is non-malleable%. for any $t/N$-leakage adversary $\mathcal{A}$, any affine tampering strategy $\sigma$ and any reconstruction set $R$. 
The analysis is again focused on the uniform pre-image (of a uniform secret) generated by the inverter of the extractor. As argued before, conditioned on a view $\mathsf{v}$ of the $t$ shares and $\beta$-bounded affine leakage adversary $\mathcal{A}$, the uniform pre-image becomes an affine source. Under the same conditioning, the affine tampering strategy $\sigma$ outputs the corresponding affine tampering function $f^\mathsf{v}$ that is applied to the share vector. Due to the linearity of the erasure correcting code, this $f^\mathsf{v}$ induces an affine tampering function $g$ that is in effect applied to the pre-image we are investigating. If the extractor can non-malleably (with respect to affine functions) extract from the affine source, the tampered outcome is independent of the original secret.  
We then obtain a clean reduction from affine leakage-resilient non-malleable secret sharing to affine non-malleable extractors (see Theorem \ref{th: ANME}).
\remove{
We can simply plug in the off-the-shelf affine non-malleable extractors in \cite{CL17} together with appropriate explicit erasure correcting codes to obtain explicit non-malleable secret sharing schemes with respect to $\mathcal{F}_{\mathsf{affine}}$. A major drawback of the construction of affine non-malleable extractors in \cite{CL17} is that it crucially relies on high entropy of the affine source. This means, in particular, the amount of leakage that can be tolerated by the obtained non-malleable secret sharing scheme is very limited.
}

Our construction of adaptive affine leakage-resilient non-malleable secret sharing with respect to $\mathcal{F}_{\mathsf{BIT}}$ is built on the particular adaptive affine leakage-resilient secret sharing construction above. We strengthen the linear seeded extractor to a linear {\em seeded} non-malleable extractor. Seeded non-malleable extractors were proposed (in fact before the notion of seedless non-malleable extractors) for application in privacy amplification over public unauthenticated discussion~\cite{DW07}. A seeded non-malleable extractor is very different from its seedless counterpart and the only thing that these two objects have in common is to achieve independence of the original extractor output from the tampered extractor output. The first difference lies in what is tampered. The source of the seeded extractor is not tampered, it is its seed that is tampered. The second difference lies in what tampering is allowed. The seed tampering of the seeded extractor is not restricted by a family of functions, but is allowed to be any tampering function as long as it does not have any fixed points. 
%This also explains why fixed points of seed tampering function is not allowed, since the same seed and same source leads to identical outputs, which can not be made independent. 
We overcome the first difference through suitably conditioning on some event such that the tampered source is equal to the original source adding a constant offset, thanks to restriction to $\mathcal{F}_{\mathsf{BIT}}\subset\mathcal{F}_{\mathsf{affine}}$. Since the seeded non-malleable extractor is linear, we can separate the constant offset from the tampered source completely and reduce to the same source situation. We overcome the second difference through detecting the tampering, whenever the tampered seed coincides the original seed,  using an Algebraic Manipulation Detection AMD \cite{CDFP08} pre-coding of the secret. We can not guarantee that the tampered share vector always leads to a seed different from the original seed. But when the two seeds do coincide, as mentioned a few lines ago, the linearity of the non-malleable extractor allows for separating out an additive offset. This results in reconstructing an (obliviously) additively tampered secret, which is easily detected using, for example, the AMD code \cite{CDFP08}.

\remove{
Given a {\em linear} seeded non-malleable extractor that can extract from a constant fraction of entropy, we do obtain non-malleable secret sharing %with respect to $\mathcal{F}_{\mathsf{BIT}}$ 
in adaptive tampering model with constant relative thresholds $0<\tau<\rho<1$ (see Theorem \ref{th: high}). 
Seeded non-malleable extractors is well studied topic and there are many good constructions. But when restricted to linear case, as far as we know, there is only the inner product based construction \cite{XinLiIP}, which requires entropy rate around half.This entropy rate around half barrier existed in the literature of (non-linear) non-malleable extractor constructions \cite{DLTWZ14}, but was quickly overcome \cite{overcome}. 
We prove the existence of the required linear seeded non-malleable extractors using a probabilistic argument (see Theorem \ref{th: existence}) and leave the explicit construction as an interesting open question.
 }
 
%This is perhaps the first time seeded non-malleable extractors are used for replacing seedless non-malleable extractors. This is made possible by using 
%(see the full version)

\smallskip
\noindent
{\bf Related works.} %Move to end of our contributions.
\remove{
Soon after the publication of \cite{STOC18,CRYPTO18}, there are new results on preprint continuing the study of independent tampering model and independent tampering with add-on properties. The works in \cite{BS18,SV18} achieves constant rate in the independent tampering model making use of the recent constant rate $3$-split-state non-malleable codes \cite{KOS18}. There are works on {\em independent tampering with leakage} model \cite{KMS18,ADNOPRS18}. The leakage resilience in these works refer to powerful leakage that is beyond the standard secret sharing model (and fall into the {\em leakage resilient secret sharing} paradigm started from \cite{DP07}), such as, leaking from all shares independently (even allow groups of jointly leaked shares) but the number of bits leaked is bounded. In \cite{BS18,ADNOPRS18}, there is also a  {\em concurrent/multiple independent tampering} model, where the tampering is {\em continuous} instead of {\em one-shot}. %Our {\em adaptive tampering} model consider the natural $t$ out of $P$ secret sharing leakage. 
All the works above consider the $C$-split state tampering family only. We define our adaptive tampering with respect to general ramp secret sharing a general tampering family $\mathcal{F}$ (in theory, could be any family studied in non-malleable code literature). %and provide constructions for $\mathcal{F}_{\mathsf{affine}}$. 
Interestingly, there seems to be little intersection between our model and all other works. 
%In the special case of $r=t+1$, the access structure is a threshold access structure. Suppose we let $\mathcal{F}$ be 
A thorough study of the connections between each pair of models is obviously beyond the scope of this work.
}
%Space holder for the four preprint new papers.

%\textcolor{blue}{Our anmExt construction obtains threshold nmSSS w.r.t. affine functions in independent tampering model? No, the t is determined by the extractor, be it affine or affine nm. unless we have anmExt that is good aExt but not as good for nm?}
\remove{
For example, according to the above definition of non-malleable secret sharing, it is impossible to have a threshold scheme with $P$ shares that is also non-malleable with respect to $P$-split state tampering family. This is because the above adversary can choose an unauthorized set of shares that is one share less than an authorized set to read and then design the tampering function 

for another share in such a way that 

This makes our results in this work incomparable with \cite{STOC18,CRYPTO18} and the follow-up works \cite{,,,}. Studying how to define non-malleable threshold scheme with respect to split state tampering in our model is beyond the scope of this paper. 
For what kind of tampering functions can non-malleable secret sharing possible.

 A natural question is whether it is possible to construct non-malleable secret sharing schemes without using a $2$-split state non-malleable code, which is itself a difficult problem in its own right. Continuing with the high level idea of \cite{STOC18,CRYPTO18} seems not promising. Assume we use $3$ secret sharing schemes and want to replace the $2$-split state non-malleable code with a $3$-split state non-malleable code (for example, those very recently constructed achieving rate $1/3$ \cite{KOS18}). Then the privacy breaks down, as $3$-split state non-malleable codes are not secret sharing schemes (unlike the $2$-split state special case). \textcolor{blue}{3-split seems solved by one of the four preprint papers.}

\textcolor{blue}{\cite{STOC18,CRYPTO18} are both previous work and related work, they are previous work in terms of e.g. defining the tampering experiment. they are related work in terms of the explicit construction and NM-SS constructed.}
The pioneering works on non-malleable secret sharing \cite{STOC18,CRYPTO18} only consider perfect secret sharing and two specific tampering models they referred to as the independent tampering and the joint tampering. Their constructions rely on $2$-split state non-malleable codes. We consider the general model of non-malleable secret sharing with respect to a function family $\mathcal{F}$, where the tapering of the share vector is a function from $\mathcal{F}$ chosen adaptively according to the information contained in any unauthorized subset of shares. We are able to give an explicit construction for $\mathcal{F}_{\mathsf{affine}}$ using a modular approach of combining an erasure correcting code and an invertible seedless non-malleable extractor.

%\textcolor{blue}{the relation is also quite clear, they do NM-codes, we do NM-SS}
%The work \cite{CL17} is focused on constructing non-malleable codes in the non-compartmentalized models through constructing the corresponding seedless non-malleable extractors. The authors consider three function families: the AC$^0$ tampering functions, the local functions \cite{BDKM16} and the affine tampering functions. The object we deal with in this work is different from non-malleable codes. For one thing, we need to be able to reconstruct from incomplete share vector (codeword). For another, the choice of the function depends on the leaked information (taking the privacy of secret sharing into account). But we are able to use the affine non-malleable extractor for affine source in their work, that is defined and constructed for the first time.

}

Another line of works related to the current work is the study of ramp secret sharing over a constant share size $q$. %(note that in the current work, the number $P$ of players is fixed while the share size $q=2^{N/P}$ is growing as $N$ grows). 
The main characteristics of this line of works are fixed share size $q$, unconstrained number $N$ of players and ramp parameters $(t,r)$ satisfying $t=\tau N$, $r=\rho N$. The goal is minimizing the relative threshold gap $\gamma=\frac{g}{N}=\frac{r-t}{N}=\rho-\tau$. It is shown in \cite{ref:CCX13} and~\cite{BGK16} that for 
$0< t< N-1$,
%\begin{equation}\label{eq: bound}
$$
g\geq(N+2)/(2q-1).
$$
%\end{equation}
This means that once $q$ is fixed, the relative gap $\gamma=\frac{g}{N}>\frac{1}{2q-1}$. In particular, when $q=2$, we must have $\gamma>1/3$.
This constraint is recently showed avoidable once the perfect privacy and perfect reconstruction of the ramp secret sharing are relaxed to statistical privacy (any $t$ shares from a pair of secrets have a statistical distance negligible in $N$) and probabilistic reconstruction (reconstruction with $r$ shares has a failure probability that is negligible in $N$), respectively \cite{preprint}. It is shown that for any $0\leq\tau<\rho\leq1$, ramp secret sharing families (with relaxed privacy and reconstruction) can be explicitly constructed such that the privacy threshold $t=\tau N$ and the reconstruction threshold $r=\rho N$. The non-perfect privacy brings out the distinction between an adaptive reading adversary and a non-adaptive reading adversary. The authors then give two constructions for these two types of reading adversaries, respectively. In particular, the construction for non-adaptive adversary shares a secret of $N(\rho-\tau-o(1))\log q$ bits, which they show is optimal. The construction for adaptive adversary does not achieve this secret length and the authors leave improving the secret length as an open problem.
%Our constructions explore the task of strengthening the construction to achieve non-malleability. We find out $k=\rho N$ for any $\rho<1$. The effect on $\tau$ is not clear at this point. Now back to necessarily binary. 
As mentioned previously, the tools developed in our second construction of affine leakage resilient secret sharing can be used to significantly improve the secret length of the construction in \cite{preprint}. In particular, a {\em coding rate} ($\ell/N$) was used as the design criterion and shown to be upper-bounded by $\rho-\tau$. The coding rate is related to the information ratio in the current work as follows. 
$$
\mbox{Information ratio}=\frac{N/P}{\ell} =\frac{1/P}{\mbox{coding rate}}.
$$

\bigskip

The rest of the paper is organised as follows. Section \ref{sec: pre} contains the definitions of various randomness extractors that appear in this work. Section \ref{sec: model} contains two constructions of affine leakage-resilient secret sharing, for non-adaptive adversary and adaptive adversary, respectively. Section \ref{sec: reduction} contains two constructions of adaptive affine leakage-resilient secret sharing that are non-malleable with respect to affine tampering and bit-wise independent tampering, respectively.
%proves a reduction from affine non-malleable secret sharing to {\em affine non-malleable extractors}. Section \ref{sec: NME} gives a variation using {\em seeded non-malleable extractors}. In Section \ref{sec: conclu}, we conclude our work.

\section{Preliminaries}\label{sec: pre}
Coding schemes define the basic properties  for codes (schemes) that are used in cryptography. Let  $\bot$  denote  a special symbol that means detection.  %detection the decoder error output.

\begin{definition}[\cite{DzPiWi}] A $(k,n)$-coding scheme consists of two polynomial-time functions: a randomised %(possibly deterministic) 
encoding function 
$\mathsf{Enc}:\{0,1\}^k\rightarrow\{0,1\}^n$, where the randomness is implicit, and a deterministic decoding function $\mathsf{Dec}:\{0,1\}^n\rightarrow\{0,1\}^k\cup\{\perp\}$ such that, for each $\mathsf{m}\in\{0,1\}^k$, $\mathsf{Pr}[\mathsf{Dec}(\mathsf{Enc}(\mathsf{m}))=\mathsf{m}]=1$ (correctness), and the probability is  over the randomness of the encoding algorithm.
\end{definition}
%The rate of a $(k,n)$-coding scheme is the ratio $\frac{k}{n}$.  For a family of $(k,n(k))$-coding schemes, the   {\em achievable rates of  the family } is the supremum of the rates of schemes as $k$ grows to infinity.
%the coding schemes satisfying a set of security goals is called the capacity of the code (with respect to these goals).}

The \emph{statistical distance} of two random variables (their corresponding distributions) is defined as follows. For $\mathsf{X},\mathsf{Y}\leftarrow\Omega$, 
$$\mathsf{SD}(\mathsf{X};\mathsf{Y})= \dfrac{1}{2}\sum_{\mathbf{\omega} \in \Omega}|\mbox{Pr}(\mathsf{X}=\mathbf{\omega})-\mbox{Pr}(\mathsf{Y}=\mathbf{\omega})|.$$
We say $\mathsf{X}$ and $\mathsf{Y}$ are $\varepsilon$-close (denoted $\mathsf{X}\stackrel{\varepsilon}{\sim}\mathsf{Y}$) if $\mathsf{SD}(\mathsf{X},\mathsf{Y})\leq \varepsilon$.

A {\em tampering function}  % defined on the codeword space. Any function $f:\{0,1\}^n\rightarrow\{0,1\}^n$ is a tampering function 
for a $(k,n)$-coding scheme is a function $f:\{0,1\}^n\rightarrow\{0,1\}^n$.
\begin{definition}[\cite{DzPiWi}] \label{def: non-malleability}Let $\mathcal{F}$ be a family of tampering functions.
 For each $f\in\mathcal{F}$ and $\mathsf{m}\in\{0,1\}^k$, define the tampering-experiment
$$
\mathrm{Tamper}_\mathsf{m}^f=\left\{
\begin{array}{c}
\mathbf{x}\leftarrow\mathsf{Enc}(\mathsf{m}),\tilde{\mathbf{x}}= f(\mathbf{x}),\tilde{\mathsf{m}}=\mathsf{Dec}(\tilde{\mathbf{x}})\\
\mathrm{Output}\ \tilde{\mathsf{m}},\\
\end{array}
\right\}.
$$
which is a random variable over the randomness of the encoding function  $\mathsf{Enc}$.
A coding scheme $(\mathsf{Enc},\mathsf{Dec})$ is  {\em non-malleable with respect to $\mathcal{F}$}  if for each $f\in\mathcal{F}$, there exists a distribution
 $\mathcal{D}_f$ over the set $\{0,1\}^k\bigcup\{\perp, \mathsf{same}^*\}$, such that, for all $\mathsf{m}\in\{0,1\}^k$, we have:
\begin{equation}\label{eq: NMCdef}
\mathrm{Tamper}_\mathsf{m}^f\stackrel{\varepsilon}{\sim}\left\{
\begin{array}{c}
\tilde{\mathsf{m}}\leftarrow\mathcal{D}_f\\
\mathrm{Output}\ \mathsf{m}\ \mathrm{ if }\ \tilde{\mathsf{m}}=\mathsf{same}^*,\ \mathrm{and}\ \tilde{\mathsf{m}}\ \mathrm{ otherwise;}
\end{array}
\right\}
\end{equation}
and $\mathcal{D}_f$ is  efficiently samplable given oracle access to $f(\cdot)$. %Here ``$\stackrel{\varepsilon}{\sim}$''  refers to statistical or computational indistinguishability. In the case of statistical indistinguishability,   the scheme has exact-security $\varepsilon$  if the above  statistical distance is at most $\varepsilon$.
\end{definition}
The right hand side of (\ref{eq: NMCdef}) is sometimes denoted  by $\mathsf{Copy}(\mathcal{D}_f,\mathsf{m})$. 
Using this notation,  (\ref{eq: NMCdef}) can be written as,
\begin{equation} \tag{\ref{eq: NMCdef}'}
\mathrm{Tamper}_\mathsf{m}^f\stackrel{\varepsilon}{\sim}\mathsf{Copy}(\mathcal{D}_f,\mathsf{m}). 
\end{equation}

The following coding scheme, originally proposed for constructing robust secret sharing, is frequently used as a building block for constructing non-malleable codes.
\begin{definition}[\cite{CDFP08}]\label{def: AMD} Let $(\mathsf{AMDenc}, \mathsf{AMDdec})$ be a coding scheme with $\mathsf{AMDenc}: \{0,1\}^k\rightarrow\{0,1\}^n$. We say that $(\mathsf{AMDenc}, \mathsf{AMDdec})$ is a $\delta$-secure Algebraic Manipulation Detection (AMD) code if for all $\mathsf{m}\in\{0,1\}^k$ and all non-zero $\Delta\in\{0,1\}^n$, we have $\mathsf{Pr}[\mathsf{AMDdec}(\mathsf{AMDenc}(\mathsf{m})+\Delta)\notin\{\mathsf{m},\bot\}]\leq \delta$, where the probability is over the randomness of the encoding.
\end{definition}

%Efficient AMD codes can be constructed using polynomials over finite fields.
%\begin{lemma}[\cite{CDFP08}]\label{lem: AMD} There exists an AMD code $(\mathsf{AMDenc}, \mathsf{AMDdec})$ with encoder $\mathsf{AMDenc}: \{0,1\}^k\rightarrow\{0,1\}^{k+2u}$ that satisfies $\mathsf{Pr}[\mathsf{AMDdec}(\mathsf{AMDenc}(\mathsf{m})+\Delta)\neq\bot]\leq (k/u+1)/2^u$. 
%\end{lemma}

%{\mg remark below is confusing - is tamper detection in general is NM, or only for construction in lemma 1?  do you need this?  is this more that just saying security parmeter $\delta$?} \textcolor{blue}{AMD security in Definition \ref{def: AMD} is too weak. AMD is not TDC, so it is not NMC. Alternative way out is re-define Definition \ref{def: AMD} \cite{DzPiWi}.}

%\begin{remark} 
An explicit optimal construction of AMD code is given in \cite{CDFP08} that in  fact gives a \textit{tamper detection code} \cite{TDC}. %Their construction in fact gives a \textit{tamper detection} code \cite{TDC}, which requires that a tampered codeword is always decoded to $\perp$. 
We say %In this paper, 
an AMD code % is said to 
achieves \textit{$\delta$-tamper detection security}  if for all $\Delta\neq 0^n$, $\mathsf{Pr}[\mathsf{AMDdec}(\mathsf{AMDenc}(\mathsf{m})+\Delta)\neq\bot]\leq \delta$.
%\end{remark}
%

%\textit{Randomness extractors.}
We use various types of randomness extractors in our constructions. Randomness extractors extract close to uniform bits from input sequences that are not uniform but have some guaranteed entropy. See \cite{nisan-1996} and references there in for more information about randomness extractors. %Randomness extractors have found wide applications in cryptography. 

A {\em randomness source} is a random variable with  lower bound on its min-entropy, which is defined by $\mathsf{H}_\infty(\mathsf{X})=-\log \max_\mathsf{x}\{\mathsf{Pr}[\mathsf{X}=\mathsf{x}]\}$.
We say  a  random variable $\mathsf{X} \leftarrow \lbrace 0,1\rbrace^{n}$ is a {\em $(n,k)$-source},  if $\mathsf{H}_\infty(\mathsf{X})\geq k$. 
For well structured sources, there exist deterministic functions that can extract close to uniform bits.
An  \textit{affine} $(n,k)$-source is a random variable that is uniformly distributed on an affine translation of some $k$-dimensional sub-space of $\lbrace0,1\rbrace^n$. Let $\mathsf{U}_m$ denote the random variable uniformly distributed over $\{0,1\}^m$.

%comebackhere

\begin{definition}\label{def: AExt} A function $\mathsf{aExt}\colon\lbrace0,1\rbrace^n \to \lbrace 0,1 \rbrace ^m$ is an affine $(k, \varepsilon)$-extractor if for any affine $(n,k)$-source $\mathsf{X}$, we have 
$$
\mathsf{SD}(\mathsf{aExt}(\mathsf{X});\mathsf{U}_m) \leq \varepsilon.
$$ 
\end{definition}

%More specially, a \textit{bit-fixing} $(n,k)$-source is a random variable $\mathsf{X}=(\mathsf{X}_1,\ldots,\mathsf{X}_n)$, where at least $k$ of the coordinates are uniformly and independently distributed on $\lbrace0,1\rbrace$ while the rest have fixed values. A {\em bit-fixing extractor} can be similarly defined as one extracts close to uniform bits from bit-fixing source. Since every bit-fixing source is an affine source, an affine extractor is also a bit-fixing extractor.

%\begin{definition} A function $\mathsf{BFExt}:\lbrace0,1\rbrace^n \to \lbrace 0,1 \rbrace ^{\ell}$ is a seedless $(k, \varepsilon)$ bit-fixing extractor if for any $(n,k)$ bit-fixing source $\mathsf{X}$, we have 
%$$
%\mathsf{SD}(\mathsf{BFExt}(\mathsf{X});U_{\ell}) \leq \varepsilon.
%$$ 
%\end{definition}

We will use Bourgain's affine extractor (or the alternative \cite{XinLiAffine} due to Li) in our constructions. 
%\footnote{We note, however, that we could have used other explicit extractors for this purpose, such as \cite{XinLiAffine}.}.
\begin{lemma}[\cite{Bourgain}]\label{lem: affine} %\textcolor{red}{there are better AExt for $q>2$}
For every constant $0<\mu\leq 1$, there is an explicit affine extractor $\mathsf{aExt}\colon \{0,1\}^n\rightarrow\{0,1\}^m$ for affine $(n,n\mu)$-sources with output length $m=\Omega(n)$ and error at most $2^{-\Omega(n)}$.
\end{lemma}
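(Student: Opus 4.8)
This is Bourgain's theorem (with Li's \cite{XinLiAffine} as an alternative route), so the plan is to follow the standard two-part strategy for seedless extractors against structured sources: a Fourier-analytic reduction of the statistical-distance bound to a single character-sum estimate, followed by the estimate itself. First I would fix notation: an affine $(n,\mu n)$-source $\mathsf{X}$ is uniform on a coset $V=u+W$, where $W\subseteq\{0,1\}^n$ is a linear subspace of dimension $k:=\mu n$. By Vazirani's XOR lemma, $\mathsf{SD}(\mathsf{aExt}(\mathsf{X});\mathsf{U}_m)\le 2^{m/2}\max_{0\ne\lambda}\bigl|\mathbb{E}_{x\in V}[(-1)^{\lambda(\mathsf{aExt}(x))}]\bigr|$, the maximum over nonzero $\mathbb{F}_2$-linear functionals $\lambda\colon\{0,1\}^m\to\{0,1\}$. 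So it suffices to build $\mathsf{aExt}$ so that
\[
\Bigl|\,\mathbb{E}_{x\in V}\bigl[(-1)^{\lambda(\mathsf{aExt}(x))}\bigr]\Bigr|\;=\;\frac{1}{2^{k}}\Bigl|\sum_{w\in W}(-1)^{\lambda(\mathsf{aExt}(u+w))}\Bigr|\;\le\;2^{-\Omega(n)}
\]
uniformly over all cosets $V$ of all subspaces $W$ with $\dim W\ge\mu n$ and all nonzero $\lambda$; then choosing $m$ a sufficiently small constant fraction of $n$ absorbs the $2^{m/2}$ loss and still leaves error $2^{-\Omega(n)}$.

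For the construction I would identify $\{0,1\}^n$ with the field $\mathbb{F}:=\mathbb{F}_{2^n}$ (fixing an explicit degree-$n$ irreducible polynomial over $\mathbb{F}_2$) and take $\mathsf{aExt}(x)$ to be a fixed $\mathbb{F}_2$-linear image, of length $m=\Omega(n)$, of $\phi(x)$, where $\phi\colon\mathbb{F}\to\mathbb{F}$ is a low-complexity but strongly nonlinear map built from the multiplicative structure of $\mathbb{F}$ — in Bourgain's construction, essentially a bounded-degree rational function (equivalently, a sum of a few monomials $x^{d_j}$ with carefully chosen exponents and decimations). Then $\lambda(\mathsf{aExt}(x))=\mathrm{Tr}_{\mathbb{F}/\mathbb{F}_2}(c\,\phi(x))$ for some $c=c(\lambda)\ne 0$, so the sum in the display becomes $\sum_{w\in W}\chi_c(\phi(u+w))$ with $\chi_c$ a nontrivial additive character of $\mathbb{F}$. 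Explicitness of $\mathsf{aExt}$ is then immediate from explicit field arithmetic, so the only thing left to prove is the character-sum bound.

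The analytic heart — and the step I expect to be overwhelmingly the hardest — is exactly that bound: $\bigl|\sum_{w\in W}\chi_c(\phi(u+w))\bigr|\le 2^{-\Omega(n)}|W|$ for an arbitrary coset $u+W$ with $\dim W\ge\mu n$ and an arbitrary nontrivial $\chi_c$. Classical Weil bounds are useless here because $W$ is neither an interval nor a multiplicative coset; the right tool is the sum--product phenomenon in $\mathbb{F}_{2^n}$ (incidence/exponential-sum estimates of Bourgain, Glibichuk and Konyagin, and of Konyagin), amplified following Bourgain to sums over subspaces of linear dimension. Roughly, one argues that a non-negligible such sum would force $W$ to carry simultaneous additive and multiplicative structure, which the sum--product bound forbids once $\dim W$ is a constant fraction of $n$. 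Li's alternative packages the same input differently, first applying a linear seeded extractor (which sends affine sources to affine sources) to expose a block structure inside $\mathsf{X}$ and then recursing, but the sum--product estimate still does the decisive work. Combining the chosen estimate with the Vazirani reduction of the first paragraph, and tuning the constant $m/n$, yields the lemma.
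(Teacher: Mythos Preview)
The paper does not prove this lemma; it is stated as a citation to Bourgain's work (with Li's construction mentioned as an alternative) and used as a black box. Your proposal is therefore not comparable to any argument in the paper itself, but it is a faithful high-level sketch of the actual proof in \cite{Bourgain}: the Vazirani/XOR reduction to a single nontrivial character sum over an arbitrary affine subspace, the identification of $\{0,1\}^n$ with $\mathbb{F}_{2^n}$ so that the extractor is a low-degree polynomial (or sum of a few monomials) composed with a fixed $\mathbb{F}_2$-linear projection, and then the appeal to sum--product estimates in large finite fields to bound the resulting exponential sum. You have correctly identified that the last step is the entire difficulty and that no classical Weil-type bound suffices; you have also correctly characterized Li's alternative as repackaging the same sum--product input. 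For the purposes of this paper nothing more is needed---the lemma is invoked, not proved---so your outline is appropriate and accurate as a pointer to the literature.
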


For general $(n,k)$-sources, there does not exist a deterministic function that can extract close to uniform bits from all of them simultaneously. A family of deterministic functions are needed.

\begin{definition}\label{def: strong extractor}
A function $\mathsf{Ext}\colon\lbrace0,1\rbrace^d\times \lbrace0,1\rbrace^n  \to \lbrace 0,1 \rbrace ^m$ is a strong seeded $(k, \varepsilon)$-extractor if for any $(n,k)$-source $\mathsf{X}$, we have 
$$
\mathsf{SD}(\mathsf{S},\mathsf{Ext}(\mathsf{S},\mathsf{X});\mathsf{S},\mathsf{U}_m) \leq \varepsilon,
$$ 
where $\mathsf{S}$ is chosen uniformly from $\lbrace0,1\rbrace^d$. 
%For some applications we may need an \textit{average case} strong  extractor, which is defined with the help of conditional min-entropy. Let $V$ be a random variable possibly dependent on $X$, $\mbox{Ext}$ is called a $(d, \varepsilon)$ average case strong extractor if for all $(V,X)$ with $\tilde{H}_{\infty}(X|V)\geq d$,
%$$
%\mbox{SD}((S,V,\mbox{Ext}(X,S));(S,V,U_{\ell})) \leq \varepsilon,
%$$
%where $S$ denotes a random seed chosen uniformly from $\mathcal{S}$.
A seeded extractor $\mathsf{Ext}(\cdot,\cdot)$ is called linear if for any fixed seed $\mathsf{S}=\mathsf{s}$, the function $\mathsf{Ext}(\mathsf{s},\cdot)$ is a linear function.
\end{definition}

There are linear seeded extractors that extract all the randomness, for example, the Trevisan's extractor \cite{Trevisan}. In particular, we use the following improvement of this extractor due to Raz, Reingold and Vadhan \cite{improvement}.

%This condition is in particular satisfied by Trevisan?s extractor [41]. For our construction, we will use the following improvement of this extractor due to Raz, Reingold and Vadhan [42]: Theorem 26. [42] There is an explicit linear strong seeded (k,?)-extractor Ext: Fn2 ? {0,1}r ? Fm2 with r = O(log3(n/?)) and m = k ? O(r).

\begin{lemma}[\cite{improvement}] \label{lem: Ext}
There is an explicit linear strong $(k,\varepsilon)$-extractor $\mathsf{Ext}\colon\lbrace0,1\rbrace^d\times \lbrace0,1\rbrace^n  \to \lbrace 0,1 \rbrace ^m$ with $d=O(\log^3(n/\varepsilon))$ and $m=k-O(d)$.
\end{lemma}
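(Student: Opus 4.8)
The plan is to establish Lemma~\ref{lem: Ext} via Trevisan's extractor \cite{Trevisan}, instantiated with an $\mathbb{F}_2$-linear list-decodable code and the weak designs of \cite{improvement}: linearity will then be read off the construction, and the extraction property will follow from the usual reconstruction argument. Concretely, fix an explicit $\mathbb{F}_2$-linear code $\mathsf{C}\colon\{0,1\}^n\to\{0,1\}^{\bar n}$ of block length $\bar n=\mathrm{poly}(n/\varepsilon)$ that is list-decodable from a $(1/2-\gamma)$-fraction of errors with list size $\mathrm{poly}(1/\gamma)$ for $\gamma=\Theta(\varepsilon/m)$ (e.g.\ Reed--Solomon concatenated with Hadamard, which is $\mathbb{F}_2$-linear); put $\ell=\lceil\log\bar n\rceil=O(\log(n/\varepsilon))$; and invoke the RRV weak-design lemma to obtain explicit sets $S_1,\dots,S_m\subseteq[d]$ of size $\ell$ with $\sum_{j<i}2^{|S_i\cap S_j|}=O(m)$ for all $i$ and $d=O(\ell^2\log m)=O(\log^3(n/\varepsilon))$ (using $m\le k\le n$). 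Define $\mathsf{Ext}(s,x)_i$ to be the bit of the codeword $\mathsf{C}(x)$ at the coordinate named by $s|_{S_i}$.

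\emph{Linearity.} For each fixed seed $s$ the index $s|_{S_i}$ is a constant, so $x\mapsto\mathsf{C}(x)[s|_{S_i}]$ is a coordinate projection composed with the linear encoder and hence $\mathbb{F}_2$-linear; ranging over $i$, the map $\mathsf{Ext}(s,\cdot)$ is linear, which is the ``linear'' clause.

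\emph{Extraction.} Suppose, for contradiction, that for some $(n,k)$-source $\mathsf{X}$ we have $\mathsf{SD}\big((\mathsf{S},\mathsf{Ext}(\mathsf{S},\mathsf{X}));(\mathsf{S},\mathsf{U}_m)\big)>\varepsilon$. A hybrid argument over the $m$ output positions yields, for a set of $x$ of $\mathsf{X}$-weight at least $\varepsilon/2$, some index $i$ and a predictor that, given $s$ and the first $i-1$ output bits, computes $\mathsf{C}(x)[s|_{S_i}]$ with advantage $>\varepsilon/(2m)$ over a uniform seed. Freezing the $d-\ell$ seed coordinates outside $S_i$ to their best value makes each earlier output bit a function of only the $|S_i\cap S_j|$ free coordinates --- describable by a $2^{|S_i\cap S_j|}$-bit truth table --- so by the weak-design bound the whole collection of tables is $O(m)$ bits; feeding these tables and the predictor a varying $s|_{S_i}$ yields a word agreeing with $\mathsf{C}(x)$ on a $(1/2+\gamma)$-fraction of coordinates. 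List-decoding then recovers $x$ from that $O(m)$-bit advice plus $O(\log(1/\gamma))+O(\log m)=O(\log(n/\varepsilon))$ extra bits (naming the entry of the decoding list and the index $i$), so all such $x$ lie in a set of size $2^{O(m)+O(\log(n/\varepsilon))}$. If $m\le k-Cd$ for a large enough constant $C$, this set has $\mathsf{X}$-weight below $\varepsilon/2$ (since $\mathsf{H}_\infty(\mathsf{X})\ge k$), a contradiction; tracking the $\varepsilon/2$ and $\varepsilon/(2m)$ losses gives total error $\varepsilon$, and the strong form is automatic because reconstruction consumes the whole seed.

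\emph{Main obstacle.} The crux is the combinatorial design: one needs explicit $\ell$-element sets $S_1,\dots,S_m$ inside a universe of size only $d=O(\ell^2\log m)$ with $\sum_{j<i}2^{|S_i\cap S_j|}=O(m)$, so that the reconstruction advice stays $O(m)$ bits and the output-length loss is only $O(d)$. This is precisely where RRV weak designs replace the stricter Nisan--Wigderson designs, and where the $O(\log^3(n/\varepsilon))$ seed length originates; the hybrid argument and the list-decoding accounting are then routine.
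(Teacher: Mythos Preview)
The paper does not prove Lemma~\ref{lem: Ext}; it simply cites it from Raz--Reingold--Vadhan~\cite{improvement} as a black-box result, so there is no ``paper's own proof'' to compare against. Your sketch is a faithful outline of the argument in that cited reference: Trevisan's extractor instantiated with an $\mathbb{F}_2$-linear inner code (so linearity per seed is immediate) together with the RRV weak designs, analyzed via the standard hybrid-plus-reconstruction paradigm. The parameter accounting (seed length $d=O(\log^3(n/\varepsilon))$ from $\ell=O(\log(n/\varepsilon))$ and $d=O(\ell^2\log m)$, entropy loss $O(d)$ from the advice bound) is correct at the level of a sketch.

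One small quibble: your closing sentence ``the strong form is automatic because reconstruction consumes the whole seed'' is slightly glib. Strongness is not quite automatic---it requires observing that the distinguisher in the hybrid argument may depend on the full seed $s$ (not just $s|_{S_i}$), which is indeed the case in the reconstruction proof, so the conclusion holds; but it is worth stating this explicitly rather than calling it automatic.
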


Non-malleability of randomness extractors captures their tolerance against tampering. It was first defined for seeded extractors by Dodis and Wichs \cite{DW07} with application in privacy amplification over public and unauthenticated discussion. The tampering considered is an arbitrary seed tampering that does not have any fixed point.

\begin{definition}[\cite{DW07}] \label{def: NMExt}
A seeded $(k,\varepsilon)$-non-malleable %affine 
extractor is a function $\mathsf{nmExt}: \{0,1\}^d\times\{0,1\}^n\rightarrow\{0,1\}^m$ such that given any %affine 
$(n,k)$-source $\mathsf{X}$, an independent uniform seed $\mathsf{Z}\in\{0,1\}^d$, for any (deterministic) function $\mathcal{A}:\{0,1\}^d\rightarrow\{0,1\}^d$ such that $\mathcal{A}(\mathsf{z})\neq\mathsf{z}$ for any $\mathsf{z}$, we have  
\begin{equation}\label{eq: NMExt}
\mathsf{SD}(\mathsf{Z},\mathsf{nmExt}(\mathcal{A}(\mathsf{Z}),\mathsf{X}),\mathsf{nmExt}(\mathsf{Z},\mathsf{X});\mathsf{Z},\mathsf{nmExt}(\mathcal{A}(\mathsf{Z}),\mathsf{X}),\mathsf{U_m})\leq \varepsilon.
\end{equation}
\end{definition}

Non-malleable seedless extractors were proposed by Cheraghchi and Guruswami for constructing non-malleable codes. The tampering now is a source tampering and is restricted to a particular tampering family.

\begin{definition}[\cite{ChGu1}] \label{def: nmExt}
A function $\mathsf{nmExt}\colon\{0,1\}^n\rightarrow\{0,1\}^m$ is a $(k,\varepsilon)$-seedless non-malleable extractor with respect to a class $\mathcal{X}$ of sources over $\{0,1\}^n$ and a class $\mathcal{F}$ of tampering functions acting on $\{0,1\}^n$, if for every $\mathsf{X}\in\mathcal{X}$ with min-entropy $k$ and every $f\in\mathcal{F}$, there is a distribution $\mathcal{D}_f$ over $\{0,1\}^m\cup\{\mathsf{same}^{*}\}$ such that for an independent $\mathsf{Y}$ sampled from $\mathcal{D}_f$, we have
\begin{equation}\label{eq: nmExt}
\mathsf{SD}(\mathsf{nmExt}(f(\mathsf{X}),\mathsf{nmExt}(\mathsf{X});\mathsf{Copy}(\mathsf{Y},\mathsf{U}_m),\mathsf{U_m})\leq \varepsilon,
\end{equation}
where the two copies of $\mathsf{U}_m$ denote the same random variable and $\mathsf{Copy}(\mathsf{y},\mathsf{u})=\mathsf{y}$ always except when $\mathsf{y}=\mathsf{same}^{*}$, in which case it outputs $\mathsf{u}$.
\end{definition}

We will use Chattopadhyay and Li's affine {\em non-malleable} extractor. % in our first construction. %We give the general definition of non-malleable seedless extractor in a few lines and cite a reduction. The no fixed point special case is easy to define and state.
We first give the restricted form of the extractor, where the source tampering function does not have any fixed points.

\begin{lemma}[\cite{CL17}]\label{lem: ANME}
For all $n,k>0$, any $\delta>0$ and $k\geq n-n^{\frac{\delta}{2}}$, there exists an efficient function $\mathsf{anmExt}\colon\{0,1\}^n\rightarrow\{0,1\}^m$, $m=n^{\Omega(1)}$, such that if $\mathsf{X}$ is an affine $(n,k)$-source and $\mathcal{A}\colon\{0,1\}^n\rightarrow\{0,1\}^n$ is an affine function with no fixed point, then 
$$
\mathsf{SD}(\mathsf{anmExt}(\mathcal{A}(\mathsf{X})),\mathsf{anmExt}(\mathsf{X});\mathsf{anmExt}(\mathcal{A}(\mathsf{X})),\mathsf{U}_m)\leq2^{-n^{\Omega(1)}}.
$$
\end{lemma}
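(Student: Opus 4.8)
I would follow the \emph{advice-generator plus correlation-breaker} template common to modern non-malleable extractors, but force every component to be linear-algebraic, since affine tampering is global: a tampered block $\mathcal{A}(\mathsf{X})|_S$ depends on \emph{all} of $\mathsf{X}$, so the block/compartmentalization arguments behind split-state constructions are unavailable. First I would cash in the near-full entropy. An affine $(n,k)$-source $\mathsf{X}$ with $k\ge n-n^{\delta/2}$ misses at most $n^{\delta/2}$ dimensions, so after a fixed coordinate permutation there is a free set $F\subseteq[n]$, $|F|=k$, on which $\mathsf{X}_F$ is uniform, with every coordinate outside $F$ a fixed affine function of $\mathsf{X}_F$. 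Hence the restriction of $\mathsf{X}$ to any block of $t$ coordinates is still an affine source of entropy at least $t-n^{\delta/2}$; choosing $t=n^{\delta}$ makes its entropy rate $1-o(1)$, so Bourgain's affine extractor (Lemma~\ref{lem: affine}), applied with a constant rate parameter such as $\mu=1/2$, outputs $\Omega(t)=n^{\Omega(1)}$ near-uniform bits with error $2^{-\Omega(t)}=2^{-n^{\Omega(1)}}$. This is the engine that manufactures seeds.

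The core object is an \emph{affine advice generator}: a deterministic $\mathsf{asc}\colon\{0,1\}^n\to\{0,1\}^a$ with $a$ a small power of $n$, such that $\mathsf{X}$ loses only $n^{o(1)}$ bits of entropy given $\mathsf{asc}(\mathsf{X})$, while $\Pr[\mathsf{asc}(\mathcal{A}(\mathsf{X}))=\mathsf{asc}(\mathsf{X})]\le 2^{-n^{\Omega(1)}}$ for every fixed-point-free affine $\mathcal{A}(x)=Ax+c$. I would analyze this through the discrepancy $\mathcal{A}(\mathsf{X})-\mathsf{X}=(A-I)\mathsf{X}+c$, which is never zero by hypothesis, splitting on whether it is constant on $\supp(\mathsf{X})$. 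If it is \emph{non-constant}, it is a non-trivial affine source, and hashing $x$ (via a short affine extractor and a few pseudorandom linear functionals) into $\mathsf{asc}$ makes the advice flip with probability $1-o(1)$, because $\mathsf{X}$ is nearly uniform and so lands outside the (affine, proper) agreement set with overwhelming probability. If it is a \emph{constant} $c'\ne0$ --- the oblivious/additive case, which subsumes $A=I$ --- then no deterministic advice can distinguish $\mathsf{X}$ from $\mathsf{X}+c'$, and the burden shifts to the extraction stage.

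For the extraction I would use an \emph{affine correlation breaker} $\mathsf{acb}(\alpha;x)$, engineered to be linear in $x$ for each fixed advice string $\alpha$ (built from alternating extraction / independence-preserving mergers seeded by the Bourgain extractor above, all over $\mathbb{F}_2$), with the property that for distinct advice strings the outputs $\mathsf{acb}(\alpha;\mathsf{X})$ and $\mathsf{acb}(\tilde\alpha;\mathsf{X}')$ are $2^{-n^{\Omega(1)}}$-close to a product of a uniform string with an arbitrary distribution, even when $\mathsf{X}'$ is the affine image $\mathcal{A}(\mathsf{X})$. The final extractor is (a cleaned-up version of) $\mathsf{anmExt}(x)=\mathsf{aExt}\bigl(\mathsf{acb}(\mathsf{asc}(x);x)\bigr)$. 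In the generic case $\mathsf{asc}(\mathcal{A}(\mathsf{X}))\ne\mathsf{asc}(\mathsf{X})$ with overwhelming probability, so the correlation breaker already makes $\mathsf{anmExt}(\mathsf{X})$ near-uniform conditioned on $\mathsf{anmExt}(\mathcal{A}(\mathsf{X}))$. In the oblivious case $\mathcal{A}(\mathsf{X})=\mathsf{X}+c'$, linearity of $\mathsf{acb}(\alpha;\cdot)$ makes the tampered correlation-breaker output equal the original one plus a fixed offset determined by $(\alpha,c')$; I would show this offset is nonzero (by ensuring $\mathsf{acb}(\alpha;\cdot)$ has trivial kernel on the relevant subspace) and that the nonlinear outer extractor $\mathsf{aExt}$, applied to a uniform string versus its translate by a fixed nonzero vector, yields near-independent outputs --- i.e.\ that Bourgain's extractor is itself non-malleable against translations of a full-entropy source, a fact one extracts from its exponential-sum analysis. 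Summing the errors from source normalization, the advice generator, the correlation breaker, and the outer extractor --- each $2^{-n^{\Omega(1)}}$ because every sub-source has entropy rate $1-o(1)$ --- gives the statement; there is no $\mathsf{same}^{*}$ branch to account for, precisely because $\mathcal{A}$ is assumed fixed-point-free.

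The crux of the difficulty, and the reason the hypothesis $k\ge n-n^{\delta/2}$ is essentially forced, is the non-compartmentalized nature of affine tampering: one cannot extract a seed from a ``clean'' part of $\mathsf{X}$ that is independent of the tampered string, as in split-state, so every independence claim must be purchased from the rank/entropy budget, and that budget is thin --- the source may be $n^{\delta/2}$ dimensions short of full. Carrying the advice generator and the linear correlation breaker through while only assuming entropy rate $1-o(1)$, and in particular making the correlation breaker genuinely robust to the oblivious additive-shift tampering where no advice can help, is where I would expect the proof to be most delicate.
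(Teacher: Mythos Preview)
The paper does not prove this lemma at all; it is quoted as a black box from \cite{CL17} and used only as a building block in Theorem~\ref{th: ANME}, so there is no in-paper argument to compare your proposal against. Your high-level template---exploit the near-full entropy so that every sub-block still has entropy rate $1-o(1)$, manufacture seeds with Bourgain's affine extractor, build an advice string, and finish with a correlation breaker---does match the architecture of the actual \cite{CL17} construction.

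That said, one step of your plan is a genuine gap rather than a detail. In the oblivious additive case $\mathcal{A}(\mathsf{X})=\mathsf{X}+c'$ you push the entire burden onto the claim that Bourgain's extractor is itself non-malleable against a fixed nonzero translation of a (near-)uniform input, asserting this ``extracts from its exponential-sum analysis.'' This is not a known property of Bourgain's extractor and does not fall out of the character-sum estimates that underlie it; a linear outer layer would fail outright (the outputs would differ by a fixed constant), and nothing in Bourgain's analysis rules out similar structured correlations for his specific polynomial. In \cite{CL17} the additive-shift case is not dispatched by an outer-layer miracle but is absorbed into the correlation-breaker / alternating-extraction machinery itself: the shift propagates through the linear seeded extractors as a seed discrepancy, and the flip-flop structure turns that discrepancy into genuine advice. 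Your diagnosis that this case is ``where the proof is most delicate'' is exactly right, but the resolution you sketch would not go through as written.
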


Let $\mathcal{F}_{\mathsf{affine}}$ be the set of tampering functions from $\{0,1\}^n$ to $\{0,1\}^n$ where each output bit is an affine function of the input bits. The affine non-malleable extractors in Lemma \ref{lem: ANME} can be easiliy converted into a seedless non-malleable extractor with respect to $\mathcal{F}_{\mathsf{affine}}$.

\begin{lemma}[\cite{CL17}]\label{lem: convert}
Let $\mathsf{anmExt}\colon\{0,1\}^n\rightarrow\{0,1\}^m$ be a $(k-\eta,\varepsilon)$-non-malleable extractor for affine sources, with respect to affine tampering functions with on fixed points. Then $\mathsf{anmExt}$ is a $(k,\varepsilon+(n+1)2^{-\eta})$-non-malleable extractor for affine sources, with respect to $\mathcal{F}_{\mathsf{affine}}$.
\end{lemma}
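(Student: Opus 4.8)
The plan is to reduce an arbitrary $f\in\mathcal{F}_{\mathsf{affine}}$ to the fixed-point-free case already covered by the hypothesis, by conditioning the affine source on the additive offset $g(x):=f(x)+x$. Fix an affine $(n,\geq k)$-source $\mathsf{X}$ with support $V=\supp(\mathsf{X})$ of dimension $d\geq k$, and write $f(x)=Ax+b$, $L:=A+I$, so that $g(x)=Lx+b$ is affine and $f(x)=x$ precisely when $g(x)=0$. The key structural fact is that $g$ is constant on each coset of $\ker L$, so on any such coset that is disjoint from $\mathrm{Fix}(f)=g^{-1}(0)$ the map $f$ acts as a translation $x\mapsto x+\delta$ with $\delta\neq 0$, which is a \emph{globally} fixed-point-free affine map. (If $0\notin g(V)$, i.e.\ $f$ already has no fixed point on $V$, the hypothesis applies directly after extending $f|_V$ to a global fixed-point-free map, so assume $0\in g(V)$.) The image $g(V)$ is an affine subspace of some dimension $s\leq n$; I fix a complete flag $\{0\}=G_0\subseteq G_1\subseteq\cdots\subseteq G_s=g(V)$ and write $\mathsf{X}$ as a mixture of the $\leq n+1$ conditional affine sources $\mathsf{X}_i$ obtained by conditioning on $g(\mathsf{X})\in G_i\setminus G_{i-1}$ (with $G_{-1}:=\emptyset$, so $\mathsf{X}_0$ is uniform on $\mathrm{Fix}(f)\cap V$, where $f$ is the identity).

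For $i\geq 1$, the source $\mathsf{X}_i$ is uniform on an affine subspace of dimension $d-s+i-1$, has weight $2^{i-1-s}$, and misses $\mathrm{Fix}(f)$, so $f$ is fixed-point-free on its support; I extend this restriction to a globally fixed-point-free affine map --- a routine fact, since an affine map with no fixed point on a \emph{proper} affine subspace can be altered off that subspace into a globally fixed-point-free one --- and apply the $(k-\eta,\varepsilon)$-non-malleability hypothesis, which is valid as long as $\dim\mathsf{X}_i\geq k-\eta$. Marginalising the hypothesis (taken for an arbitrary fixed-point-free translation) shows in passing that $\mathsf{anmExt}$ is itself an affine $(k-\eta,\varepsilon)$-extractor, so for these $i$ the pair $\bigl(\mathsf{anmExt}(f(\mathsf{X}_i)),\,\mathsf{anmExt}(\mathsf{X}_i)\bigr)$ is $O(\varepsilon)$-close to a pair of independent uniform strings. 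I peel the pieces in decreasing order of $i$ as long as their dimension remains $\geq k-\eta$; because the largest \emph{unpeeled} flag level then has dimension below $k-\eta$, the total leftover probability is at most $2^{\,k-\eta-d}\leq 2^{-\eta}$, unless peeling has already consumed all of $G_1,\dots,G_s$, in which case the leftover is exactly $\mathrm{Fix}(f)\cap V$, which then has dimension $\geq k-\eta$, so $\mathsf{anmExt}$ is near-uniform on it and the $\mathsf{same}^{\ast}$ branch is clean. Setting $\mathcal{D}_f$ to the induced mixture of $\mathsf{same}^{\ast}$ (the $\mathsf{X}_0$ mass) and $\mathsf{U}_m$ (the peeled pieces and, crudely, the leftover) and summing the per-level statistical distances over the $\leq n+1$ flag levels yields the bound $\varepsilon+(n+1)2^{-\eta}$.

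The step I expect to be the crux is reconciling the \emph{fixed-point mass} with the entropy budget: when $\mathrm{Fix}(f)$ meets $V$ in a subspace of dimension below $k-\eta$, that sliver is at once too small to be near-uniform under $\mathsf{anmExt}$ and too heavy to discard for free, so it --- together with the under-dimensioned tail levels of the flag --- must be charged against the $2^{-\eta}$ slack, and it is precisely this per-level charge over the $\leq n+1$ levels that produces the factor $n+1$. A secondary subtlety is whether $\mathcal{D}_f$ must depend on $f$ alone: the mixing weights $2^{i-1-s}$ are controlled by the rank of $A+I$ restricted to $V$, so in the high-entropy regime in which the lemma is used (where $k$ is within $n^{o(1)}$ of $n$, as in Lemma~\ref{lem: ANME}) they are pinned down by $f$; otherwise one either permits $\mathcal{D}_f$ to depend on $\mathsf{X}$ or absorbs the discrepancy into the error term.
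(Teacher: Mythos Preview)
The paper does not supply a proof of this lemma; it is quoted from \cite{CL17}. So there is no in-paper argument to compare against, and your task is really to see whether your plan stands on its own.

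Your flag decomposition is sound. The key observation that over $\mathbb{F}_2$ each $G_i\setminus G_{i-1}$ is a single coset of $G_{i-1}$ (hence affine) is exactly what makes the pieces $\mathsf{X}_i$ affine sources, and your dimension and weight calculations are correct. The ``routine fact'' about extending $f|_{\supp(\mathsf{X}_i)}$ to a globally fixed-point-free affine map is genuinely routine: since $g(\supp(\mathsf{X}_i))=G_i\setminus G_{i-1}$ avoids $0$, one can project onto $\supp(\mathsf{X}_i)$ and compose with $g$ to get an affine $\tilde g$ whose image stays inside $G_i\setminus G_{i-1}$, then set $\tilde f(x)=x+\tilde g(x)$.

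Two remarks. First, your sentence ``the pair is $O(\varepsilon)$-close to a pair of independent uniform strings'' overstates the hypothesis: in the form of Lemma~\ref{lem: ANME} you only get that the \emph{second} coordinate is near-uniform given the first. This does not hurt you, because for each peeled $\mathsf{X}_i$ you may simply take $\mathcal{D}_{\tilde f_i}$ to be the actual law of $\mathsf{anmExt}(f(\mathsf{X}_i))$; the mixture then has a fresh uniform second coordinate, as required by Definition~\ref{def: nmExt}. Second, your own accounting actually yields $\varepsilon+2^{-\eta}$, not $\varepsilon+(n+1)2^{-\eta}$: the unpeeled levels form a single interval $\{0,\dots,i^\ast-1\}$ whose total weight is $2^{i^\ast-1-s}\le 2^{k-\eta-d}\le 2^{-\eta}$, so the per-level $(n+1)$ charge you anticipate as the ``crux'' never materialises. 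This means you prove a stronger bound than the lemma claims, which is fine.

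The subtlety you flag about $\mathcal{D}_f$ depending on the source is real: the weight $2^{-s}$ on $\mathsf{same}^\ast$ depends on $s=\dim g(V)$, not on $f$ alone. Under the quantifier order in Definition~\ref{def: nmExt} (and the way the lemma is used in Theorem~\ref{th: ANME}, where $\mathcal{D}_{g^{\mathsf v}_{\sigma,R}}$ is built source-by-source) this is acceptable; but you should state explicitly which reading you adopt rather than leave it as a parenthetical.
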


Explicit constructions of randomness extractors have efficient forward direction of extraction. In some applications, we usually need to efficiently invert the process: Given an extractor output, sample a random pre-image. 
This is not necessarily efficient if the extractor is not a linear function, in which case we need to explicitly construct an {\em invertible extractor}.
If the extractor is linear, sampling a random pre-image can be done in polynomial time. In general,

\begin{definition}[\cite{CDS12}]\label{def: invertible} 
Let $f$ be a mapping from $\{0,1\}^n$ to $\{0,1\}^m$. For $v\geq0$, a function $\mathsf{Inv}\colon \{0,1\}^m\times\{0,1\}^r \rightarrow\{0,1\}^n$ is called a $v$-inverter for $f$ if the following conditions hold:
\begin{itemize}
\item (Inversion) Given $\mathsf{y}\in\{0,1\}^m$ such that its pre-image $f^{-1}(\mathsf{y})$ is nonempty, for every $\mathsf{r}\in\{0,1\}^r$ we have $f (\mathsf{Inv}(\mathsf{y}, \mathsf{r})) = \mathsf{y}$.
\item (Uniformity) $\mathsf{Inv}(\mathsf{U}_m,\mathsf{U}_r)$ is $\mu$-close to $\mathsf{U}_n$.
\end{itemize}
A $\mu$-inverter is called efficient if there is a randomized algorithm that runs in worst-case polynomial time and, given $\mathsf{y}\in\{0,1\}^m$ and $\mathsf{r}$ as a random seed, computes $\mathsf{Inv}(\mathsf{y}, \mathsf{r})$. We call a mapping $\mu$-invertible if it has an efficient $\mu$-inverter, and drop the prefix $\mu$ from the notation when it is zero. We abuse the notation and denote the inverter of $f$ by $f^{-1}$.
\end{definition}

Finally, we need the following simple lemma whose proof can be found in Appendix \ref{apdx: conditioning}.

\begin{lemma}\label{lem: conditioning} Let $\mathsf{V},\mathsf{V}'$ be two random variables distributed over the set $\mathcal{V}$ and $\mathsf{W},\mathsf{W}'$ over $\mathcal{W}$ satisfying $\mathsf{SD}(\mathsf{V},\mathsf{W};\mathsf{V}',\mathsf{W}')\leq \varepsilon$. Let $\mathcal{E}\subset\mathcal{W}$ be an event. Then we have the following.
$$
\mathsf{SD}(\mathsf{V}|\mathsf{W}\in\mathcal{E};\mathsf{V}'|\mathsf{W}'\in\mathcal{E})\leq \frac{2\varepsilon}{\mathsf{Pr}[\mathsf{W}'\in\mathcal{E}]}.
$$
\end{lemma}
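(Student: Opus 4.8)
The plan is a direct $\ell_1$ computation. Write $p=\mathsf{Pr}[\mathsf{W}\in\mathcal{E}]$ and $p'=\mathsf{Pr}[\mathsf{W}'\in\mathcal{E}]$, and for each $v\in\mathcal{V}$ set $a_v=\mathsf{Pr}[\mathsf{V}=v,\,\mathsf{W}\in\mathcal{E}]$ and $b_v=\mathsf{Pr}[\mathsf{V}'=v,\,\mathsf{W}'\in\mathcal{E}]$, so that $\sum_v a_v=p$, $\sum_v b_v=p'$, and the conditional probabilities are $\mathsf{Pr}[\mathsf{V}=v\mid\mathsf{W}\in\mathcal{E}]=a_v/p$ and $\mathsf{Pr}[\mathsf{V}'=v\mid\mathsf{W}'\in\mathcal{E}]=b_v/p'$. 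Hence the quantity to bound is $\frac12\sum_v\left|a_v/p-b_v/p'\right|$ (if $p'=0$ the statement is vacuous, so assume $p'>0$).

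I would split each summand by inserting $a_v/p'$ rather than $b_v/p$: by the triangle inequality $\left|a_v/p-b_v/p'\right|\le a_v\left|1/p-1/p'\right|+|a_v-b_v|/p'$. Summing the first piece over $v$ gives $p\cdot|p-p'|/(pp')=|p-p'|/p'$, and the second gives $\left(\sum_v|a_v-b_v|\right)/p'$. The choice to add and subtract $a_v/p'$ (not $b_v/p$) is the one mildly delicate point: it is what puts the normalization-mismatch error over $p'=\mathsf{Pr}[\mathsf{W}'\in\mathcal{E}]$, matching the denominator in the statement.

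It then remains to bound the two sums by $\varepsilon$ and $2\varepsilon$. For the first, $\mathsf{W}$ is a deterministic function of $(\mathsf{V},\mathsf{W})$, so marginalizing cannot increase statistical distance and $|p-p'|\le\mathsf{SD}(\mathsf{W};\mathsf{W}')\le\mathsf{SD}(\mathsf{V},\mathsf{W};\mathsf{V}',\mathsf{W}')\le\varepsilon$. For the second, write $a_v=\sum_{w\in\mathcal{E}}\mathsf{Pr}[\mathsf{V}=v,\mathsf{W}=w]$ and similarly for $b_v$; the triangle inequality followed by extending the inner sum to all of $\mathcal{W}$ gives $\sum_v|a_v-b_v|\le\sum_{v,\,w\in\mathcal{W}}\left|\mathsf{Pr}[\mathsf{V}=v,\mathsf{W}=w]-\mathsf{Pr}[\mathsf{V}'=v,\mathsf{W}'=w]\right|=2\,\mathsf{SD}(\mathsf{V},\mathsf{W};\mathsf{V}',\mathsf{W}')\le2\varepsilon$. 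Substituting both bounds, $\frac12\sum_v\left|a_v/p-b_v/p'\right|\le\frac12\left(\varepsilon/p'+2\varepsilon/p'\right)=\tfrac{3\varepsilon}{2p'}\le\tfrac{2\varepsilon}{\mathsf{Pr}[\mathsf{W}'\in\mathcal{E}]}$, as claimed. I do not anticipate any genuine obstacle: this is the standard "conditioning inflates statistical distance by roughly $1/\mathsf{Pr}[\text{event}]$" fact, and the whole argument is two triangle inequalities plus monotonicity of $\mathsf{SD}$ under marginalization.
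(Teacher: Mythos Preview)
Your proof is correct and in fact obtains the slightly sharper bound $\tfrac{3\varepsilon}{2\,\mathsf{Pr}[\mathsf{W}'\in\mathcal{E}]}$ before you relax it to match the statement. The approach, however, differs from the paper's. The paper argues by contradiction using the distinguishing-event characterization of statistical distance: assuming the conditional distance exceeds $\varepsilon_0=2\varepsilon/\mathsf{Pr}[\mathsf{W}'\in\mathcal{E}]$, it picks a witness event $\Omega\subset\mathcal{V}$, forms the product event $\Omega\times\mathcal{E}$, and shows that $\mathsf{Pr}[(\mathsf{V},\mathsf{W})\in\Omega\times\mathcal{E}]-\mathsf{Pr}[(\mathsf{V}',\mathsf{W}')\in\Omega\times\mathcal{E}]>\varepsilon$, contradicting the joint bound. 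Your route is a direct $\ell_1$ computation with an add-and-subtract of $a_v/p'$; this is more self-contained and immediately gives the tighter constant, whereas the paper's argument is perhaps more conceptual (it makes transparent which single distinguisher breaks the joint bound). Both rely on the same two facts---$|p-p'|\le\varepsilon$ from marginalization and a bound on the restricted $\ell_1$ mass---so neither is deeper than the other. One tiny omission: you handle $p'=0$ but not $p=0$; since the conditional distribution $\mathsf{V}\mid\mathsf{W}\in\mathcal{E}$ is undefined when $p=0$, you should note that both $p,p'>0$ are implicitly assumed for the statement to make sense.
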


\smallskip
\noindent
%\textit{Stochastic codes.}
A {\em stochastic code} %(a.k.a. {\em coding scheme} in cryptography literature) 
has a randomised encoder and a deterministic decoder. The encoder $\mathsf{Enc}\colon\{0,1\}^m\times\mathcal{R}\rightarrow \{0,1\}^n$ uses local randomness $\mathsf{R}\leftarrow\mathcal{R}$ to encode a message $\mathsf{m}\in\{0,1\}^m$. The decoder is a deterministic function $\mathsf{Dec}\colon\{0,1\}^n\rightarrow \{0,1\}^m\cup\{\bot\}$. The decoding probability is defined over the encoding randomness $\mathsf{R}\leftarrow\mathcal{R}$. %In our construction, we use a stochastic code for protection against an (oblivious) adversarial erasure channel, where the adversary can choose any subset of shares to erase. The channel captures the provision of the incomplete share vector for reconstruction.
Stochastic codes are known to explicitly achieve the capacity of some special adversarial channels \cite{GS10}.

%%%%%%%%%%%%%%%%%%%%%%%%%%%%%%
%Definition of SA-codes as a generalisation of Linear codes
Affine sources play an important role in our constructions. We define a general requirement for the stochastic code used in our constructions. %the outer layer that facilitates an affine distribution of its random message.
\begin{definition}[\cite{preprint}]\label{def: affine} Let $\mathsf{Enc}\colon\{0,1\}^m\times\mathcal{R}\rightarrow \{0,1\}^n$ be the encoder of a stochastic code. We say it is a stochastic affine code if for any $\mathsf{r}\in\mathcal{R}$, the encoding function $\mathsf{Enc}(\cdot,\mathsf{r})$ specified by $\mathsf{r}$ is an affine function  %with the offset term determined by $\mathsf{r}$, 
of the message. That is we have 
\begin{align*}
\mathsf{Enc}(\mathsf{m},\mathsf{r})=\mathsf{m}G_\mathsf{r}+\Delta_\mathsf{r},
\end{align*}
where $G_\mathsf{r}\in\{0,1\}^{m\times n}$ %is a matrix given by the code 
and $\Delta_\mathsf{r}\in\{0,1\}^n$ are specified by %determined by the code and 
the randomness $\mathsf{r}$.

%When the randomness set $\mathcal{R}$ is empty (the code is deterministic), we recover the encoder of a linear code.
\end{definition}

We then adapt a construction in \cite{GS10} to obtain the following capacity-achieving Stochastic Affine-Erasure Correcting Code (SA-ECC). %with asymptotic optimal parameters by adapting a construction in \cite{GS10}, which uses an erasure correcting code and a list of pseudo-random objects (generators, permutations and samplers). 
In particular, we show for any $p\in[0,1)$, there is an explicit stochastic affine code that corrects $p$ fraction of adversarial erasures and achieves the rate $1-p$. %(see Appendix~\ref{apdx: SA-ECC} for more details). 

%\mnote{Check for broken references and ALL LaTeX and BibTeX errors. Do a spell check (American English).}
% \textcolor{blue}{The details of SA-ECC are removed. Do we need them back in Appendix?}

\begin{lemma}[\cite{preprint}]
\label{th: SA-ECC} 
For every $p\in[0,1)$, and every $\xi>0$, there is an efficiently encodable and decodable stochastic affine code $(\mathsf{Enc},\mathsf{Dec})$ with rate $R=1-p-\xi$ such that for every $\mathsf{m}\in\{0,1\}^{NR}$ and erasure pattern of at most $p$ fraction, we have $\mathsf{Pr}[\mathsf{Dec}(\widetilde{\mathsf{Enc}(\mathsf{m})})=\mathsf{m}]\geq 1-\exp(-\Omega(\xi^2N/\log^2 N))$, where $\widetilde{\mathsf{Enc}(\mathsf{m})}$ denotes the partially erased random codeword and $N$ denotes the length of the codeword.
%There is an explicit stochastic affine ECC achieving rate $1-p$ for correcting $p$ fraction of oblivious erasures. 
\end{lemma}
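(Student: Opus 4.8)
The plan is to adapt the capacity-achieving stochastic-code construction of Guruswami and Smith \cite{GS10} to the erasure setting (which is friendlier than the error setting, since the decoder sees the erasure locations), and to observe that the resulting encoder is already affine in the message, so no extra work is needed to get the ``stochastic affine'' property. Fix $p\in[0,1)$ and $\xi>0$. Take an explicit, efficiently decodable binary linear code $C_0$ of rate $1-p-\xi/2$ that corrects a $(p+\xi/4)$-fraction of \emph{random} erasures except with probability $\exp(-\Omega(\xi^2 N/\log^2 N))$ over the erasure pattern (a capacity-achieving code for the binary erasure channel). The encoder picks a short random seed $s$ from a family $\{\pi_s\}$ of coordinate permutations, forms the payload $\pi_s(C_0(\mathsf{m}))$, and transmits it together with a low-rate, length-$o(N)$ encoding $W(s)$ of the seed, using the control-information mechanism of \cite{GS10} so that $s$ is recoverable from the surviving portion of the codeword. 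The total length is $N=|\mathsf{m}|/(1-p-\xi/2)+o(N)$, so the rate is $1-p-\xi/2-o(1)\geq 1-p-\xi$ for $N$ large. Affinity is automatic: for every fixed value $\mathsf{r}=(s,\text{control randomness})$ the block $W(s)$ is a constant (it contributes the offset $\Delta_\mathsf{r}$), while $\mathsf{m}\mapsto\pi_s(C_0(\mathsf{m}))$ is the composition of the linear encoder $C_0$ with a permutation matrix, hence linear (it contributes $G_\mathsf{r}$); thus $\mathsf{Enc}(\mathsf{m},\mathsf{r})=\mathsf{m}G_\mathsf{r}+\Delta_\mathsf{r}$ as required by Definition \ref{def: affine}.

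For decoding: given the partially erased codeword, the decoder first recovers $s$ from $W(s)$, reconstructs $\pi_s$, strips it off the payload coordinates, and runs the erasure decoder of $C_0$. Because at most $pN$ coordinates are erased, at most $pN$ of the payload coordinates are erased, i.e.\ a $p(1+o(1))\leq p+\xi/8$ fraction of them; and the random permutation $\pi_s$ moves this arbitrary pattern to one that — conditioned on $s$ — is (close to) a uniformly random subset of that size. Since $C_0$ corrects a $(p+\xi/4)$-fraction of random erasures except with probability $\exp(-\Omega(\xi^2 N/\log^2 N))$, a union bound over the (rare) failure of the control-information recovery and the (rare) event that the permuted pattern is too heavy for $C_0$ gives $\mathsf{Pr}[\mathsf{Dec}(\widetilde{\mathsf{Enc}(\mathsf{m})})=\mathsf{m}]\geq 1-\exp(-\Omega(\xi^2 N/\log^2 N))$. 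Full details are as in \cite{preprint,GS10}.

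\textbf{Main obstacle.} Two points carry the real work. First, one cannot use a uniformly random permutation (that would need $\Theta(N\log N)$ bits of randomness and wreck the rate); instead one must use a pseudorandom or (almost) $t$-wise-independent permutation family with polylogarithmic seed length and still argue that it randomizes an \emph{arbitrary} weight-$\le pN$ pattern enough that $C_0$ decodes — this concentration step (Chernoff over $N/\mathrm{polylog}(N)$ blocks) is precisely where the $\log^2 N$ loss and the $\exp(-\Omega(\xi^2 N/\log^2 N))$ form of the error come from. Second, and more delicate, the seed itself must survive adversarial erasures: since the control block is $o(N)$ in length, a naive fixed placement could be entirely erased, so one needs the \cite{GS10} device of spreading the control information so that it is recoverable from any $(1-p)$-fraction of a suitably chosen region (or a two-level scheme: a tiny robustly-transmitted master key, together with the adversarial-erasure slack built into $C_0$, so that when the adversary ``wastes'' erasures on the control region the payload is light enough to be decoded directly). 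Establishing this robustness with only $o(N)$ rate overhead, except with probability $\exp(-\Omega(\xi^2 N/\log^2 N))$, is the crux.
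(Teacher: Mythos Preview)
The paper does not actually prove this lemma; it is quoted from \cite{preprint} with only the one-line remark that the construction is obtained by ``adapt[ing] a construction in \cite{GS10}'' to the erasure setting. Your proposal is therefore not being compared against a proof in this paper but against a pointer to one.

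That said, your sketch is faithful to the indicated approach: a linear inner code good against \emph{random} erasures, composed with a short-seed pseudorandom permutation to convert adversarial erasure patterns into effectively random ones, together with the Guruswami--Smith control-information mechanism to transmit the seed robustly. Your observation that affinity comes for free---the permuted linear encoding gives $G_\mathsf{r}$ and the fixed control block gives $\Delta_\mathsf{r}$---is exactly the point that makes this a \emph{stochastic affine} code rather than merely a stochastic one, and is the only new ingredient beyond \cite{GS10} that \cite{preprint} needs. You also correctly identify where the $\log^2 N$ in the exponent originates (blockwise concentration with $\mathrm{polylog}$-sized blocks and $\mathrm{polylog}$-length seeds) and why the seed-recovery step is the delicate part.

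One small caution: in your first paragraph you say the permuted pattern is ``(close to) a uniformly random subset,'' which overstates what a short-seed permutation family gives you; the actual argument (as you note in the obstacle section) goes through per-block erasure-count concentration rather than global closeness to the uniform pattern. As long as you keep that straight, the proposal is sound and matches the route the paper points to.
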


%%%%%%%%%%%%%%%%%%%%%%%%%%%%%%%%%%%%%%%%%%%%%%%%
%%%%%%%%%%%%%%%%%%%%%%%%%%%%%%%%%%%%%%%%%%%%%%%%
%%%%%%%%%%%%%%%%%%%%%%%%%%%%%%%%%%%%%%%%%%%%%%%%
%For a coding scheme consisting of a pair $(Enc(), Dec())$ %} is a randomised code with correctness property and a detection property in the sense that a vector which is not a valid codeword is decoded to a symbol

%%%%%%%%%%%%%%%%%%%%%%%
% NMSSS
%%%%%%%%%%%%%%%%%%%%%%%
\section{Affine Leakage-Resilient Secret Sharing}\label{sec: model} %$(t,r,P)$-ramp scheme  Non-malleable secret sharing in adaptive tampering model
In this section, we study a new leakage model for secret sharing. All results are stated as $(t,r,P)$-ramp schemes. The special results concerning $r$-out-of-$P$ threshold schemes can be recovered through letting $t=r-1$. We start with recalling the  {\em Leakage-Resilient Storage (LRS)} model of \cite{DDV10}.

%\begin{figure}
%\centerline{\includegraphics[scale=0.35]{LRS.png}}
%\caption{ \label{fig: wiretap}The {\em Leakage-Resilient Storage} model of [DDV10]}
%\end{figure}

A leakage-resilient storage scheme is a pair $(\mathsf{Enc},\mathsf{Dec})$, where $\mathsf{Enc}\colon\{0,1\}^\ell\times\mathcal{R}\rightarrow\{0,1\}^N$ is a randomised, efficiently computable function ($\mathcal{R}$ is the randomness set) and $\mathsf{Enc}\colon\{0,1\}^\ell\rightarrow\{0,1\}^N$ is a deterministic, efficiently computable function. %Security of such a scheme is defined as follows. 
Consider the following game between an adversary $\mathcal{A}$ and an oracle $\mathcal{O}$.
\begin{enumerate}
\item The adversary $\mathcal{A}$ chooses a pair of messages $\mathsf{m}_0,\mathsf{m}_1\in\{0,1\}^\ell$ and sends them to the oracle $\mathcal{O}$.
\item The oracle $\mathcal{O}$ chooses a random bit $b\in\{0,1\}$ and compute $\mathsf{Enc}(\mathsf{m}_b)$.
\item The following is executed $\theta$ times, for $i=1,\ldots,\theta$:
     \begin{enumerate}
     \item $\mathcal{A}$ selects a function $l_i\colon\{0,1\}^N\rightarrow\{0,1\}^{c_i}$ from a set $\mathcal{L}$ of functions, and sends it to $\mathcal{O}$,
     \item $\mathcal{O}$ sends $l_i(\mathsf{Enc}(\mathsf{m}_b))$ to $\mathcal{A}$. This is called $\mathcal{A}$ retrieves $c_i$ bits through $\mathcal{L}$-leakage.
     \end{enumerate}
%\item The adversary $\mathcal{A}$ outputs $b'$. We say that the adversary $\mathcal{A}$ wins the game if $b'=b$.
\end{enumerate}
We will call the adversary $\mathcal{A}$ a $\beta$-bounded $\mathcal{L}$-leakage adversary if $\sum_{i=1}^\theta c_i\leq \beta$. %The scheme is statistically secure against $\beta$-bounded $\mathcal{L}$-leakage adversary if all such $\mathcal{A}$ wins the game with probability only a negligibly small amount greater than one half. 

%Note that a secret sharing scheme has privacy requirement with respect to a given access structure. Leakage resiliency is an add-on property that we require the scheme to additionally satisfy. 
We consider statistical secret sharing, where the privacy with respect to a given access structure is defined using indistinguishability of unauthorised set of shares for a pair of secrets. The privacy adversary's choice of shares may be non-adaptive or adaptive, which become different notions when the privacy error is non-zero. We want to consider leakage-resiliency for statistical secret sharing on top of the privacy with respect to a given access structure. We now view the full share vector of the secret sharing for $P$ players as an encoding of a secret $\mathsf{s}\in\{0,1\}^\ell$ (the sharing algorithm being the randomised encoder) in the codeword space $\{0,1\}^N$, where $N=P\log q$ and $q$ is the share size. A non-compartmentalized leakage model means that the set $\mathcal{L}$ contains leakage functions whose outputs can depend on all parts of the full share vector. Inspired by the non-compartmentalized tampering models considered in the non-malleable codes literature, we study the set $\mathcal{L}_\mathsf{affine}$ of $\mathbb{F}_2$-affine leakage functions. Each output bit of a $\mathbb{F}_2$-affine leakage function $l\colon\{0,1\}^N\rightarrow\{0,1\}^{c}$ is an affine function of the input in $\{0,1\}^N$. 

%integrate the above leakage-resilient storage model into the statistical secret sharing and define a leakage-resilient secret sharing secure against $\beta$-bounded $\mathcal{L}$-leakage adversary.

\begin{definition} \label{def: lrSSS} For integers $0\leq t<r\leq P$, a $(\varepsilon(N),\delta(N))$-statistical secret sharing for ramp parameters $(t,r,P)$ that is leakage-resilient against a $\beta$-bounded $\mathcal{L}_\mathsf{affine}$-leakage adversary %, with respect to a family $\mathcal{F}$ of tampering functions from $\{0,1\}^N$ to $\{0,1\}^N$ with $P|N$, %with relative threshold pair $(\tau,\rho)$ 
is a pair of polynomial-time algorithms $(\mathsf{Share},\mathsf{Recst})$,
$$
\mathsf{Share}\colon\{0,1\}^{\ell(N)}\times\mathcal{R}\rightarrow\{0,1\}^N,\ P|N
$$
where $\mathcal{R}$ denotes the randomness set, and for any reconstruction set $R\subset \mathcal{P}$ of size $|R|=r$, %there is a
$$
\mathsf{Recst}_R\colon\left(\{0,1\}^{N/P}\right)^r\rightarrow\{0,1\}^{\ell(N)}\cup\{\bot\},
$$ 
%$\widetilde{\{0,1\}^N}$ denotes the subset of $(\{0,1\}\cup\{?\})^N$ with at least $N\rho$ components not equal to the erasure symbol ``$?$''
that satisfy the following properties.\begin{itemize}

\item Correctness: Given any $r$ out of the $P$ blocks of the share vector $\mathsf{Share}(\mathsf{s})$, the reconstruct algorithm $\mathsf{Recst}$ reconstructs the secret $\mathsf{s}$ with probability at least $1-\delta(N)$. 

%When $\delta(N)=0$, we say the scheme has perfect reconstruction.

\item Privacy and leakage-resiliency: %(non-adaptive/adaptive): 

 \begin{itemize}
    \item Non-adaptive adversary: for any pair $\mathsf{s}_0,\mathsf{s}_1\in\{0,1\}^{\ell(N)}$ of secrets, any $A\subset \mathcal{P}$ of size $|A|\leq t$, %=N\tau
    any affine leakage function $l\colon\{0,1\}^N\rightarrow\{0,1\}^{c}$ with $c\leq\beta$,      \begin{equation}\label{eq: WtII security}
      \mathsf{SD}(l(\mathsf{Share}(\mathsf{s}_0)),\mathsf{Share}(\mathsf{s}_0)_{A};l(\mathsf{Share}(\mathsf{s}_1)),\mathsf{Share}(\mathsf{s}_1)_{A})\leq \varepsilon(N),
     \end{equation}
     where $\mathsf{Share}(\mathsf{s})_{A}$ denotes the projection of $\mathsf{Share}(\mathsf{s})\in\left(\{0,1\}^{N/P}\right)^P$ on the blocks specified by $A$.

    \item Adaptive adversary: 
    For any $\mathsf{s}_0,\mathsf{s}_1\in \{0,1\}^{\ell(N)}$ and any adaptive adversary $\mathcal{A}_{\beta,\mathsf{affine}}$ that is $\beta$-bounded and affine, %interacting with a $t/P$-bounded leakage oracle $\mathcal{O}_{t/P}(\cdot)$,
    \begin{equation}\label{eq: adaptive privacy}
    \mathsf{SD}\left (\mathsf{View}_{\mathcal{A}_{\beta,\mathsf{affine}}}^{\mathcal{O}(\mathsf{Share}(\mathsf{s}_0))};
    \mathsf{View}_{\mathcal{A}_{\beta,\mathsf{affine}}}^{\mathcal{O}(\mathsf{Share}(\mathsf{s}_1))}\right )\leq \varepsilon(N),
    \end{equation}
    where $\mathsf{View}_{\mathcal{A}_{\beta,\mathsf{affine}}}^{\mathcal{O}(\mathsf{Share}(\mathsf{s}))}$ denotes the view of the adversary $\mathcal{A}_{\beta,\mathsf{affine}}$ after playing the LRS game described above with the oracle $\mathcal{O}$ and, at any time, can adaptively select up to $t$ shares to append to the messages retrieved from $\mathcal{O}(\mathsf{Share}(\mathsf{s}))$.
   \end{itemize}
   
%When $\varepsilon(N)=0$, we say the scheme has perfect privacy. 

\end{itemize}
%The difference $\gamma=\rho-\tau$ is called the relative gap, since $N\gamma=k(N)-t(N)$ is the threshold gap of the scheme. 
When it is clear from the context, instead of $\varepsilon(N), \delta(N), \ell(N)$, we write $\varepsilon,\delta,\ell$. %When the parameters are not specified, we call an affine $(\varepsilon,\delta)$-nmSSS simply an affine nmSSS.
\end{definition}

%In the sequel, we let $\mathcal{F}=\mathcal{F}_{\mathsf{affine}}$ and give two constructions. When the parameters $t,r,P$ are clear from the context, we simply call it an affine $(\varepsilon,\delta)$-nmSSS or an affine nmSSS.
In the sequel, we simply refer to the objects defined in Definition \ref{def: lrSSS} non-adaptive/adaptive affine leakage-resilient secret sharing.

%%%%%%%%%%%%%%%%%%%%%%%%%%%%%%%%%
%3.1
\subsection{Non-adaptive Affine Leakage-Resilient Secret Sharing}
%\textcolor{blue}{non-adaptive affine leakage-resilient SS}
We first give a construction of non-adaptive affine leakage-resilient secret sharing.

\begin{theorem}\label{th: non-adaptive} %\mnote{Use $\backslash$colon rather than : to avoid the gap. Compare: $\ext\colon\zo^d$ vs $\ext:\zo^d$.}
Let $\mathsf{Ext} \colon \{0,1\}^d\times\{0,1\}^n\rightarrow\{0,1\}^\ell$ be a linear strong seeded $(n-\tau N-\beta,\frac{\varepsilon}{8})$-extractor and $\mathsf{Ext^{-1}}(\mathsf{z},\cdot) \colon \{0,1\}^\ell\times\mathcal{R}_1\rightarrow\{0,1\}^n$ be the inverter of the function $\mathsf{Ext}(\mathsf{z},\cdot)$ that maps an $\mathsf{s}\in\{0,1\}^\ell$ to one of its pre-images chosen uniformly at random. 
Let $(\mathsf{SA\mbox{-}ECCenc},\mathsf{SA\mbox{-}ECCdec})$ be a stochastic affine-erasure correcting code with the encoder $\mathsf{SA\mbox{-}ECCenc}\colon\{0,1\}^{d+n}\times\mathcal{R}_2\rightarrow\{0,1\}^N$ that tolerates $N-\rho N$ bit erasures and decodes with success probability at least $1-\delta$.   
Then the following coding scheme $(\mathsf{Share},\mathsf{Recst})$ is a non-adaptive affine leakage-resilient secret sharing with security parameters $\varepsilon$, $\delta$, leakage bound $\beta$ and ramp parameters $(t,r,P)$ such that $\tau=t/P$, $\rho=r/P$. %$\tau=\frac{t}{N}$ and $\rho=\frac{k}{N}$.
%privacy, for $\tau=\frac{t}{N}$ and reconstructability, for $\rho=\frac{k}{N}$. 
%(\textcolor{red}{define SSS with $(t,\varepsilon)$-privacy which generalises perfect privacy)}
\begin{align*}
\left\{
\begin{array}{ll}
\mathsf{Share}(\mathsf{s})&=\mathsf{SA\mbox{-}ECCenc(Z||Ext^{-1}}(\mathsf{Z},\mathsf{s})),\mathrm{where}\  \mathsf{Z}\stackrel{\$}{\leftarrow}\{0,1\}^d;\\
\mathsf{Recst}(\tilde{\mathsf{y}})&=\mathsf{Ext(\mathsf{z},\mathsf{x}), \mathrm{where}\  (\mathsf{z}||\mathsf{x})=SA\mbox{-}ECCdec}(\tilde{\mathsf{y}}).
\end{array}
\right.
\end{align*}
Here $\tilde{\mathsf{y}}$ denotes an incomplete version of a share vector $\mathsf{y}\in\{0,1\}^N$ with some of its components replaced by erasure symbols.
\end{theorem}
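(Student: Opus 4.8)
The plan is to verify correctness and the non-adaptive privacy/leakage-resiliency condition of Definition~\ref{def: lrSSS} in turn, reducing each to the stated properties of the three building blocks: the linear strong seeded extractor $\mathsf{Ext}$, its (linear, hence efficient) inverter $\mathsf{Ext}^{-1}$, and the stochastic affine-erasure-correcting code $(\mathsf{SA\mbox{-}ECCenc},\mathsf{SA\mbox{-}ECCdec})$. Correctness is the easy direction: a reconstruction set $R$ of size $r=\rho P$ corresponds to knowing $\rho N$ out of the $N$ bits of the codeword, i.e.\ an erasure pattern of $N-\rho N$ bits; by Lemma~\ref{th: SA-ECC} the decoder recovers $(\mathsf{z}\|\mathsf{x})$ with probability at least $1-\delta$, and then the inversion property of $\mathsf{Ext}^{-1}$ guarantees $\mathsf{Ext}(\mathsf{z},\mathsf{x})=\mathsf{Ext}(\mathsf{z},\mathsf{Ext}^{-1}(\mathsf{z},\mathsf{s}))=\mathsf{s}$ deterministically, so $\mathsf{Recst}_R(\widetilde{\mathsf{Share}(\mathsf{s})})=\mathsf{s}$ with probability at least $1-\delta$.

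For privacy and leakage-resiliency, I would fix an arbitrary secret $\mathsf{s}$, an unauthorized set $A$ with $|A|\le t=\tau P$, and an affine leakage function $l\colon\{0,1\}^N\to\{0,1\}^c$ with $c\le\beta$, and track how the adversary's view $(l(\mathsf{Share}(\mathsf{s})),\mathsf{Share}(\mathsf{s})_A)$ depends on $\mathsf{s}$. The first key move is the standard trick from \cite{preprint}: instead of analyzing a worst-case $\mathsf{s}$, start from a \emph{uniform} secret $\mathsf{U}_\ell$ and a uniform independent seed $\mathsf{Z}\leftarrow\{0,1\}^d$; by the uniformity clause of the inverter, $(\mathsf{Z}\,\|\,\mathsf{Ext}^{-1}(\mathsf{Z},\mathsf{U}_\ell))$ is distributed as $\mathsf{U}_{d+n}$. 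Now condition on the concrete value of the SA-ECC randomness $\mathsf{r}_2\in\mathcal{R}_2$, so that (by Definition~\ref{def: affine}) $\mathsf{Share}$ acts as the affine map $w\mapsto wG_{\mathsf{r}_2}+\Delta_{\mathsf{r}_2}$ on $w=(\mathsf{Z}\,\|\,\mathsf{Ext}^{-1}(\mathsf{Z},\mathsf{U}_\ell))$. Then the adversary's view --- the $|A|\log q\le\tau N$ bits $\mathsf{Share}(\cdot)_A$ together with the $c\le\beta$ bits $l(\mathsf{Share}(\cdot))$ --- is an affine function of $w$, i.e.\ imposes at most $\tau N+\beta$ affine constraints. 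Conditioned further on the observed value $\mathsf{v}$ of this view, $w$ (hence the pair $(\mathsf{Z},\mathsf{Ext}^{-1}(\mathsf{Z},\mathsf{U}_\ell))$) becomes uniform on an affine subspace of $\{0,1\}^{d+n}$ of dimension at least $d+n-\tau N-\beta$. The slightly delicate point is to pass from ``$w$ uniform on an affine subspace'' to ``the $\mathsf{X}$-part is an affine $(n,\ge n-\tau N-\beta)$-source for a typical fixing of the seed $\mathsf{Z}=\mathsf{z}$'': this needs an averaging argument over the seed, using that $\mathsf{Ext}$ is a \emph{strong} extractor so that most seeds are good, and handling the event that the fixed seed lands outside the projection. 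Once we are there, the strong-extractor guarantee of $\mathsf{Ext}$ with entropy threshold $n-\tau N-\beta$ and error $\varepsilon/8$ says $\mathsf{Ext}(\mathsf{z},\mathsf{X})$ is $(\varepsilon/8)$-close to $\mathsf{U}_\ell$ jointly with the adversary's view; unwinding, the uniform secret is (close to) independent of the view, which is exactly what is needed to conclude indistinguishability of any two fixed secrets $\mathsf{s}_0,\mathsf{s}_1$.

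Concretely, the last step would assemble the bound: indistinguishability of $\mathsf{s}_0$ and $\mathsf{s}_1$ follows because each is (close to) the conditional distribution ``$\mathsf{U}_\ell$ given view $=\mathsf{v}$'', and Lemma~\ref{lem: conditioning} lets me convert the joint closeness $\mathsf{SD}(\mathsf{U}_\ell,\text{view};\mathsf{U}_\ell',\text{view})$ into closeness of the conditionals at the cost of a factor roughly $2/\Pr[\text{view}=\mathsf{v}]$, which is absorbed by averaging over $\mathsf{v}$ and accounts for the loss from $\varepsilon/8$ up to $\varepsilon$. I expect the main obstacle to be exactly the bookkeeping around the seed $\mathsf{Z}$: because $\mathsf{Z}$ is part of the codeword, the affine constraints coming from $A$ and $l$ can couple $\mathsf{Z}$ with $\mathsf{X}=\mathsf{Ext}^{-1}(\mathsf{Z},\mathsf{U}_\ell)$, so one cannot naively claim $\mathsf{Z}$ stays uniform and independent; the fix is to argue that for a $(1-O(\varepsilon))$-fraction of seed values the conditional distribution of $\mathsf{X}$ is still an affine source of the right entropy, invoke strongness to combine the ``bad seed'' loss with the extractor error, and only then apply the extractor bound. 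This is the place where the ``rather delicate analysis'' mentioned in the introduction is needed; everything else is a routine chaining of statistical-distance inequalities and the SA-ECC erasure bound. The whole argument mirrors the optimal binary non-adaptive secret sharing proof of \cite{preprint}, with $\tau N$ replaced by $\tau N+\beta$ to accommodate the $\beta$ extra affine leakage bits.
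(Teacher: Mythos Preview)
Your proposal tracks the paper's argument closely up to the final assembly, but that last step contains a real gap that is exactly the point of the construction. You argue that, after the good-seed analysis, $(\mathsf{S},\text{view})\stackrel{\varepsilon/8}{\sim}(\mathsf{U}_\ell,\text{view})$ with $\mathsf{U}_\ell$ independent of the view, and then invoke Lemma~\ref{lem: conditioning} ``at the cost of a factor roughly $2/\Pr[\text{view}=\mathsf{v}]$, which is absorbed by averaging over~$\mathsf{v}$''. But to obtain \emph{worst-case} indistinguishability $\mathsf{SD}(\text{view}(\mathsf{s}_0);\text{view}(\mathsf{s}_1))\le\varepsilon$ you must condition on the \emph{secret}, not on the view; Lemma~\ref{lem: conditioning} then costs a factor $2/\Pr[\mathsf{S}=\mathsf{s}]=2^{\ell+1}$, giving only $O(2^\ell\varepsilon)$. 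Conditioning on the view and averaging just returns the joint bound and says nothing about fixed secrets. This $2^\ell$ blow-up is precisely what the paper flags in (\ref{eq: exponential error}) and explicitly says the linear seeded extractor is used to \emph{avoid}.

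The paper's proof instead invokes Lemma~\ref{th: extractor property} (imported from \cite{preprint}) as a black box: it compares, for two fixed messages $\mathsf{m},\mathsf{m}'$, the distributions $f_A(\mathsf{Z},\mathsf{X})$ and $f_A(\mathsf{Z}',\mathsf{X}')$ of an affine function applied to the respective pre-images $(\mathsf{U}_d,\mathsf{U}_n)\,|\,\mathsf{Ext}(\mathsf{U}_d,\mathsf{U}_n)=\mathsf{m}$ and $=\mathsf{m}'$, and bounds their distance by $8\varepsilon$ directly. The idea you are missing is that linearity is used in a sharper way than merely ``the conditional source is affine'': for each good seed $\mathsf{z}$, the linear map $\mathsf{Ext}(\mathsf{z},\cdot)$ restricted to the relevant affine source is surjective, so its fibres over $\mathsf{m}$ and $\mathsf{m}'$ are parallel cosets of the \emph{same} kernel; pushing these through the affine view map $f_A(\mathsf{z},\cdot)$ gives either identical view distributions or disjoint ones, and the strong-extractor error bounds the fraction of seeds in the latter case. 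Thus on good seeds the two worst-case views are \emph{equal}, not merely $\varepsilon$-close, and all the error sits in the bad-seed probability---no $2^\ell$ factor appears. Your sketch gestures at a good/bad seed split but then falls back to the generic conditioning route; to match the stated bound you need this exact-equality-on-good-seeds argument (or simply cite Lemma~\ref{th: extractor property}).
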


The proof is similar to the proof for the optimal construction of non-adaptive binary ramp scheme in \cite{preprint} and is given in Appendix B for completeness. We provide the intuition  of the construction here, starting with the high-level idea with an affine extractor $\mathsf{aExt}$, which is shared by our new construction for adaptive adversary in the next subsection. 

Intuitively, the $\mathsf{SA\mbox{-}ECC}$ enables the reconstruction from any $\rho N$ bits. The privacy for any $\tau N$ shares is not as straightforward. Imagine we share a uniformly distributed random secret $\mathsf{S}\stackrel{\$}{\leftarrow}\{0,1\}^\ell$ and want to find out the distribution of the secret conditioned on the adversary's view $\mathsf{V}$, which is consist of up to $t$ shares and up to $\beta$ bits retrieved through applying an affine leakage function. Intuitively, if the distribution of the uniform secret conditioned on $\mathsf{V}$ remains uniform, we have privacy and leakage-resiliency. Since we are using extractors to extract uniform distribution, the focus is then to make sure the source has enough entropy and is of the right structure (affine source). 
According to the definition of an inverter, if the secret has uniform distribution $\mathsf{U}_\ell$, then the inverter outputs a uniform distribution $\mathsf{U}_n$. 
$$
\mathsf{U}_n\stackrel{\mu}{\sim}\mathsf{aExt}^{-1}(\mathsf{U}_\ell).
$$
On the other hand, in the construction, the source is the message of the $\mathsf{SA\mbox{-}ECC}$, which is an affine function. Obtaining any $\tau N$ bits of the $\mathsf{SA\mbox{-}ECC}$ codeword is equivalent to applying an affine function of $\tau N$-bit output to the source. Moreover, applying an affine leakage function $l\colon\{0,1\}^N\rightarrow\{0,1\}^\beta$ to the $\mathsf{SA\mbox{-}ECC}$ codeword is equivalent to applying the composition $l\circ\mathsf{SA\mbox{-}ECC}$ to the source.
An affine function induces a partition of the space $\{0,1\}^n$ into cosets each corresponding to a particular value of the adversary's view $\mathsf{V}=\mathsf{v}$. 
Given that the adversary observes $\mathsf{V}=\mathsf{v}$, the message of $\mathsf{SA\mbox{-}ECC}$ can only be one element in the coset corresponding to $\mathsf{v}$. This confirms that the source is a flat distribution on an affine subspace in $\{0,1\}^n$, hence an affine source. The entropy of the affine source is then the dimension of the affine space, which is at least $n-\tau N-\beta$. 
Since this is true for any $\mathsf{V}=\mathsf{v}$, one has the following
$$
(\mathsf{V},\mathsf{aExt}(\mathsf{U}_n))\stackrel{\varepsilon_A}{\sim} (\mathsf{V},\mathsf{U}_\ell),
$$ %=(\mathsf{V},\mathsf{aExt}(\mathsf{aExt}^{-1}(\mathsf{U}_\ell)))
where $\varepsilon_A$ is the error (measured in statistical distance) of the extractor $\mathsf{aExt}$.
Finally, the privacy and leakage-resiliency error is the statistical distance between two views $\mathsf{V}_0$ and $\mathsf{V}_1$ that are corresponding to a pair of secrets $\mathsf{s}_0$ and $\mathsf{s}_1$, respectively. %In particular, there is no uniform distribution of the secret. 
One can use the above bound for uniform secret to obtain the following bound for any secret $\mathsf{s}$ (using Lemma~\ref{lem: conditioning} for example).
\begin{equation}\label{eq: exponential error}
(\mathsf{V}|\mathsf{aExt}(\mathsf{U}_n)=\mathsf{s})\stackrel{2^\ell\cdot\varepsilon_A}{\sim} \mathsf{V}.
\end{equation}
Observe that $\mathsf{V}_0=(\mathsf{V}|\mathsf{aExt}(\mathsf{U}_n)=\mathsf{s}_0)$ and $\mathsf{V}_1=(\mathsf{V}|\mathsf{aExt}(\mathsf{U}_n)=\mathsf{s}_1)$. They are both $(2^\ell\cdot\varepsilon_A+\mu)$-close to the distribution of $\mathsf{V}$. %, which is some $t$ components of $\mathsf{ECCenc}(\mathsf{U}_n)$. 
It then follows that the privacy and leakage-resiliency error is $\varepsilon=2^{\ell+1}\cdot\varepsilon_A+2\mu$.

The construction in Theorem \ref{th: non-adaptive} uses a linear seeded extractor instead of an affine extractor to avoid the exponential grow (from $\varepsilon_A$ to $2^\ell\cdot\varepsilon_A$) of errors (see the proof in \cite{preprint}). But also because of the use of the seeded extractor, which only provides security when the seed is independent of the source, one can only prove privacy and leakage-resiliency for non-adaptive adversary.

\smallskip

We now analyze the information ratio of the non-adaptive affine leakage-resilient secret sharing (we let $t=r-1$ to have a threshold secret sharing) constructed in Theorem \ref{th: non-adaptive} when instantiated with the $\mathsf{SA\mbox{-}ECC}$ from Lemma \ref{th: SA-ECC} and the $\mathsf{Ext}$ from Lemma \ref{lem: Ext}. The secret length is $\ell=n-\tau N-\beta-O(d)$, where the seed length is $d=O(\log^3(2n/\varepsilon))$. The $\mathsf{SA\mbox{-}ECC}$ encodes $d+n$ bits to $N$ bits and with coding rate $R_{ECC}=\rho-\xi$ for a small $\xi$ determined by $\delta$ (satisfying the relation $\delta=\exp(-\Omega(\xi^2N/\log^2 N))$ according to Lemma \ref{th: SA-ECC}). We then have $n=N(\rho-\xi)-d$, resulting in the information ratio   
\begin{align*}
\frac{N/P}{\ell}=\frac{N/P}{n-\tau N-\beta-O(d)}=\frac{N/P}{N(\rho-\xi)-\tau N-\beta-O(d)}=\frac{N/P}{N(\rho-\tau)-\beta-(\xi N+O(d))},
%\rho-\tau-(\xi+\frac{O(d)}{N})
\end{align*}
where by letting $t=r-1$, we have $\rho-\tau=(r-t)/P=1/P$ and hence the information ratio is $\frac{\ell+\beta+o(\ell)}{\ell}$.
%Since the seed length $d$ is negligible in $N$ for any privacy error $\varepsilon$, and the rate deficiency $\xi$ of the $\mathsf{SA\mbox{-}ECC}$ can be chosen arbitrarily small at the expense of a bigger $N$, we then conclude that the rate $\rho-\tau$ is achieved.

\begin{corollary}\label{cor: non-adaptive} There is a non-adaptive $r$-out-of-$P$ statistical $\beta$-bounded affine leakage-resilient secret sharing for any constant $r$ and $P$ with secret length $\ell$ and information ratio $\frac{\ell+\beta+o(\ell)}{\ell}$ and the security parameters $\varepsilon$ and $\delta$ are both negligible in $\ell$. %for large $N$. 
\end{corollary}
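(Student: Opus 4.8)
The plan is to obtain Corollary~\ref{cor: non-adaptive} purely by instantiating the generic construction of Theorem~\ref{th: non-adaptive} with concrete building blocks and then choosing the free parameters so that all overhead terms become $o(\ell)$ while both error quantities stay negligible. Concretely, I would take $\mathsf{Ext}$ to be the linear strong seeded extractor of Lemma~\ref{lem: Ext} with source length $n$, entropy requirement $k=n-\tau N-\beta$, and error $\varepsilon/8$; this has seed length $d=O(\log^3(n/\varepsilon))$ and output (hence secret) length $\ell=k-O(d)=n-\tau N-\beta-O(d)$, and its inverter is efficient because the extractor is linear. For the stochastic affine-erasure-correcting code I would use Lemma~\ref{th: SA-ECC} with erasure fraction $p=1-\rho$ and slack $\xi$, so that it encodes $d+n$ bits into $N$ bits at rate $R_{ECC}=\rho-\xi$, which forces $n=N(\rho-\xi)-d$, and decodes with failure probability $\delta=\exp(-\Omega(\xi^2N/\log^2N))$. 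Specializing to the threshold case by setting $t=r-1$ gives $\rho-\tau=(r-t)/P=1/P$.

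Substituting these relations, the secret length becomes
\[
\ell=n-\tau N-\beta-O(d)=N(\rho-\xi)-d-\tau N-\beta-O(d)=N/P-\beta-\xi N-O(d),
\]
so $N/P=\ell+\beta+(\xi N+O(d))$ and the information ratio is $\frac{N/P}{\ell}=\frac{\ell+\beta+(\xi N+O(d))}{\ell}$. It then remains to argue that $\xi N+O(d)=o(\ell)$ while $\varepsilon,\delta$ are simultaneously negligible in $\ell$. I would take $\xi=1/\log N$, so that $\xi N=N/\log N=o(N)$ and $\delta=\exp(-\Omega(N/\log^4N))$ is negligible; and $\varepsilon=2^{-\ell^{1/4}}$ (indeed any $2^{-\ell^{\alpha}}$ with $\alpha<1/3$ works), so that $\varepsilon$ is negligible while $d=O(\log^3(n/\varepsilon))=O(\ell^{3\alpha})=o(\ell)$. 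For constant $r,P$ and $\beta$ in the regime where $\ell=\Theta(N)$, both $\xi N$ and $O(d)$ are then $o(\ell)$, yielding the claimed ratio $\frac{\ell+\beta+o(\ell)}{\ell}$. Along the way one checks the feasibility constraints $k=n-\tau N-\beta\ge 0$ and $\ell=k-O(d)>0$, which hold in the same regime, and that $\mathsf{Ext}$, its inverter, and the SA-ECC are all explicit and polynomial-time; correctness and privacy/leakage-resilience themselves need no new argument, being exactly the conclusion of Theorem~\ref{th: non-adaptive}.

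The only genuinely delicate point, as opposed to the routine bookkeeping, is the tension in the choice of $\varepsilon$: it must be small enough to be negligible in $\ell$, yet the seed length $d=O(\log^3(n/\varepsilon))$ feeds back both into the code dimension and into the $-O(d)$ loss in $\ell$, so $\varepsilon$ cannot be taken exponentially small without $d$ blowing up to $\mathrm{poly}(\ell)$ and swallowing $\ell$. Sub-exponential error such as $2^{-\ell^{1/4}}$ threads this needle, keeping $d=o(\ell)$; this is the step I expect to require the most care.
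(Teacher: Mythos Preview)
Your proposal is correct and follows essentially the same approach as the paper: instantiate Theorem~\ref{th: non-adaptive} with the linear strong extractor of Lemma~\ref{lem: Ext} and the SA-ECC of Lemma~\ref{th: SA-ECC}, set $t=r-1$ so that $\rho-\tau=1/P$, and compute the information ratio as $\frac{N/P}{N(\rho-\tau)-\beta-(\xi N+O(d))}=\frac{\ell+\beta+(\xi N+O(d))}{\ell}$. The paper's argument is in fact the informal paragraph immediately preceding the corollary; your write-up is more careful than the paper's in that you explicitly pin down feasible choices of $\xi$ and $\varepsilon$ (e.g., $\xi=1/\log N$ and $\varepsilon=2^{-\ell^{\alpha}}$ with $\alpha<1/3$) and verify that these simultaneously make $\xi N+O(d)=o(\ell)$ and both errors negligible, which the paper leaves implicit.
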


%%%%%%%%%%%%%%%%%%%%%%%%
%3.2
\subsection{Adaptive Affine Leakage-Resilient Secret Sharing}
%As a result of independent interest, we obtain a new construction of binary secret sharing against adaptive adversary that has better coding rate than the construction in \cite{preprint}. %In a nut cell, we use a special invertible affine extractor that 
We now provide a different way of reducing the explosion of error in (\ref{eq: exponential error}), which does not sacrifice resiliency against an adaptive adversary.

We first recall a classical framework of constructing seedless extractors from seeded extractors. Seeded extractors are known to explicitly extract all the entropy and are not restricted by source structures. Moreover, there are known constructions of {\em linear} seeded extractors perform almost as well as the best seeded extractors. The elegant idea of this framework is to use a seedless extractor to extract a short output from the structured source, which then serves as the seed for a seeded extractor to extract all the entropy from the same source. For this idea to work, the dependence of the extracted seed on the source has to be carefully analyzed (and removed). 
%It turned out that with careful choice of the pair of seedless and seeded extractors, the dependence of the extracted seed on the source can be absorbed by the seedless extractor with good parameters. A good example of such a seedless-seeded extractor combination is $\mathsf{Ext}(\mathsf{aExt}(\cdot),\cdot)$, where $\mathsf{aExt}$ is an affine extractor and $\mathsf{Ext}$ is a linear seeded extractor.

\begin{lemma}[\cite{moremileage}]\label{lem: closedness} Let $\mathcal{C}$ be a class of distributions over $\{0,1\}^n$. Let $\mathsf{E}:\{0,1\}^n\rightarrow\{0,1\}^d$ be a seedless extractor for $\mathcal{C}$ with error $\epsilon$. Let $\mathsf{F}:\{0,1\}^d\times\{0,1\}^n\rightarrow\{0,1\}^m$. Let $\mathsf{X}$ be a distribution in $\mathcal{C}$ and assume that for every $\mathsf{z}\in\{0,1\}^d$ and $\mathsf{y}\in\{0,1\}^m$, the distribution $(\mathsf{X}|\mathsf{F}(\mathsf{z},\mathsf{X})=\mathsf{y})$ belongs to $\mathcal{C}$. Then %\textcolor{red}{strong seeded extractor version}
$$
\mathsf{SD}(\mathsf{E}(\mathsf{X}),\mathsf{F}(\mathsf{E}(\mathsf{X}),\mathsf{X});\mathsf{U}_d,\mathsf{F}(\mathsf{U}_d,\mathsf{X}))\leq 2^{d+3}\epsilon.
$$
\end{lemma}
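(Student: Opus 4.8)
The plan is to prove Lemma~\ref{lem: closedness} via a hybrid argument that isolates the only source of non-uniformity: the correlation between the extracted seed $\mathsf{E}(\mathsf{X})$ and the source $\mathsf{X}$. First I would observe that because $\mathsf{E}$ is a seedless extractor for $\mathcal{C}$ with error $\epsilon$ and $\mathsf{X}\in\mathcal{C}$, we have $\mathsf{SD}(\mathsf{E}(\mathsf{X});\mathsf{U}_d)\leq\epsilon$. The difficulty is that $\mathsf{E}(\mathsf{X})$ and $\mathsf{X}$ are \emph{not} independent, so I cannot simply substitute $\mathsf{U}_d$ for $\mathsf{E}(\mathsf{X})$ inside $\mathsf{F}(\mathsf{E}(\mathsf{X}),\mathsf{X})$. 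The standard trick is to condition on the value of the seed: for each fixed $\mathsf{z}\in\{0,1\}^d$, write $p_\mathsf{z}=\mathsf{Pr}[\mathsf{E}(\mathsf{X})=\mathsf{z}]$ and let $\mathsf{X}_\mathsf{z}$ denote the conditional distribution $(\mathsf{X}\mid\mathsf{E}(\mathsf{X})=\mathsf{z})$.

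The key steps, in order, are as follows. \emph{Step 1:} Rewrite the left distribution as a mixture $\sum_\mathsf{z} p_\mathsf{z}\cdot(\mathsf{z},\mathsf{F}(\mathsf{z},\mathsf{X}_\mathsf{z}))$ and the right distribution as $\sum_\mathsf{z} 2^{-d}\cdot(\mathsf{z},\mathsf{F}(\mathsf{z},\mathsf{X}))$, and bound their statistical distance by a triangle inequality through the intermediate hybrid $\sum_\mathsf{z} p_\mathsf{z}\cdot(\mathsf{z},\mathsf{F}(\mathsf{z},\mathsf{X}))$. The distance between the left distribution and this hybrid is $\sum_\mathsf{z} p_\mathsf{z}\cdot\mathsf{SD}(\mathsf{F}(\mathsf{z},\mathsf{X}_\mathsf{z});\mathsf{F}(\mathsf{z},\mathsf{X}))$, and the distance between the hybrid and the right distribution is at most $\frac12\sum_\mathsf{z}|p_\mathsf{z}-2^{-d}|=\mathsf{SD}(\mathsf{E}(\mathsf{X});\mathsf{U}_d)\leq\epsilon$. \emph{Step 2 (the crux):} Bound $\mathsf{SD}(\mathsf{F}(\mathsf{z},\mathsf{X}_\mathsf{z});\mathsf{F}(\mathsf{z},\mathsf{X}))$ for each $\mathsf{z}$. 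The idea is that $\mathsf{X}_\mathsf{z}$ is obtained from $\mathsf{X}$ by conditioning on an event of probability $p_\mathsf{z}$, so $\mathsf{SD}(\mathsf{X}_\mathsf{z};\mathsf{X})\leq 1-p_\mathsf{z}$ in the worst case — but a cleaner bound comes from noting $\mathsf{SD}(\mathsf{X}_\mathsf{z};\mathsf{X})$ can also be controlled via how far $p_\mathsf{z}$ is from $2^{-d}$. Concretely, one shows $\sum_\mathsf{z} p_\mathsf{z}\,\mathsf{SD}(\mathsf{X}_\mathsf{z};\mathsf{X})$ is small: indeed $\mathsf{SD}(\mathsf{X}_\mathsf{z};\mathsf{X})=\sum_\mathsf{x}\max(0,\mathsf{Pr}[\mathsf{X}_\mathsf{z}=\mathsf{x}]-\mathsf{Pr}[\mathsf{X}=\mathsf{x}])$, and the weighted sum $\sum_\mathsf{z}p_\mathsf{z}\mathsf{SD}(\mathsf{X}_\mathsf{z};\mathsf{X})$ telescopes against the deviation of $\mathsf{E}(\mathsf{X})$ from uniform; I expect to get a bound of the form $O(2^d\epsilon)$ here, and I would also need to invoke the closure hypothesis — that $(\mathsf{X}\mid\mathsf{F}(\mathsf{z},\mathsf{X})=\mathsf{y})\in\mathcal{C}$ — precisely to re-apply the extractor guarantee of $\mathsf{E}$ restricted to these conditional sources, which is what lets one trade the $\mathsf{F}(\mathsf{z},\mathsf{X}_\mathsf{z})$ versus $\mathsf{F}(\mathsf{z},\mathsf{X})$ comparison for a statement about $\mathsf{E}$ applied to a sub-source. \emph{Step 3:} Collect the bounds; the $2^{d+3}$ factor arises from summing $2^d$ conditional terms each contributing $O(\epsilon)$, with the constant $8=2^3$ absorbing the triangle-inequality passes and the two-sided nature of the statistical-distance manipulations.

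The main obstacle I anticipate is Step~2: making the comparison $\mathsf{F}(\mathsf{z},\mathsf{X}_\mathsf{z})$ versus $\mathsf{F}(\mathsf{z},\mathsf{X})$ rigorous without circularity. The cleanest route is probably to go the other direction — fix $\mathsf{z}$ and a target output $\mathsf{y}$, let $\mathsf{X}_{\mathsf{z},\mathsf{y}}=(\mathsf{X}\mid\mathsf{F}(\mathsf{z},\mathsf{X})=\mathsf{y})$, which lies in $\mathcal{C}$ by hypothesis, and apply the extractor bound $\mathsf{SD}(\mathsf{E}(\mathsf{X}_{\mathsf{z},\mathsf{y}});\mathsf{U}_d)\leq\epsilon$ to deduce that $\mathsf{Pr}[\mathsf{E}(\mathsf{X})=\mathsf{z}\mid\mathsf{F}(\mathsf{z},\mathsf{X})=\mathsf{y}]$ is within $\epsilon$ of $2^{-d}$; then a Bayes-rule manipulation relates $\mathsf{Pr}[\mathsf{F}(\mathsf{z},\mathsf{X}_\mathsf{z})=\mathsf{y}]$ to $\mathsf{Pr}[\mathsf{F}(\mathsf{z},\mathsf{X})=\mathsf{y}]$ and summing over $\mathsf{y}$ gives the per-$\mathsf{z}$ bound. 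I would reconstruct the exact bookkeeping from the reference \cite{moremileage}, since the constant is stated explicitly there, but the structure above — hybrid through $\sum_\mathsf{z}p_\mathsf{z}(\mathsf{z},\mathsf{F}(\mathsf{z},\mathsf{X}))$, then a Bayes argument on conditional sources inside $\mathcal{C}$ — is what drives the proof.
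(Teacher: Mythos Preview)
The paper does not supply its own proof of Lemma~\ref{lem: closedness}; it simply quotes the result from \cite{moremileage} and uses it as a black box. So there is nothing in the paper to compare your proposal against.

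That said, your plan is the standard argument from the cited reference and is correct in outline. The route you settle on at the end of Step~2 --- fix $\mathsf{z}$ and $\mathsf{y}$, use the closure hypothesis to guarantee $(\mathsf{X}\mid\mathsf{F}(\mathsf{z},\mathsf{X})=\mathsf{y})\in\mathcal{C}$, apply the extractor bound to get $|\mathsf{Pr}[\mathsf{E}(\mathsf{X})=\mathsf{z}\mid\mathsf{F}(\mathsf{z},\mathsf{X})=\mathsf{y}]-2^{-d}|\leq\epsilon$, then Bayes-flip to control $|\mathsf{Pr}[\mathsf{F}(\mathsf{z},\mathsf{X}_\mathsf{z})=\mathsf{y}]-\mathsf{Pr}[\mathsf{F}(\mathsf{z},\mathsf{X})=\mathsf{y}]|$ --- is exactly how the original proof goes. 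The earlier attempt in Step~2 to bound $\sum_\mathsf{z}p_\mathsf{z}\,\mathsf{SD}(\mathsf{X}_\mathsf{z};\mathsf{X})$ directly would not work (that quantity need not be small), but you correctly abandon it for the Bayes route. The $2^{d}$ loss comes from summing the per-$\mathsf{z}$ deviations, each of order $\epsilon/2^{-d}$ after normalizing by $p_\mathsf{z}\approx 2^{-d}$, and the remaining factor of $8$ is bookkeeping from the triangle inequalities, as you anticipate.
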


An example of such a class of distributions is the affine source, in which case we can use an {\em affine} extractor $\mathsf{F}=\mathsf{aExt}$ and a {\em linear} seeded extractor $\mathsf{E}=\mathsf{Ext}$. An affine source $\mathsf{X}$ conditioned on $\mathsf{Ext}(\mathsf{z},\mathsf{X})=\mathsf{y}$, which amounts to a set of linear equations, is still an affine source for $\mathsf{aExt}$. With appropriate choice of parameters, we obtain a better affine extractor $\mathsf{aExt}'(\mathsf{X})\colon=\mathsf{Ext}(\mathsf{aExt}(\mathsf{X}),\mathsf{X})$. %In fact, $\mathsf{aExt}'(\mathsf{X})$ is slightly stronger than an affine extractor in the sense that it is a {\em strong seedless} extractor with respect to the random variable $\mathsf{aExt}(\mathsf{X})$ \cite{moremileage}.
%We stay in the paradigm of constructing binary secret sharing through combining an invertible affine extractor and a linear erasure correcting code, as recalled in the beginning of Section \ref{sec: reduction}. 
%Note that the affine extractor $\mathsf{aExt}'(\mathsf{X})\colon=\mathsf{Ext}(\mathsf{aExt}(\mathsf{X}),\mathsf{X})$ is easily made invertible at essentially no cost, only with an increase of $d$ bits in the input, which we will show negligible. 
With an increase of $d$ bits in the input, we have the following invertible affine extractor.
$$\mathsf{aExt}''(\mathsf{Sd}||\mathsf{X})\colon=\mathsf{Ext}(\mathsf{aExt}(\mathsf{X})+\mathsf{Sd},\mathsf{X}),
 $$ 
whose inverter is
$
(\mathsf{aExt}'')^{-1}(\mathsf{s})\colon=\left(\mathsf{aExt}(\mathsf{Ext}^{-1}(\mathsf{Z},\mathsf{s}))+\mathsf{Z}||\mathsf{Ext}^{-1}(\mathsf{Z},\mathsf{s})\right),
$ %\mathsf{Ext}^{-1}(\mathsf{Z},\cdot)(\mathsf{y})
where $\mathsf{Z}\stackrel{\$}{\leftarrow}\{0,1\}^d$. 
%This inverter is efficient and has zero error because of the linearity of $\mathsf{Ext}$. 

%Secret Sharing from Seeded Extractors with Adaptive Privacy
%In this sub-section, we find the minimum condition on the seeded extractor $\mathsf{Ext}$ for obtaining a non-malleable secret sharing in the low entropy case.
%\subsubsection{A Secret Sharing Scheme with Arbitrary $t=N\tau$}

\begin{theorem}\label{th: low}
Let $\mathsf{aExt}\colon\{0,1\}^n\rightarrow\{0,1\}^d$ be a $(n-\tau N-\beta-\ell,\varepsilon_A)$-affine extractor. Let $\mathsf{Ext}\colon\{0,1\}^d\times\{0,1\}^n\rightarrow\{0,1\}^\ell$ be a linear $(n-\tau N-\beta-d,\varepsilon_E)$-strong extractor with $\varepsilon_E<\frac{1}{8}$.
Let $\mathsf{SA\mbox{-}ECCenc}\colon\{0,1\}^{d+n}\rightarrow\{0,1\}^N$ be the encoder of a statistical affine erasure correcting code $\mathsf{SA\mbox{-}ECC}$ that corrects $(1-\rho) N$ erasures with error probability $\delta$. Let
%ly recovers the message given any $k$ codeword components.
$$
\left\{
\begin{array}{ll}
\mathsf{Share}(\mathsf{s})&=\mathsf{SA\mbox{-}ECCenc}(\mathsf{Sd}||\mathsf{X}),\mbox{ where }\mathsf{X}\stackrel{\$}{\leftarrow}\mathsf{Ext}^{-1}(\mathsf{Z},\mathsf{s}) \mbox{ and}\\
                                                  &\ \ \ \mathsf{Sd}=\mathsf{Z}+\mathsf{aExt}(\mathsf{X}) \mbox{ with } \mathsf{Z}\stackrel{\$}{\leftarrow}\{0,1\}^d\\
\mathsf{Recst}(\tilde{\mathsf{y}})&=\mathsf{Ext}(\mathsf{aExt}(\tilde{\mathsf{x}})+\tilde{\mathsf{sd}},\tilde{\mathsf{x}}) \mbox{, where }(\tilde{\mathsf{sd}}||\tilde{\mathsf{x}})=\mathsf{SA\mbox{-}ECCdec}(\tilde{\mathsf{y}}),\\
\end{array}
\right.
$$
Here $\tilde{\mathsf{y}}$ denotes an incomplete version of a share vector $\mathsf{y}\in\{0,1\}^N$ with some of its components replaced by erasure symbols.
%where $\mathsf{y}=\mathsf{c}_R$ is the projection of a share vector $\mathsf{c}$ on $R\subset[N]$ with $|R|=\rho N$.
Let $\varepsilon=2^{(\ell+1)+(d+4)+2}\varepsilon_A+8\varepsilon_E$.
Then the coding scheme $(\mathsf{Share},\mathsf{Recst})$ is an adaptive affine leakage-resilient secret sharing with security parameters $\varepsilon$, $\delta$, leakage bound $\beta$ and ramp parameters $(t,r,P)$ such that $\tau=t/P$, $\rho=r/P$.
%Then we have a $(\varepsilon,\delta)$-SSS over binary shares with relative threshold $(\tau,\rho)$ against an adaptive adversary. %that is non-malleable when the induced tampering $g_{\sigma,R}^\mathsf{v}$ for $\mathsf{ECC}$, adaptive leakage adversary $\mathcal{A}$, tampering strategy $f$ and reconstruction access set $T$ has entropy less than $\frac{n-tN/P-\beta}{2}$.
\end{theorem}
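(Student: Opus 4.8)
The plan is to establish correctness first, then the two privacy/leakage-resiliency guarantees (for the adaptive adversary), by tracking how the adversary's view constrains the affine pre-image that feeds into $\mathsf{aExt}''$. First I would verify correctness: given any $r = \rho N$ blocks of $\mathsf{Share}(\mathsf{s})$, the $\mathsf{SA\mbox{-}ECC}$ decoder recovers $(\mathsf{Sd}\|\mathsf{X})$ except with probability $\delta$, and then $\mathsf{Recst}$ recomputes $\mathsf{Ext}(\mathsf{aExt}(\mathsf{X}) + \mathsf{Sd}, \mathsf{X}) = \mathsf{Ext}(\mathsf{Z}, \mathsf{X})$; since $\mathsf{X} \in \mathsf{Ext}^{-1}(\mathsf{Z}, \mathsf{s})$ by construction and $\mathsf{Ext}^{-1}(\mathsf{Z},\cdot)$ maps into the pre-image of $\mathsf{s}$ under $\mathsf{Ext}(\mathsf{Z},\cdot)$, this equals $\mathsf{s}$. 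So correctness holds with error $\delta$.

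For privacy and leakage-resiliency, the key reduction is to the invertible affine extractor $\mathsf{aExt}''(\mathsf{Sd}\|\mathsf{X}) = \mathsf{Ext}(\mathsf{aExt}(\mathsf{X}) + \mathsf{Sd}, \mathsf{X})$, composed with the stochastic affine code. I would first analyze the idealized experiment where the shared secret is uniform, $\mathsf{S} \stackrel{\$}{\leftarrow} \{0,1\}^\ell$. By the inverter property, $(\mathsf{aExt}'')^{-1}(\mathsf{S})$ is distributed (close to) uniformly on $\{0,1\}^{d+n}$; call this $\mathsf{W} = (\mathsf{Sd}\|\mathsf{X})$. The adaptive adversary's total view consists of at most $t = \tau N$ blocks of $\mathsf{SA\mbox{-}ECCenc}(\mathsf{W})$ together with at most $\beta$ bits of affine leakage on that codeword; since the code is a stochastic affine code (Definition~\ref{def: affine}), conditioned on the code randomness each of these is an affine function of $\mathsf{W}$, so the adversary's view $\mathsf{V} = \mathsf{v}$ confines $\mathsf{W}$ to an affine subspace of $\{0,1\}^{d+n}$ of dimension at least $(d+n) - \tau N - \beta$. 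Crucially, adaptivity does not hurt here: whatever order the adversary interleaves block-reads and leakage queries, the final joint view is still the evaluation of a single (adaptively-chosen, but ultimately fixed) affine map on $\mathsf{W}$, so conditioning on the full transcript always yields an affine source. Now I would invoke Lemma~\ref{lem: closedness} with $\mathcal{C}$ the class of affine sources, $\mathsf{E} = \mathsf{aExt}$, and $\mathsf{F} = \mathsf{Ext}$ (using that an affine source conditioned on a linear constraint $\mathsf{Ext}(\mathsf{z},\cdot) = \mathsf{y}$ is still affine) to get that $\mathsf{aExt}'' $ behaves like a strong seeded extractor fed a uniform seed, hence extracts $\mathsf{U}_\ell$ up to error roughly $2^{d+4}\varepsilon_A + \varepsilon_E$ from any affine source of entropy $\geq n - \tau N - \beta - d$. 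Applied conditionally on each $\mathsf{V}=\mathsf{v}$, this gives $(\mathsf{V}, \mathsf{aExt}''(\mathsf{W})) \approx (\mathsf{V}, \mathsf{U}_\ell)$ with that error.

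From this near-independence of $\mathsf{V}$ and the secret in the uniform experiment, I would pass to the worst-case statement for an arbitrary pair $\mathsf{s}_0, \mathsf{s}_1$ using Lemma~\ref{lem: conditioning}: conditioning the uniform experiment on $\mathsf{aExt}''(\mathsf{W}) = \mathsf{s}_b$ costs a factor $2/\mathsf{Pr}[\mathsf{aExt}''(\mathsf{W}) = \mathsf{s}_b] \approx 2 \cdot 2^\ell$, so each of $\mathsf{View}^{\mathcal{O}(\mathsf{Share}(\mathsf{s}_b))}_{\mathcal{A}}$ is within about $2^{\ell+1}(2^{d+4}\varepsilon_A + \varepsilon_E)$ of the (secret-independent) distribution of $\mathsf{V}$, and the triangle inequality gives indistinguishability of the two views with error $\varepsilon = 2^{(\ell+1)+(d+4)+2}\varepsilon_A + 8\varepsilon_E$, matching the claimed bound (the constant factors absorbing the $\mu$-closeness of the inverter and the factor-of-two from Lemma~\ref{lem: conditioning}). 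I expect the main obstacle to be the adaptive part: carefully arguing that an adaptive schedule of block-reads and affine leakage queries still produces a view that is a deterministic affine function of $\mathsf{W}$ (over the code randomness), so that the affine-source structure — on which Lemma~\ref{lem: closedness} and Bourgain's extractor both depend — is preserved. This requires formalizing the adaptive LRS game transcript and observing that each successive query, although chosen as a function of earlier answers, is itself affine, and that composition/concatenation of affine maps is affine; the entropy bound $(d+n) - \tau N - \beta$ on the resulting subspace then holds regardless of the schedule. The remaining bookkeeping (propagating $\varepsilon_E < 1/8$ through Lemma~\ref{lem: closedness}'s $2^{d+3}$ factor and the extractor parameter $n - \tau N - \beta - d$) is routine.
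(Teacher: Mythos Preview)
Your overall strategy---reduce to a uniform secret, argue the conditioned source is affine, invoke Lemma~\ref{lem: closedness}, then condition on the secret via Lemma~\ref{lem: conditioning}---is the same shape as the paper's. But there is a genuine gap in your error accounting, and it is exactly the point the construction is designed around.

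After Lemma~\ref{lem: closedness} you have (roughly) $(\mathsf{V},\mathsf{S})\stackrel{2^{d+3}\varepsilon_A+\varepsilon_E}{\sim}(\mathsf{V},\mathsf{U}_\ell)$. You then apply Lemma~\ref{lem: conditioning} to condition on $\mathsf{S}=\mathsf{s}_b$, paying a factor $\approx 2\cdot 2^\ell$, and triangle-inequality the two secrets. That route gives an $\varepsilon_E$ term of order $2^{\ell+2}\varepsilon_E$, \emph{not} $8\varepsilon_E$. Your sentence ``the triangle inequality gives \ldots $8\varepsilon_E$, matching the claimed bound'' is simply not what your computation yields; the $2^\ell$ multiplier hits $\varepsilon_E$ just as hard as it hits $\varepsilon_A$. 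This matters: the whole point of the theorem (see Remark~\ref{rmk: error bound}) is that the exponential blow-up $2^{\ell+d+O(1)}$ lands \emph{only} on $\varepsilon_A$, which can be made doubly-exponentially small via Bourgain's extractor, while $\varepsilon_E$ (Trevisan's extractor error) stays with a constant multiplier.

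The missing idea is the ``good seeds'' step. Because $\mathsf{Ext}(\mathsf{z},\cdot)$ is linear and the conditioned source $(\mathsf{U}_n\mid\mathsf{V}=\mathsf{v})$ is affine, for each fixed seed $\mathsf{z}$ the output $\mathsf{Ext}(\mathsf{z},\mathsf{U}_n)\mid\mathsf{V}=\mathsf{v}$ is itself affine in $\{0,1\}^\ell$: it is either \emph{exactly} uniform or at distance $\geq\tfrac12$ from uniform. An averaging argument then gives a set $\mathcal{G}$ of seeds with $\Pr[\mathsf{U}_d\in\mathcal{G}]\geq 1-4\varepsilon_E$ on which the output is exactly $\mathsf{U}_\ell$. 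One conditions first on $\mathsf{Z}\in\mathcal{G}$ (via Lemma~\ref{lem: conditioning}), so that on this event the only error left is the $2^{d+O(1)}\varepsilon_A$ from Lemma~\ref{lem: closedness}; the subsequent $2^{\ell}$ blow-up from conditioning on $\mathsf{S}=\mathsf{s}$ therefore multiplies only $\varepsilon_A$. The bad-seed event contributes at most $\Pr[\mathsf{Z}\notin\mathcal{G}]\leq 4\varepsilon_E$ times a trivial distance bound of $1$, which is where the $8\varepsilon_E$ actually comes from. A secondary point: the paper carries out this analysis after fixing $\mathsf{Sd}=\mathsf{sd}$, so the source under study is $\mathsf{U}_n$ (matching the domain of $\mathsf{aExt}$ in Lemma~\ref{lem: closedness}) rather than the full $(\mathsf{Sd}\|\mathsf{X})\in\{0,1\}^{d+n}$ you work with.
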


%\mathsf{Recst}_R(\mathsf{y})&=\mathsf{anmExt}(\mathsf{ECCdec}_R(\mathsf{y})),

%\tilde{\mathsf{sd}}||

\begin{proof} Reconstruction from any $r$ shares follows from the functionality of $\mathsf{SA\mbox{-}ECC}$ and the invertibility guarantee of the invertible extractor, which insures that any correctly recovered pre-image is mapped back to the original secret.

%The $t$-privacy will follow trivially from the stronger claim that we use to prove non-malleability. 

We next prove privacy and leakage resiliency. %a stronger claim that implies non-malleability. 
Consider a uniform secret $\mathsf{U}_\ell$. %is shared using the sharing algorithm. 
By the uniformity guarantee of the inverter, we have $\mathsf{Share}(\mathsf{U}_\ell)=\mathsf{SA\mbox{-}ECCenc}(\mathsf{Sd}||\mathsf{U}_n)$. Our analysis is done for any fixed $\mathsf{Sd}=\mathsf{sd}$. This captures a stronger adversary who on top of adaptively reading $t$ shares, also has access to $\mathsf{Sd}$ through an oracle. It is easy to see that the fixing of $\mathsf{Sd}=\mathsf{sd}$ does not alter the distribution of the source $\mathsf{U}_n$, which remains uniform over $\{0,1\}^n$. Let $\mathsf{V}\colon=\mathsf{View}_{\mathcal{A}_{\beta,\mathsf{affine}}}^{\mathcal{O}(\mathsf{SA\mbox{-}ECCenc}(\mathsf{sd}||\mathsf{U}_n))}$ denote the view of the adversary $\mathcal{A}_{\beta,\mathsf{affine}}$ on the encoding of a uniform source for the fixed $\mathsf{Sd}=\mathsf{sd}$. 
%Let $(\tilde{\mathsf{sd}}||\mathsf{W})\colon=g_{\sigma,R}^{\mathsf{V}}(\mathsf{sd}||\mathsf{U}_n)$ denote the tampered source of the affine extractor $\mathsf{aExt}''(\mathsf{sd}||\cdot)\colon=\mathsf{Ext}(\mathsf{aExt}(\cdot)+\mathsf{sd},\cdot)$.
Let $\mathsf{Z}\colon=\mathsf{aExt}(\mathsf{U}_n)+\mathsf{sd}$ denote the seed of the strong linear extractor $\mathsf{Ext}$. Finally, let $\mathsf{S}\colon=\mathsf{Ext}(\mathsf{Z},\mathsf{U}_n)$. %=\mathsf{Ext}(\mathsf{aExt}(\mathsf{X})+\mathsf{sd},\mathsf{X})
We study the random variable tuple $(\mathsf{V},\mathsf{Z},\mathsf{S})$ to complete the proof. %Roughly speaking, we show that $(\mathsf{V},\mathsf{Z},\mathsf{S})$ is statistically close to $(\mathsf{V},\mathsf{U}_d,\mathsf{U}_\ell)$.  
%In particular, we take the four steps outlined below. 

The pair $(\mathsf{Z},\mathsf{S})|\mathsf{V}=\mathsf{v}$ for any fixed $\mathsf{V}=\mathsf{v}$ is by definition $(\mathsf{aExt}(\mathsf{U}_n)+\mathsf{sd},\mathsf{Ext}(\mathsf{aExt}(\mathsf{U}_n)+\mathsf{sd},\mathsf{U}_n))|\mathsf{V}=\mathsf{v}$.
Since $(\mathsf{U}_n|\mathsf{V}=\mathsf{v})$ is an affine source with at least $n-\tau N-\beta$ entropy, according to Lemma \ref{lem: closedness}, we have 
$$
(\mathsf{Z},\mathsf{S})|\mathsf{V}=\mathsf{v} \stackrel{2^{d+3}\varepsilon_A}{\sim} (\mathsf{U}_d,\mathsf{Ext}(\mathsf{U}_d,\mathsf{U}_n))|\mathsf{V}=\mathsf{v}.
$$ 
Our concern is the relation between $\mathsf{S}$ and $\mathsf{V}$, and therefore would like to further condition on values of $\mathsf{Z}$. In this step, we crucially use the linearity of $\mathsf{Ext}$ and the underlying linear space structure of the affine source $\mathsf{U}_n|\mathsf{V}=\mathsf{v}$ to claim that there is a subset $\mathcal{G}\subset\{0,1\}^d$ of good seeds such that $\mathsf{Pr}[\mathsf{U}_d\in\mathcal{G}]\geq 1-4\varepsilon_E$ and for any $\mathsf{z}\in\mathcal{G}$, the distribution of $\mathsf{Ext}(\mathsf{z},\mathsf{U}_n)|\mathsf{V}=\mathsf{v}$ is exactly uniform. This is true because $\mathsf{Ext}(\mathsf{z},\mathsf{U}_n)|\mathsf{V}=\mathsf{v}$ is an affine source. If its entropy is $\ell$, then it is exactly uniform. If its entropy is less than $\ell$, its statistical distance $\varepsilon_E^\mathsf{z}$ from uniform is at least $\frac{1}{2}$. Using an averaging argument we have that at least $1-4\varepsilon_E$ fraction of the seeds should satisfy $\varepsilon_E^\mathsf{z}<\frac{1}{4}$, and hence $\varepsilon_E^\mathsf{z}=0$.
We then use Lemma \ref{lem: conditioning}  with respect to the event $\mathsf{Z}\in\mathcal{G}$ to claim that 
$$
(\mathsf{S}|(\mathsf{V}=\mathsf{v},\mathsf{Z}\in\mathcal{G}))\ \stackrel{\frac{2^{d+4}\varepsilon_A}{1-4\varepsilon_E}}{\sim}\  (\mathsf{Ext}(\mathsf{U}_d,\mathsf{X})|(\mathsf{V}=\mathsf{v},\mathsf{U}_d\in\mathcal{G})), %=\mathsf{U}_\ell.
$$ 
where the right hand side is exactly $\mathsf{U}_\ell$. Note that the subset $\mathcal{G}$ is determined by the indices of the $t$ shares and by the leakage adversary $\mathcal{A}_{\beta,\mathsf{affine}}$, hence remains the same for any value of $\mathsf{V}=\mathsf{v}$. We then have 
$$
((\mathsf{V},\mathsf{S})|\mathsf{Z}\in\mathcal{G})\ \stackrel{\frac{2^{d+4}\varepsilon_A}{1-4\varepsilon_E}}{\sim}\  (\mathsf{V},\mathsf{U}_\ell).
$$ 
Another application of Lemma \ref{lem: conditioning} with respect to the event $\mathsf{S}=\mathsf{s}$ gives
$$
(\mathsf{V}|(\mathsf{Z}\in\mathcal{G},\mathsf{S}=\mathsf{s}))\ \stackrel{\frac{2^{(\ell+1)+(d+4)}\varepsilon_A}{1-4\varepsilon_E}}{\sim}\  \mathsf{V}.
$$ 
We finally bound the privacy and leakage-resiliency error as follows. %statistical distance $\mathsf{SD}(\mathsf{W}|(\mathsf{V}=\mathsf{v},\mathsf{S}=\mathsf{s});(\mathsf{W}|\mathsf{V}=\mathsf{v}))$ as follows.
$$
\begin{array}{l}
\mathsf{SD}((\mathsf{V}|\mathsf{S}=\mathsf{s}_0);(\mathsf{V}|\mathsf{S}=\mathsf{s}_1))\\
\leq 2\mathsf{SD}((\mathsf{V}|\mathsf{S}=\mathsf{s});\mathsf{V})\\%+\mathsf{SD}(\mathsf{V};(\mathsf{V}|\mathsf{S}=\mathsf{s}_1))\\
=2\mathsf{Pr}[\mathsf{Z}\in\mathcal{G}]\cdot \mathsf{SD}((\mathsf{V}|(\mathsf{Z}\in\mathcal{G},\mathsf{S}=\mathsf{s})); \mathsf{V})+2\mathsf{Pr}[\mathsf{Z}\notin\mathcal{G}]\cdot \mathsf{SD}((\mathsf{V}|(\mathsf{Z}\notin\mathcal{G},\mathsf{S}=\mathsf{s})); \mathsf{V})\\
\leq 2\left(1\cdot\frac{2^{(\ell+1)+(d+4)}\varepsilon_A}{1-4\varepsilon_E}+(4\varepsilon_E+\varepsilon_A)\cdot 1\right)\\
<2^{(\ell+1)+(d+4)+2}\varepsilon_A+8\varepsilon_E.
\end{array}
$$
\end{proof}

%\textcolor{blue}{The negligible seed claim is not clearly said. It was constant coding rate, and then seed becomes not negligible. Claim this at low entropy case. It is of independent interests, not directly related to nmSSS.} 

\begin{remark}\label{rmk: error bound}
Note that in the error bound $2^{(\ell+1)+(d+4)+2}\varepsilon_A+8\varepsilon_E$ above, the exponential term $2^{(\ell+1)+(d+4)+2}$ only appears as the multiplier of $\varepsilon_A$, the error of $\mathsf{aExt}$. There are known constructions of affine extractor that can extract from any constant fraction of entropy with error exponentially small in the entropy (see Lemma \ref{lem: affine}). Instantiate $\mathsf{aExt}$ with such an affine extractor and $\mathsf{Ext}$ with Trevisan's seeded extractor (see Lemma \ref{lem: Ext}), we have an explicit construction that provide negligible error with seed length $d$ negligible in $\ell$. 
This adaptive affine leakage-resilient secret sharing has better information ratio (both constant) than the one constructed using $\mathsf{aExt}$ alone. %The authors in \cite{preprint} use the $\mathsf{aExt}$ in Lemma \ref{lem: affine} with output length $\ell$ and error $\varepsilon_A=2^{-\Omega(\ell)}$. Instead of using $\mathsf{aExt}$ to extract a seed for $\mathsf{Ext}$, they directly 
When used alone, one has to make $\mathsf{aExt}$ invertible using a One-Time-Pad trick (see \cite{preprint}) that costs $\ell$ bits increase in the input. So the information ratio is $\frac{(\ell+n)/R_{ECC}}{P\ell}$, where $R_{ECC}$ is the rate of the erasure correcting code. Recall that making $\mathsf{aExt}'(\cdot)=\mathsf{Ext}(\mathsf{aExt}(\cdot),\cdot)$ invertible only costs $d$ bits, which is negligible in $\ell$ if we use the linear seeded extractor from Lemma \ref{lem: Ext}. We then have information ratio $\frac{(d+n)/R_{ECC}}{P\ell}\approx\frac{n/R_{ECC}}{P\ell}$, for the same level of privacy and reconstruction errors.

%If we interpret the induced tampering $g_{\sigma,R}^\mathsf{v}$ as a leakage function and consider both $\mathsf{V}$ and $\mathsf{W}$ as the view of the leakage adversary, we then have an explicit binary SSS against adaptive adversary. 

%Related works
%Constant fraction?
%\textcolor{blue}{Say this is much better than simply use aExt...}
%The improvement comes from the fact that here making $\mathsf{aExt}'(\cdot)\colon=\mathsf{Ext}(\mathsf{aExt}(\cdot),\cdot)$ invertible only incurs a cost of $d$ bits while the One-Time-Pad approach used in \cite{preprint} costs $\ell$ bits. This improvement is interesting for understanding if adaptive binary SSS can be constructed with the same efficiency (coding rate) as non-adaptive binary SSS.
%\textcolor{red}{Claim result of independent interest: replace the ECC with SA-ECC (and set parameters for $\tau$ instead of $\rho=\frac{1+\frac{\tau}{R_{ECC}}}{2}$), we obtain an adaptive binary SSS with any relative access structure $0\leq \tau<\rho\leq 1$ that improves on the coding rate of \cite{preprint}. This improvement crucially rely on the more involved analysis that makes further use of the affine source and linearity of Ext, which greatly reduces the requirement on the error of Ext.} 
\end{remark}

%Some observations are in place. Our tampering model is an intermediate model in between the independent tampering and joint tampering explored in 

%the pioneering works \cite{STOC18,CRYPTO18}, where only {\em perfect secret sharing} is investigated. 

%The adversary tampers with each share using information of some $t$ shares is equivalent to joint tampering with the $t$ shares and one other share. This other share is not tampered jointly with a different choice of the share other than the $t$ shares. On one hand, it is stronger than the independent tampering model. On the other hand, it is equivalent to dividing an access set (any $k$ shares) into $k-t$ groups and jointly tampering with each group. Every two groups have overlapping shares, which are the $t$ shares chosen by the leakage adversary. For a ramp scheme, the threshold $k-t$ is bigger than $1$. So we always have at least two groups, avoiding the impossibility (note that non-malleability in this tampering model is impossible to achieve for perfect secret sharing). We believe this is the right tampering model to consider for ramp secret sharing. 

%%%%%%%%%%%%%%%%%%%%%%%
% ANME
%%%%%%%%%%%%%%%%%%%%%%%

\section{Affine Leakage-Resilient Non-Malleable Secret Sharing}\label{sec: reduction}
We now extend our model of leakage-resilient secret sharing to the paradigm of leakage-resilient non-malleable secret sharing initiated in \cite{KMS18}.
Let $\mathsf{V}\colon=\mathsf{View}_{\mathcal{A}_{\beta,\mathsf{affine}}}^{\mathcal{O}(\mathsf{Share}(\mathsf{s}))}$ be the view of an adaptive $\beta$-bounded affine adversary $\mathcal{A}_{\beta,\mathsf{affine}}$ as defined in Definition \ref{def: lrSSS}.
A  {\em $\mathcal{F}$-tampering strategy associated with $\mathcal{A}_{\beta,\mathsf{affine}}$} is a metafunction 
$$
\sigma\colon\left(\{0,1\}^{N/P}\right)^t\times\{0,1\}^\beta\rightarrow\mathcal{F}
$$ 
that takes as input a view $\mathsf{V}=\mathsf{v}$ and outputs a tampering function $f^\mathsf{v}\in\mathcal{F}$. 

\begin{definition} \label{def: nmSSS} For integers $0\leq t<r\leq P$, an adaptive affine leakage-resilient secret sharing with security parameters $\varepsilon(N)$, $\delta(N)$, leakage bound $\beta$ and ramp parameters $(t,r,P)$ is said to be non-malleable with respect to a family $\mathcal{F}$ of tampering functions from $\{0,1\}^N$ to $\{0,1\}^N$, if the following property is satisfied.
Let the secret sharing scheme $(\mathsf{Share},\mathsf{Recst})$ be as follows.
$$
\mathsf{Share}\colon\{0,1\}^{\ell(N)}\times\mathcal{R}\rightarrow\{0,1\}^N,\ P|N,
$$
where $\mathcal{R}$ denote the randomness set, and for any $R\subset \mathcal{P}$ of size $|R|=r$, there is a
$$
\mathsf{Recst}_R\colon\left(\{0,1\}^{N/P}\right)^r\rightarrow\{0,1\}^{\ell(N)}\cup\{\bot\}.
$$ 
%$\widetilde{\{0,1\}^N}$ denotes the subset of $(\{0,1\}\cup\{?\})^N$ with at least $N\rho$ components not equal to the erasure symbol ``$?$''
\begin{itemize}
\item Non-malleability: %Let $\mathsf{V}$ be a random variable denoting the view of the non-adaptive/adaptive leakage adversary. 
For any %non-adaptive/
adaptive $\beta$-bounded affine leakage adversary $\mathcal{A}_{\beta,\mathsf{affine}}$, any $\mathcal{F}$-tampering strategy $\sigma$ associate with $\mathcal{A}_{\beta,\mathsf{affine}}$, any $R\subset \mathcal{P}$ of size $|R|=r$ and any secret $\mathsf{s}\in\{0,1\}^{\ell(N)}$, define the tampering-experiment
$$
\mathrm{Tamper}_\mathsf{s}^{\mathcal{A}_{\beta,\mathsf{affine}},\sigma,R}=\left\{
\begin{array}{c}
\mathsf{c}\leftarrow\mathsf{Share}(\mathsf{s})\\
\mathsf{v}=\mathsf{View}_{\mathcal{A}_{\beta,\mathsf{affine}}}^{\mathcal{O}(\mathsf{c})},f^\mathsf{v}=\sigma(\mathsf{v}), \tilde{\mathsf{c}}= f^\mathsf{v}(\mathsf{c})\\
\tilde{\mathsf{s}}=\mathsf{Recst}_R(\tilde{\mathsf{c}}_R)\\
\mathrm{Output}\ \tilde{\mathsf{s}}.\\
\end{array}
\right\},
$$
which is a random variable over the randomness of the share algorithm $\mathsf{Share}$.
We say the scheme is $\varepsilon(N)$-non-malleable if for any $\mathcal{A}_{\beta,\mathsf{affine}}$, $\sigma$, $R$ and $\mathsf{s}$, there exists a distribution $\mathcal{D}_{\mathcal{A}_{\beta,\mathsf{affine}},\sigma,R}$ over the set $\{0,1\}^{\ell(N)}\cup\{\bot\}\cup\{\mathsf{same}^{*}\}$ such that 
\begin{equation}\label{eq: nm}
\mathrm{Tamper}_\mathsf{s}^{\mathcal{A}_{\beta,\mathsf{affine}},\sigma,R} \stackrel{\varepsilon(N)}{\sim} \mathsf{Copy}(\mathcal{D}_{\mathcal{A}_{\beta,\mathsf{affine}},\sigma,R},\mathsf{s}),
\end{equation}
where $\mathsf{Copy}(\cdot,\cdot)$ is as defined in (\ref{eq: nmExt}).

\end{itemize}
%The difference $\gamma=\rho-\tau$ is called the relative gap, since $N\gamma=k(N)-t(N)$ is the threshold gap of the scheme. 
When it is clear from the context, instead of $\varepsilon(N), \delta(N), \ell(N)$, we write $\varepsilon,\delta,\ell$. %When the parameters are not specified, we call an affine $(\varepsilon,\delta)$-nmSSS simply an affine nmSSS.
\end{definition}

%In the sequel, we let $\mathcal{F}=\mathcal{F}_{\mathsf{affine}}$ and give two constructions. When the parameters $t,r,P$ are clear from the context, we simply call it an affine $(\varepsilon,\delta)$-nmSSS or an affine nmSSS.

%\section{Construction using affine non-malleable extractors }\label{sec: reduction}

The general approach we take in constructing affine leakage-resilient non-malleable secret sharing in this work is to start with our adaptive affine leakage-resilient secret sharing construction in previous section and consider how to strengthen it for providing non-malleability. 

\remove{use later
We first recall the construction of binary SSS against adaptive adversary in~\cite{preprint}, which uses an invertible affine extractor and a linear (more generally {\em stochastic affine}) erasure correcting code. 
Let $\mathsf{aExt}\colon\{0,1\}^n\rightarrow\{0,1\}^\ell$ be a $\mu$-invertible affine extractor and $\mathsf{aExt}^{-1}\colon\{0,1\}^\ell\times\mathcal{R}\rightarrow\{0,1\}^n$ be its inverter that maps an $\mathsf{s}\in\{0,1\}^\ell$ to one of its pre-images chosen uniformly at random. 
Let $\mathsf{ECC}$  
be a linear erasure correcting code with encoder $\mathsf{ECCenc}\colon\{0,1\}^n\rightarrow\{0,1\}^N$ that tolerates $(1-\rho)N$ erasures. 
Given any $0\leq\tau<\rho\leq1$,
the following pair of sharing and reconstructing algorithms give a $(\tau N,\rho N,N)$-ramp scheme with binary shares, if $\mathsf{aExt}$ can extract from affine $(n,n-\tau N)$-source.

$$
\left\{
\begin{array}{ll}
\mathsf{Share}(\mathsf{s})&=\mathsf{ECCenc}(\mathsf{aExt}^{-1}(\mathsf{s}));\\
\mathsf{Recst}_R(\mathsf{y})&=\mathsf{aExt}(\mathsf{ECCdec}_R(\mathsf{y})),
\end{array}
\right.
$$
where $\mathsf{y}=\mathsf{c}_R$ is the projection of a share vector $\mathsf{c}$ on $R\subset[N]$ with $|R|=\rho N$, and $\mathsf{ECCdec}_R$ is the erasure code decoding algorithm with respect to $R$.

}

%\smallskip
%We next consider whether non-malleability as defined in Definition \ref{def: nmSSS} can be realized within this framework. 
Recall that the idea behind the constructions of affine leakage-resilient secret sharing in the previous section can be summarized as identifying an affine source and managing the extractor error (see Section 3.1). 
The analysis is focused on the message of the erasure correcting code, which is at the same time the source of the affine extractor $\mathsf{aExt}$. %when viewed as a random variable is the source of the affine extractor. 
The block-wise projection function and the affine leakage function applied to the share vector induces an affine leakage on the source of $\mathsf{aExt}$. For non-malleability, we similarly consider the {\em tampering} on the source of $\mathsf{aExt}$ induced by the share vector tampering using functions from the family $\mathcal{F}$. There are a few factors we need to take into account while mimicking the analysis for leakage-resilience. Firstly,  leakage-resilience is defined only concerning the encoder (here sharing algorithm) of the coding scheme while {\em tamper resilience} (e.g. non-malleability) involves both the encoder and the decoder. In this case, the induced source tampering should take the decoding process (here reconstruction algorithm) into account. Secondly, the reconstruction algorithm of a secret sharing only takes $r$ shares and hence the induced source tampering depends on which $r$ (tampered) shares take part in the reconstruction. Finally, the share vector tampering in Definition \ref{def: nmSSS} is chosen based on the view of the leakage adversary. We should also take that into account. 
%The problem here is a tampering problem. We still want to transfer the relation between the tampering and secret into relation between source and extractor output (This is the basic idea of seedless non-malleable extractor approach). So the tampering at the shares induces a tampering on the the source of the affine extractor. Again if the tampering is affine and the ECC is linear (in general affine), then the induced tampering is also affine. We then can close the story. The problem with reading $t$ shares is solved by conditioning. It becomes a convex combination of affine tamperings. This means once each affine function is non-malleable and t-privacy adaptive reading together are sufficient.
%Since the focus of our analysis is this special invertible affine extractor $\mathsf{aExt}''(\cdot)$, we then interpret the tampering of the share vectors as a tampering at the source of $\mathsf{aExt}''(\cdot)$.
%Unlike the non-malleable codes, the third step of the tampering experiment of a non-malleable secret sharing involves a choice of $k$ shares for reconstruction. We assume a given linear ECC. For any tampering function $f\colon \{0,1\}^N\rightarrow\{0,1\}^N$, 
%that reads $t$ shares and chooses a bit function for each share according to the observed value, 
%there are ${N\choose k}$ choices of $k$ shares $K\subset[N]$ for reconstructing the secret. We then consider tampering experiments with respect to a pair $(f,T)$. 
%Our analysis is focused on the affine extractor that provides both privacy and non-malleability for a secret sharing scheme. 
We first formerly define the concept of an {\em induced tampering} for analysing secret sharing that uses an erasure correcting code as a building block.  
\begin{definition}\label{def: induced tampering} Let $\mathsf{ECC}$ be a linear erasure correcting code with an encoder $\mathsf{ECCenc}\colon\{0,1\}^{n}\rightarrow\{0,1\}^N$ and a decoding algorithm $\mathsf{ECCdec}$. %Let $\mathcal{A}$ be a non-adaptive/adaptive leakage adversary and $\mathsf{V}$ denote its view. 
Let $\sigma$ be an $\mathcal{F}$-tampering strategy associate with $\mathcal{A}_{\beta,\mathsf{affine}}$. Let $R\subset \mathcal{P}$ be of size $|R|=r$ and $\Pi_R$ denotes the block-wise projection function on the block index set $R$. The induced tampering $g_{\sigma,R}^{\mathsf{v}}\colon \{0,1\}^{n}\rightarrow\{0,1\}^{n}$ at a particular view value %$\mathsf{View}_{\mathcal{A}_{\beta,\mathsf{affine}}}^{\mathcal{O}(\mathsf{ECCenc}(\cdot))}=
$\mathsf{v}$ 
for given $\mathsf{ECC}$, $\sigma$ and $R$ is defined as follows. %$g_{\sigma,R}^\mathsf{v}$. \colon \{0,1\}^N\rightarrow\{0,1\}^N here
\begin{equation}\label{eq: induced}
g_{\sigma,R}^{\mathsf{v}}\colon=\mathsf{ECCdec}_R\circ \Pi_R\circ f^\mathsf{v}\circ \mathsf{ECCenc},
\end{equation}
where $\sigma(\mathsf{v})=f^\mathsf{v}\in\mathcal{F}$.
\end{definition}

%Given any particular view value $\mathsf{View}^{\mathcal{O}_{t/P}(\cdot)}_\mathcal{A}=\mathsf{v}$, the affine tampering strategy $\sigma$ outputs an affine tampering function $f^\mathsf{v}\colon \{0,1\}^N\rightarrow\{0,1\}^N$. Then the induced tampering $g_{\sigma,R}^\mathsf{v}$ is an affine function if the $\mathsf{ECC}$ is linear. 

\subsection{Non-Malleable with respect to Affine Tampering}
We are now in a good position to show a reduction from affine leakage-resilient non-malleable secret sharing with respect to $\mathcal{F}_{\mathsf{affine}}$ to affine non-malleable extractors.

\begin{theorem}\label{th: ANME} 
Let $\mathsf{anmExt}\colon\{0,1\}^n\rightarrow\{0,1\}^\ell$ be a $\mu$-invertible affine non-malleable $(n-tN/P-\beta,\varepsilon_A)$-extractor and $\mathsf{anmExt}^{-1}\colon\{0,1\}^\ell\times\mathcal{R}\rightarrow\{0,1\}^n$ be its inverter that maps an $\mathsf{s}\in\{0,1\}^\ell$ to one of its pre-images chosen uniformly at random. 
Let $\mathsf{ECCenc}\colon\{0,1\}^n\rightarrow\{0,1\}^N$ be the encoder of a linear erasure correcting code  $\mathsf{ECC}$ that tolerates $N-rN/P$ erasures with decoding error $\delta$. Let
% $(\mathsf{ECCenc},\mathsf{ECCdec})$
$$
\left\{
\begin{array}{ll}
\mathsf{Share}(\mathsf{s})&=\mathsf{ECCenc(anmExt^{-1}}(\mathsf{s}))\\
\mathsf{Recst}_R(\mathsf{c}_R)&=\mathsf{anmExt}(\mathsf{ECCdec}_R(\mathsf{c}_R)),
\end{array}
\right.
$$
%where $\mathsf{y}=\mbox{block}^P(\mathsf{c})_R$ is the block-wise projection of a share vector $\mathsf{c}\in\{0,1\}^N$ on block index set 
where $R\subset\mathcal{P}$ with $|R|=r$.
Then the coding scheme $(\mathsf{Share},\mathsf{Recst})$ is an adaptive affine leakage-resilient non-malleable secret sharing with respect to $\mathcal{F}_{\mathsf{affine}}$ with security parameters $\varepsilon=(2^{\ell+1}\varepsilon_A+\mu$, $\delta$, leakage bound $\beta$ and ramp parameters $(t,r,P)$.
%Then $(\mathsf{Share},\mathsf{Recst})$ is an affine $(2^{\ell+1}\varepsilon_A+\mu,\delta)$-nmSSS for ramp parameters $(t,r,P)$. %against an adaptive leakage adversary.

%Replace the $\mathsf{aExt}$ in the adaptive binary SSS in \cite{preprint} with a $\mathsf{ANMExt}$. The catch is $\frac{t}{n}<\frac{n^{\frac{\delta}{2}}}{n}$, since the choice of affine functions are made after reading $t$ components. We also require adaptive privacy because of this. The privacy requires an exponentially small error (or almost perfect property). Coding rate is not constant: secret length is only some constant. \textcolor{red}{Don't mention coding rate, maybe? Focus on what access structures are realisable.}
\end{theorem}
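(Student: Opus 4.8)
The plan is to mimic the privacy analysis of Theorem~\ref{th: non-adaptive}/Theorem~\ref{th: low}, replacing the role of the affine extractor with the affine non-malleable extractor, and to track the tampered outcome through the induced tampering map of Definition~\ref{def: induced tampering}. Correctness, privacy, and leakage-resiliency follow essentially verbatim from the fact that $\mathsf{anmExt}$ is in particular an affine $(n-tN/P-\beta,\varepsilon_A)$-extractor: starting from a uniform secret $\mathsf{U}_\ell$, the inverter outputs (up to $\mu$) a uniform $\mathsf{U}_n$, and conditioning on the adversary view $\mathsf{v}$ (which is $t$ shares plus a $\beta$-bit affine leakage, all linear/affine functions of the $\mathsf{ECC}$-codeword, hence affine functions of the pre-image) leaves an affine source of entropy at least $n-tN/P-\beta$; then $\mathsf{SD}((\mathsf{V},\mathsf{anmExt}(\mathsf{U}_n));(\mathsf{V},\mathsf{U}_\ell))\le\varepsilon_A$, and the usual conditioning-on-the-secret step (Lemma~\ref{lem: conditioning}, as in~(\ref{eq: exponential error})) blows this up to $2^{\ell+1}\varepsilon_A+2\mu$ for an arbitrary fixed secret. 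That is the source of the claimed $\varepsilon=2^{\ell+1}\varepsilon_A+\mu$-type bound (the constant factors will be slightly massaged in the write-up).

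For non-malleability, fix $\mathcal{A}_{\beta,\mathsf{affine}}$, a tampering strategy $\sigma$, a reconstruction set $R$, and a secret $\mathsf{s}$. Again analyse the uniform-secret case first, and fix a view value $\mathsf{v}$ of the leakage adversary; conditioned on $\mathsf{V}=\mathsf{v}$, the pre-image $\mathsf{X}=\mathsf{anmExt}^{-1}(\mathsf{U}_\ell)$ is an affine source $\mathsf{X}_\mathsf{v}$ of entropy $\ge n-tN/P-\beta$. The tampering function $f^\mathsf{v}=\sigma(\mathsf{v})\in\mathcal{F}_{\mathsf{affine}}$ is applied to the share vector $\mathsf{ECCenc}(\mathsf{X})$, and reconstruction from $R$ yields $\mathsf{anmExt}(\mathsf{ECCdec}_R(\Pi_R(f^\mathsf{v}(\mathsf{ECCenc}(\mathsf{X})))))=\mathsf{anmExt}(g_{\sigma,R}^\mathsf{v}(\mathsf{X}))$, where $g_{\sigma,R}^\mathsf{v}=\mathsf{ECCdec}_R\circ\Pi_R\circ f^\mathsf{v}\circ\mathsf{ECCenc}$ is the induced tampering. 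The key structural point is that $g_{\sigma,R}^\mathsf{v}$ is an $\mathbb{F}_2$-affine function $\{0,1\}^n\to\{0,1\}^n$: it is a composition of the linear maps $\mathsf{ECCenc}$, $\Pi_R$, $\mathsf{ECCdec}_R$ with the affine map $f^\mathsf{v}$ (one must note that erasure-decoding on the fixed index set $R$ is a linear map on the unerased coordinates, and that $\mathsf{ECC}$ being a \emph{genuine} linear code makes $\mathsf{ECCdec}_R\circ\Pi_R\circ\mathsf{ECCenc}$ the identity, so no decoding error arises in this branch). Hence by the non-malleability of $\mathsf{anmExt}$ with respect to $\mathcal{F}_{\mathsf{affine}}$ (Definition~\ref{def: nmExt}, as guaranteed via Lemma~\ref{lem: ANME}/Lemma~\ref{lem: convert}), there is a distribution $\mathcal{D}_{g_{\sigma,R}^\mathsf{v}}$ over $\{0,1\}^\ell\cup\{\mathsf{same}^*\}$ with
$$
\mathsf{SD}\bigl(\mathsf{anmExt}(g_{\sigma,R}^\mathsf{v}(\mathsf{X}_\mathsf{v})),\mathsf{anmExt}(\mathsf{X}_\mathsf{v});\mathsf{Copy}(\mathcal{D}_{g_{\sigma,R}^\mathsf{v}},\mathsf{U}_\ell),\mathsf{U}_\ell\bigr)\le\varepsilon_A.
$$
Averaging over $\mathsf{v}$ (the weights are fixed once $\mathcal{A}_{\beta,\mathsf{affine}}$ is fixed, and $\mathsf{v}$ is a deterministic function of $\mathsf{X}$ for the non-adaptive reading part; for the adaptive part one argues that the view partitions the probability space into affine pieces exactly as in Theorem~\ref{th: low}) gives a combined distribution $\mathcal{D}_{\mathcal{A}_{\beta,\mathsf{affine}},\sigma,R}$, namely the mixture $\sum_\mathsf{v}\Pr[\mathsf{V}=\mathsf{v}]\,\mathcal{D}_{g_{\sigma,R}^\mathsf{v}}$, such that for a uniform secret the tampering experiment is $\varepsilon_A$-close (up to the $\mu$-inversion slack) to $\mathsf{Copy}(\mathcal{D}_{\mathcal{A}_{\beta,\mathsf{affine}},\sigma,R},\mathsf{U}_\ell)$. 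Finally, transferring from the uniform secret to an arbitrary fixed secret $\mathsf{s}$ is done with Lemma~\ref{lem: conditioning} conditioning on $\mathsf{anmExt}(\mathsf{X})=\mathsf{s}$, costing the $2^{\ell}$ factor; collecting terms yields~(\ref{eq: nm}) with the stated $\varepsilon$. Note that $\bot$ enters the support of $\mathcal{D}_{\mathcal{A}_{\beta,\mathsf{affine}},\sigma,R}$ only through the erasure-decoding $\delta$ on the honest branch, which is why $\delta$ appears as a separate parameter.

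The main obstacle I anticipate is the transfer from the uniform secret to a fixed secret while simultaneously keeping the tampered-output distribution $\mathcal{D}_{\mathcal{A}_{\beta,\mathsf{affine}},\sigma,R}$ \emph{independent of $\mathsf{s}$}: the naive conditioning argument multiplies the error by $2^\ell$, which forces $\varepsilon_A$ (hence $\mathsf{anmExt}$'s error) to be smaller than $2^{-\ell}$ — exactly the reason the paper states the result as a reduction and defers explicit instantiation to the existence of better affine non-malleable extractors. One must also be careful that the $\mathsf{same}^*$ symbol is handled correctly under this conditioning: when $g_{\sigma,R}^\mathsf{v}$ is (close to) the identity on the relevant affine source, $\mathcal{D}_{g_{\sigma,R}^\mathsf{v}}$ puts mass on $\mathsf{same}^*$, and $\mathsf{Copy}$ must faithfully reproduce the copying behaviour of the real experiment — this is precisely what Definition~\ref{def: nmExt}'s $\mathsf{Copy}$ semantics buys us, so the bookkeeping goes through, but it needs to be spelled out. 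A secondary, more routine point is verifying that $\Pi_R\circ f^\mathsf{v}\circ\mathsf{ECCenc}$ composed with erasure-decoding on $R$ is genuinely affine and that the code's erasure-correction tolerance $N-rN/P$ matches $|R|=r$ blocks surviving; this is immediate from linearity of $\mathsf{ECC}$ but should be stated.
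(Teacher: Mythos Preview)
Your proposal is correct and follows essentially the same approach as the paper's own proof: start from a uniform secret, use $\mu$-invertibility to replace $\mathsf{anmExt}^{-1}(\mathsf{U}_\ell)$ by $\mathsf{U}_n$, observe that conditioned on each view value $\mathsf{v}$ the source is affine with entropy $\ge n-tN/P-\beta$ and the induced tampering $g_{\sigma,R}^{\mathsf{v}}$ is affine, invoke the non-malleable extractor to get a per-view simulator $\mathcal{D}_{g_{\sigma,R}^{\mathsf{v}}}$, take the convex combination over $\mathsf{v}$ to obtain $\mathcal{D}_{\mathcal{A}_{\beta,\mathsf{affine}},\sigma,R}$, and finally condition on $\mathsf{anmExt}(\mathsf{U}_n)=\mathsf{s}$ via Lemma~\ref{lem: conditioning} at the cost of the $2^{\ell}$ factor. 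Your anticipated obstacles (the $2^\ell$ blow-up forcing $\varepsilon_A\ll 2^{-\ell}$, and the bookkeeping of $\mathsf{same}^*$ under conditioning) are exactly the right caveats, and the paper handles them the same way---indeed it simply absorbs the $2^\ell$ into the stated $\varepsilon$ and relies on Definition~\ref{def: nmExt}'s $\mathsf{Copy}$ semantics without further comment.
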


\begin{proof} %Standard proof of reduction (black box)$v$
Reconstruction from any $r$ shares follows trivially from the functionality of $\mathsf{ECC}$. %and the invertibility guarantee of the invertible extractor, which insures that any correctly recovered pre-image is mapped back to the original secret.
We next show privacy and leakage-resiliency. Our analysis starts with sharing a uniform secret. According to the definition of a $\mu$-invertible extractor, we have
\begin{equation}\label{eq: inverter}
\mathsf{U}_n\stackrel{\mu}{\sim}\mathsf{anmExt}^{-1}(\mathsf{U}_\ell).
\end{equation}
Without loss of generality, we will assume the message of the erasure correcting code  $\mathsf{ECC}$ is $\mathsf{U}_n$ at the cost of an increase of $\mu$ in the final error parameter.
For any adaptive 
$\beta$-bounded affine leakage adversary $\mathcal{A}_{\beta,\mathsf{affine}}$, let $\mathsf{V}\colon=\mathsf{View}_{\mathcal{A}_{\beta,\mathsf{affine}}}^{\mathcal{O}(\mathsf{ECCenc}(\mathsf{U}_n))}$ be the view of the adversary on the encoding of a uniform source. Since $\mathsf{ECCenc}$ is a linear function, $\mathsf{V}$ is the image of an affine function. This shows that $(\mathsf{U}_n|\mathsf{V}=\mathsf{v})$ is an affine source with at least $n-tN/P-\beta$ entropy. The affine non-malleable $(n-tN/P-\beta,\varepsilon_A)$-extractor $\mathsf{amnExt}$ is in particular an affine $(n-tN/P-\beta,\varepsilon_A)$-extractor, which yields
$$
((\mathsf{V},\mathsf{anmExt}(\mathsf{U}_n))|\mathsf{V}=\mathsf{v})\stackrel{\varepsilon_A}{\sim}  ((\mathsf{V},\mathsf{U}_\ell)|\mathsf{V}=\mathsf{v}) \mbox{ or simply } (\mathsf{V},\mathsf{anmExt}(\mathsf{U}_n))\stackrel{\varepsilon_A}{\sim}  (\mathsf{V},\mathsf{U}_\ell).
$$
This together with Lemma \ref{lem: conditioning} with respect to the event $\mathsf{anmExt}(\mathsf{U}_n)=\mathsf{s}$ for any secret $\mathsf{s}$ gives a privacy and leakage-resiliency error of $2^{\ell+1}\varepsilon_A$.

We finally show non-malleability.
For any affine tampering strategy $\sigma$ and $R\subset\mathcal{P}$ with $|R|=r$, 
let $\mathsf{W}\colon=g_{\sigma,R}^{\mathsf{V}}(\mathsf{U}_n)$ denote the tampered source of $\mathsf{anmExt}$. According to Definition~\ref{def: induced tampering}, the induced tampering $g_{\sigma,R}^{\mathsf{v}}$ is an affine function for any $\mathsf{V}=\mathsf{v}$. The functionality of the affine non-malleable $(n-tN/P-\beta,\varepsilon_A)$-extractor asserts that there is a distribution $\mathcal{D}_{g_{\sigma,R}^{\mathsf{v}}}$ such that
$$
((\mathsf{anmExt}(\mathsf{W}),\mathsf{anmExt}(\mathsf{U}_n))|\mathsf{V}=\mathsf{v})\stackrel{\varepsilon_A}{\sim}  (\mathsf{Copy}(\mathcal{D}_{g_{\sigma,R}^{\mathsf{v}}},\mathsf{U}_\ell),\mathsf{U}_\ell),
$$
where the two copies of $\mathsf{U}_\ell$ are the same random variable and are independent of $\mathcal{D}_{g_{\sigma,R}^{\mathsf{v}}}$. 

Let $\mathcal{D}_{\mathcal{A}_{\beta,\mathsf{affine}},\sigma,R}$ be the convex combination of $\{\mathcal{D}_{g_{\sigma,R}^{\mathsf{v}}}|\mathsf{v}\in \mathcal{V}\}$  with coefficients $\{\mathsf{Pr}[\mathsf{V}=\mathsf{v}]|\mathsf{v}\in \mathcal{V}\}$, where $ \mathcal{V}$ is the range of the affine leakage function.
We then have
\begin{equation} \label{eq: goal}
(\mathsf{anmExt}(\mathsf{W}),\mathsf{anmExt}(\mathsf{U}_n))\stackrel{\varepsilon_A}{\sim}  (\mathsf{Copy}(\mathcal{D}_{\mathcal{A}_{\beta,\mathsf{affine}},\sigma,R},\mathsf{U}_\ell),\mathsf{U}_\ell),
\end{equation}
where the two copies of $\mathsf{U}_\ell$ are the same random variable and are independent of  $\mathcal{D}_{\mathcal{A}_{\beta,\mathsf{affine}},\sigma,R}$.

Applying Lemma \ref{lem: conditioning} to (\ref{eq: goal}) with respect to the event $\mathsf{anmExt}(\mathsf{U}_n)=\mathsf{s}$ for any secret $\mathsf{s}$ yields
$$
(\mathsf{anmExt}(\mathsf{W})|\mathsf{anmExt}(\mathsf{U}_n)=\mathsf{s})\stackrel{2^\ell\cdot\varepsilon_A}{\sim}  \mathsf{Copy}(\mathcal{D}_{\mathcal{A}_{\beta,\mathsf{affine}},\sigma,R},\mathsf{s}),
$$
where $\mathcal{D}_{\mathcal{A}_{\beta,\mathsf{affine}},\sigma,R}$ is independent of $\mathsf{s}$.
%\textcolor{red}{The reconstruction with $k=\rho N$ shares is handled by a linear erasure correcting code. The linearity of the code is crucial for preserving the affine structure. In the analysis of our construction, the affine function that is applied to the source of the extractors is the concatenation of the three functions. They are ECC encoder, the chosen tampering function and ECC decoder. For a linear ECC, its encoder and decoder are both linear functions. Designing a code is focus on designing its decoder. The encoder is just its inverter.}

Since the tampering experiment with respect to the tuple $\mathcal{A}_{\beta,\mathsf{affine}},\sigma,R$ and $\mathsf{s}$ is $\mu$-close to $(\mathsf{anmExt}(\mathsf{W})|\mathsf{anmExt}(\mathsf{U}_n)=\mathsf{s})$ according to (\ref{eq: inverter}), we have 
$$
\mathsf{Tamper}_\mathsf{s}^{\mathcal{A}_{\beta,\mathsf{affine}},\sigma,R}\ \stackrel{\mu+2^\ell\cdot\varepsilon_A}{\sim}\   \mathsf{Copy}(\mathcal{D}_{\mathcal{A}_{\beta,\mathsf{affine}},\sigma,R},\mathsf{s}).
$$
%\qed
\end{proof}

%\noindent
%{\em Parameters of binary non-malleable SSS}

%Theorem \ref{th: ANMExt} can be extended to use a {\em stochastic affine} error correcting code \cite{preprint}, which is known to achieve the capacity of the adversarial erasure channel. In that case, the rate of the ECC can reach $R_{ECC}=\frac{n}{N}=\frac{k}{N}-o(1)=\rho-o(1)$. 

Theorem \ref{th: ANME} gives a clean reduction from an affine leakage-resilient non-malleable secret sharing to an invertible affine non-malleable extractor and a linear code that correct erasures. Note that we can use any explicit constructions of invertible affine non-malleable extractors and erasure correcting codes. Any improvement in the constructions of the building blocks will lead to affine leakage-resilient non-malleable secret sharing with better parameters.

\begin{remark}\label{rmk: parameters}
The constructions of affine non-malleable extractors (Lemma~\ref{lem: ANME} and Lemma~\ref{lem: convert}) require source entropy $n-n^\xi/2$ and have output length $\ell=n^{\Omega(1)}$ with extractor error $\varepsilon_A=2^{-n^{\Omega(1)}}+n2^{-n^\xi/2}$, for some $0<\xi<1$. According to \cite{CL17}, they can be made invertible with $\mu=\varepsilon_A$. This means that the privacy threshold $t$ must satisfy
$$
n-tN/P\geq n-n^\xi/2 \ \stackrel{\tau =t/P}{\longrightarrow}\  %t \leq n^\xi/2  \Rightarrow 
\frac{\tau N}{n}\leq \frac{n^\xi/2}{n}\ \stackrel{\frac{n}{N}\leq 1}{\longrightarrow}\ \tau\leq \frac{n^\xi}{2n},
$$
and the non-malleability error is $(2^\ell+1)\cdot\varepsilon_A$.
%$$
%2^\ell\cdot\varepsilon_A+\mu=2^\ell\cdot\left(2^{-n^{\Omega(1)}}+(n+1)2^{-\eta}\right)+\mu%.
%$$
The construction %of affine non-malleable extractors 
in \cite{CL17} crucially relies on high entropy of the source (entropy $n-n^\xi/2$). %and it is not clear whether constructing affine non-malleable extractors that can extract from a weak source is a hard problem. 
This means that the affine non-malleable extractors in \cite{CL17} requires the $\tau=t/P$ to be small, hence a large $P$ for given $t$.
On the other hand, by replacing the linear erasure correcting code $\mathsf{ECC}$ with a stochastic affine code, we can reconstruct the secret with any $\rho$ fraction of share vector with negligible error probability at rate $R_{ECC}=\frac{n}{N}\approx\frac{r}{P}$. And this replacement does not affect the analysis of non-malleability in Theorem \ref{th: ANME}. In particular, the induced tampering $g_{\sigma,R}^\mathsf{v}$ in (\ref{eq: induced}) becomes 
\begin{equation}\tag{\ref{eq: induced}'}
g_{\sigma,R}^{\mathsf{v}}\colon=\mathsf{ECCdec}^{\tilde{\mathsf{r}}}_R\circ \Pi_R\circ f^\mathsf{v}\circ \mathsf{ECCenc}^\mathsf{r},
\end{equation}
where $\mathsf{r}$ and $\tilde{\mathsf{r}}$ denote the randomness of the stochastic code and its tampered version, respectively. But since the stochastic code is affine, which means for any fixing of its randomness $\mathsf{r}$ both $\mathsf{ECCenc}^\mathsf{r}$ and $\mathsf{ECCdec}^{\tilde{\mathsf{r}}}_R$ are affine functions, the induced tampering $g_{\sigma,R}^\mathsf{v}$ is still an affine function. 
This means that we can obtain a scheme with arbitrary relative reconstruction threshold $\rho>\tau$. 
Finally, %although it is beyond the scope of this work, we note that the number of secret bits that can be shared non-malleably is rather limited. the secret length $\ell$ is 
the output length of the affine non-malleable extractor is $\ell=n^{\Omega(1)}$ and the non-malleability error bound from Theorem \ref{th: ANME} is $(2^\ell+1)\cdot\varepsilon_A$. In this case, we can not use all $\ell$ bits for secrets. A way to control the non-malleability error is to use $\ell-a$ bits for the real secret and append $a$ random bits. %\textcolor{blue}{Tone up a bit. Too negative on explicit result. Seeded extractor based is not explicit.}
This, however, reduces the secret length.
\end{remark}

\subsection{Non-Malleable with respect to Bit-wise Independent Tampering} \label{sec: high}
We consider strengthening the construction of affine leakage-resilient secret sharing in Theorem \ref{th: low} to obtain affine leakage-resilient non-malleable secret sharing. 
%As shown in Section \ref{sec: reduction}, the distributions we deal with in constructing binary nmSSS are affine sources (both the original source and the tampered source). This gives us an opportunity of using a seeded extractor with similar functionality to replace a seedless extractor (via Lemma \ref{lem: closedness}). 
%The price we pay for using a seeded instead of seedless extractor is minimum: only the statistical distance between the ideal world with a uniform seed and the real world with a correlated seed. This is general claim and of independent interest... (discussed in Section \ref{sec: low}). There is another issue that comes with shifting to seeded extractors. 
%To make this simple idea work, we need a more careful analysis. 
%for choosing the appropriate parameters of the $\mathsf{aExt}$ and $\mathsf{Ext}$. 
\remove{don't know why
The candidate is a {\em linear seeded non-malleable extractor} $\mathsf{nmExt}\colon\{0,1\}^d\times\{0,1\}^n\rightarrow\{0,1\}^\ell$, which by definition gives independence guarantee for two applications of the extractor, with respect to two {\em different} seeds, to the {\em same} source. 
}
%We replace the $\mathsf{anmExt}$ in the left hand side of (\ref{eq: goal}) with $\mathsf{aExt}''$. Then we use Lemma \ref{lem: closedness} to say that this is almost the same as applying a seeded extractor $\mathsf{Ext}\colon\{0,1\}^d\times\{0,1\}^n\rightarrow\{0,1\}^\ell$ to the tampered source and the original source. Ideally, we want to use the functionality of a seeded non-malleable extractor to realize (\ref{eq: goal}).
Intuitively, we want to replace the linear seeded extractor $\mathsf{Ext}$ in Theorem~\ref{th: low} with a linear seeded non-malleable extractor $\mathsf{nmExt}$.
Using a seeded non-malleable extractor in the construction of non-malleable codes %(secret sharing) 
has many challenges (as far as we known this has not been considered in the literature).
%There are many issues we need to address. 
%before we can use the seeded non-malleable extractor guarantee for proving non-malleability. 
First of all, the tampered source and the original source are not the same.  
We should first reduce the different sources situation to a same source situation in order to be able to use the functionality of $\mathsf{nmExt}$. 
% by conditioning on the reading set and the frozen set, we make the sources have same entropy, and then use linearity of Ext.
Secondly, seeded non-malleable extractors %$\mathsf{nmExt}\colon\{0,1\}^d\times\{0,1\}^n\rightarrow\{0,1\}^\ell$ 
allow the seed to be arbitrarily tampered, but impose a condition that the tampered seed should never be the same as the original seed (the seed tampering function has no fixed point). Lemma \ref{lem: closedness} only shows that the original seed and the tampered seed are both uniform and independent of the original source and tampered source, respectively.  But the two seeds could be related in an arbitrary way, for example, collide with any probability. %This means that seeded non-malleable extractor $\mathsf{nmExt}\colon\{0,1\}^d\times\{0,1\}^n\rightarrow\{0,1\}^\ell$ alone is not sufficient for replacing the seedless affine non-malleable extractor $\mathsf{anmExt}\colon\{0,1\}^n\rightarrow\{0,1\}^\ell$. 
When the tampered seed coincides the original seed, we don't have independence guarantee for the two copies of outputs. In fact, they are related. We then exploit this relation and use an AMD pre-coding of the secret to detect the tampering.
% we use AMD pre-coding and linearity of Ext.
Besides the challenges coming from using a seeded non-malleable extractor, to be able to invoke Lemma \ref{lem: closedness}, the tampered source should have enough entropy. But we know the adversary of non-malleable secret sharing can overwrite almost the full share vector and leave a small amount of entropy in the tampered source. Luckily, in this case, we can simply consider the tampered source as a leakage and make the source itself independent of the secret. To address these challenges in a systematic fashion, we define the entropy of an affine function with respect to an affine source and use it to separate our discussion into two cases.
% we separate into low and high entropy cases; the low entropy case we do not need to apply AExt'' to the tampered source.
%To address these issues, we set up a framework to separate the tampering experiments into two cases and prove non-malleability using different guarantees. 

The entropy of a function is the entropy of its output when the input is uniform. %\textcolor{red}{Since a secret sharing must facilitate reconstruction from incomplete share vector, the share vector space can not be $\{0,1\}^N$ but can only be a strict subset. From coding point of view, to enable erasure correction, we necessarily need the message space to be strictly smaller than the space that the codewords live in to allow room for redundancy. In the case of a linear ECC, the codeword space is a subspace of $\{0,1\}^N$ that has the same dimension as the message space. This in particular means that the tampering function $f\colon \{0,1\}^N\rightarrow\{0,1\}^N$ is not applied to the uniform $N$-bit string. Moreover,} 
Recall that our analysis is focused on induced tampering (see Definition~\ref{def: induced tampering}) that is applied to the source of the invertible affine extractor. 
Since the induced tampering $g_{\sigma,R}^\mathsf{v}$ is applied only under the condition that the view value is $\mathsf{v}$, we then have to consider the entropy of a function when its input is not uniform. We consider an extension of the notion and define the entropy of a function $g$ with respect to a source $\mathsf{X}$.

\begin{definition}\label{def: entropy}
The entropy of a function $g$ with respect to a source $\mathsf{X}$ is the quantity $\mathsf{H}_\infty(g(\mathsf{X}))$. 
\end{definition}

From now on, we consider a linear erasure correcting code $\mathsf{ECC}$ with encoder $\mathsf{ECCenc}\colon\{0,1\}^{d+n}\rightarrow\{0,1\}^N$. %, since the invertible affine extractor $\mathsf{aExt}''$ has an extra $d$ bits input for enabling efficient inverter. 
Let the input to $\mathsf{aExt}''$ be $(\mathsf{Sd}||\mathsf{U}_n)$. We refer to the first $d$ bits as the {\em seed indicator} and only consider $\mathsf{U}_n$ as the source of $\mathsf{aExt}''$. In fact, in the security analysis, we always consider a fixed $\mathsf{Sd}=\mathsf{sd}$. For any adaptive $\beta$-bounded affine leakage adversary $\mathcal{A}_{\beta,\mathsf{affine}}$, let $\mathsf{V}\colon=\mathsf{View}_{\mathcal{A}_{\beta,\mathsf{affine}}}^{\mathcal{O}(\mathsf{ECCenc}(\mathsf{sd}||\mathsf{U}_n))}$ denote the view of the adversary on the encoding of a uniform source. We have that $(\mathsf{U}_n|\mathsf{V}=\mathsf{v})$ is an affine source with at least $n-tN/P-\beta$ entropy. For any tampering strategy $f$ and reconstruction set $R\subset[N]$ with $|R|=r$, let 
$$
(\tilde{\mathsf{sd}}||\mathsf{W})\colon=g_{\sigma,R}^{\mathsf{V}}(\mathsf{sd}||\mathsf{U}_n)
$$ 
denote the tampered source of $\mathsf{aExt}''$. According to Definition~\ref{def: induced tampering}, the induced tampering $g_{\sigma,R}^{\mathsf{v}}$ is an affine function for any $\mathsf{V}=\mathsf{v}$. 
We call the entropy of $g_{\sigma,R}^\mathsf{v}$ with respect to the source $(\mathsf{U_n}|\mathsf{V}=\mathsf{v})$ the {\em entropy of $g_{\sigma,R}^\mathsf{v}$} for short. 
The entropy of an affine function $g$ with respect to an affine source $\mathsf{X}$ is equal to the dimension of the support of the affine source $g(\mathsf{X})$. The entropy of $g_{\sigma,R}^\mathsf{v}$ is then an integer. 
% for a given $\mathsf{ECC}$, an adversary function $f$ (specified by its reading set $A$ and tampering strategy) and reconstruction access set $T$. 
It is easier to consider $g_{\sigma,R}^\mathsf{v}$ as a function defined over the support of the distribution $\mathsf{U}_n|\mathsf{V}=\mathsf{v}$ (instead of $\{0,1\}^n$). Then we have that the entropy of $g_{\sigma,R}^\mathsf{v}$
is $\mathsf{H}_\infty(\mathsf{W}|\mathsf{V}=\mathsf{v})=\dim (\mathsf{Im}(g_{\sigma,R}^\mathsf{v}))$. Now the fundamental theorem of linear algebra yields 
\begin{equation}\label{eq: entropy}
n-\mathsf{H}_\infty(\mathsf{V})=\dim (\mathsf{Ker}(g_{\sigma,R}^\mathsf{v}))+\mathsf{H}_\infty(\mathsf{W}|(\mathsf{V}=\mathsf{v})).
\end{equation}
%where $\mathsf{Ker}(g_{\sigma,R}^\mathsf{v})$ denotes the kernel space of the affine function $g_{\sigma,R}^\mathsf{v}$ restricted to the support of $(\mathsf{U_{d+n}}|\mathsf{V}=\mathsf{v})$.  
The quantity $\dim (\mathsf{Ker}(g_{\sigma,R}^\mathsf{v}))$ characterizes the remaining entropy of $(\mathsf{U_n}|\mathsf{V}=\mathsf{v})$ after revealing $\mathsf{W}=\mathsf{w}$ for some particular $\mathsf{w}$.

We are now ready to strengthen the linear seeded extractor $\mathsf{Ext}$ in Theorem \ref{th: low} to a linear non-malleable extractor $\mathsf{nmExt}$ and show that this together with an AMD pre-coding of the secret provides non-malleability. %Since a linear non-malleable extractor is in particular a linear seeded extractor (hence also provides non-malleability for the low entropy case), we then have a binary nmSSS. 
%gives a reduction from non-malleable secret sharing to AMD coding problem. 

%icalp
\begin{theorem}\label{th: high}
Let $\mathsf{aExt}\colon\{0,1\}^n\rightarrow\{0,1\}^d$ be a $(\frac{n-tN/P-\beta}{2}-\ell,\varepsilon_A)$-affine extractor. Let $\mathsf{nmExt}\colon\{0,1\}^d\times\{0,1\}^n\rightarrow\{0,1\}^\ell$ be a linear $(\frac{n-tN/P-\beta}{2}-d,\varepsilon_E)$-strong extractor with error $\varepsilon_E<2^{-(d+3)}$.
Let $\mathsf{ECCenc}\colon\{0,1\}^{d+n}\rightarrow\{0,1\}^N$ be the encoder of a linear erasure correcting code $\mathsf{ECC}$ that corrects $N-rN/P$ erasures with probability $\delta$. Let $(\mathsf{AMDenc},\mathsf{AMDdec})$ be an AMD code with detection error $\varepsilon_{AMD}$. Let
%ly recovers the message given any $k$ codeword components.
$$
\left\{
\begin{array}{ll}
\mathsf{Share}(\mathsf{s})&=\mathsf{ECCenc}(\mathsf{Sd}||\mathsf{X}),\mbox{ where }\mathsf{X}\stackrel{\$}{\leftarrow}\mathsf{nmExt}^{-1}(\mathsf{Z},\mathsf{AMDenc}(\mathsf{s})) \mbox{ and}\\
                                                  &\ \ \ \mathsf{Sd}=\mathsf{Z}+\mathsf{aExt}(\mathsf{X}) \mbox{ with } \mathsf{Z}\stackrel{\$}{\leftarrow}\{0,1\}^d\\ %, \mathsf{A}(\mathsf{s})=\mathsf{AMDenc}(\mathsf{s})
\mathsf{Recst}_R(\mathsf{c}_R)&=\mathsf{AMDdec}(\mathsf{nmExt}(\mathsf{aExt}(\tilde{\mathsf{x}})+\tilde{\mathsf{sd}},\tilde{\mathsf{x}})) \mbox{, where }(\tilde{\mathsf{sd}}||\tilde{\mathsf{x}})=\mathsf{ECCdec}_R(\mathsf{c}_R),\\
\end{array}
\right.
$$
%where $\mathsf{y}=\mbox{block}^P(\mathsf{c})_R$ is the block-wise projection of a share vector $\mathsf{c}\in\{0,1\}^N$ on block index set 
where $R\subset\mathcal{P}$ with $|R|=r$.
Let $\varepsilon=2^{\ell+d+7}\varepsilon_A+4\varepsilon_E+\varepsilon_{AMD}$. %2^{(\ell+1)+(d+4)+1}\varepsilon_A+4\varepsilon_E$
Then the coding scheme $(\mathsf{Share},\mathsf{Recst})$ is an adaptive affine leakage-resilient non-malleable secret sharing with respect to $\mathcal{F}_{\mathsf{BIT}}$ with security parameters $\varepsilon$, $\delta$, leakage bound $\beta$ and ramp parameters $(t,r,P)$.
%Then we have a $(\varepsilon,\delta)$-nmSSS with ramp parameters $(t,r,P)$. %that is non-malleable when the induced tampering $g_{\sigma,R}^\mathsf{v}$ for $\mathsf{ECC}$, adaptive leakage adversary $\mathcal{A}$, tampering strategy $f$ and reconstruction access set $T$ has entropy at least $\frac{n-tN/P-\beta}{2}$.
\end{theorem}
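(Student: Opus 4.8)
The plan is to run the proof of Theorem~\ref{th: low} essentially verbatim but to additionally keep track of the source of $\mathsf{aExt}''$ \emph{after} a tampering $f^{\mathsf{v}}\in\mathcal{F}_{\mathsf{BIT}}$ is applied. Correctness is immediate: $\mathsf{ECC}$ recovers $(\mathsf{Sd}\|\mathsf{X})$ from any $r$ shares, invertibility of $\mathsf{aExt}''$ returns $\mathsf{nmExt}(\mathsf{aExt}(\mathsf{X})+\mathsf{Sd},\mathsf{X})=\mathsf{AMDenc}(\mathsf{s})$, and $\mathsf{AMDdec}$ returns $\mathsf{s}$. Privacy and leakage-resiliency are inherited: $\mathsf{nmExt}$ is in particular a linear strong $(\tfrac{n-tN/P-\beta}{2}-d,\varepsilon_E)$-extractor, the source conditioned on the adversary's view still has entropy $h:=n-tN/P-\beta$, comfortably above what $\mathsf{aExt}$ and $\mathsf{nmExt}$ demand ($\tfrac h2-\ell$ and $\tfrac h2-d$), and the AMD layer costs only a convexity-of-statistical-distance argument since $\mathsf{View}(\mathsf{Share}(\mathsf{s}_b))$ is the average over $\mathsf{m}\leftarrow\mathsf{AMDenc}(\mathsf{s}_b)$ of the Theorem~\ref{th: low} view on message $\mathsf{m}$. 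So the work is the non-malleability clause. Fix $\mathsf{Sd}=\mathsf{sd}$ (this only strengthens the adversary and keeps the source $\mathsf{U}_n$ uniform), fix a value $\mathsf{v}$ of $\mathsf{V}=\mathsf{View}_{\mathcal{A}_{\beta,\mathsf{affine}}}^{\mathcal{O}(\mathsf{ECCenc}(\mathsf{sd}\|\mathsf{U}_n))}$ (average over $\mathsf{v}$ at the end; $\mathsf{v}$ determines $f^{\mathsf{v}}=\sigma(\mathsf{v})$) and fix $R$. Let $\mathsf{X}:=\mathsf{U}_n\mid\mathsf{V}=\mathsf{v}$, an affine source with $h_\mathsf{v}:=\mathsf{H}_\infty(\mathsf{X})\ge h$, let $(\tilde{\mathsf{sd}}\|\mathsf{W})=g_{\sigma,R}^{\mathsf{v}}(\mathsf{sd}\|\mathsf{X})$ be the induced tampering of Definition~\ref{def: induced tampering}, and note that since $f^{\mathsf{v}}\in\mathcal{F}_{\mathsf{BIT}}$ has the form $\mathbf{y}\mapsto P_{\bar{S}}(\mathbf{y})+\mathbf{b}$ ($S$ the set of overwritten bit positions, $P_{\bar{S}}$ zeroing out $S$), $g_{\sigma,R}^{\mathsf{v}}$ acts on the message space as $\mathsf{u}\mapsto\mathsf{u}+L(\mathsf{u})+\mathbf{c}$ with $L$ linear, $\mathbf{c}$ constant. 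I split on the entropy of $g_{\sigma,R}^{\mathsf{v}}$ with respect to $\mathsf{X}$ (Definition~\ref{def: entropy}), i.e.\ on whether $\mathsf{H}_\infty(\mathsf{W}\mid\mathsf{V}=\mathsf{v})\le h/2$, using~(\ref{eq: entropy}).

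In the low-entropy case $\mathsf{H}_\infty(\mathsf{W}\mid\mathsf{V}=\mathsf{v})\le h/2$, the tampered reconstruction $\tilde{\mathsf{s}}=\mathsf{AMDdec}(\mathsf{nmExt}(\mathsf{aExt}(\mathsf{W})+\tilde{\mathsf{sd}},\mathsf{W}))$ is a function of $(\tilde{\mathsf{sd}},\mathsf{W})$, a random variable of entropy at most $h/2+d$. Conditioning on its value leaves $\mathsf{X}$ an affine source of residual entropy $\ge h_\mathsf{v}-h/2-d\ge h/2-d$, enough for both $\mathsf{aExt}$ and $\mathsf{nmExt}$; so, exactly as in Theorem~\ref{th: low} (apply Lemma~\ref{lem: closedness} inside this conditioning, then condition on the good-seed set $\mathcal{G}$), the original AMD-codeword $\mathsf{aExt}''(\mathsf{sd}\|\mathsf{X})$ is close to uniform and independent of $(\tilde{\mathsf{sd}},\mathsf{W})$, hence of $\tilde{\mathsf{s}}$. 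Thus $\tilde{\mathsf{s}}$ follows a fixed distribution independent of the secret: the ``$\tilde{\mathsf{m}}\leftarrow\mathcal{D}_f$'' branch.

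In the high-entropy case $\mathsf{H}_\infty(\mathsf{W}\mid\mathsf{V}=\mathsf{v})>h/2$, I condition on the original values of the overwritten codeword bits lying inside $R$; this makes $\mathsf{W}=\mathsf{X}+\mathbf{c}'$ for a constant $\mathbf{c}'$, and — the key point where $\mathcal{F}_{\mathsf{BIT}}$ rather than general affine tampering is used — the ``overwrite'' map $L$ and the ``keep'' map $\mathsf{id}+L$ of $g_{\sigma,R}^{\mathsf{v}}$ are complementary, so this conditioning costs only $h_\mathsf{v}-\mathsf{H}_\infty(\mathsf{W}\mid\mathsf{V}=\mathsf{v})$ bits and leaves $\mathsf{X}$ affine with residual entropy $\ge\mathsf{H}_\infty(\mathsf{W}\mid\mathsf{V}=\mathsf{v})>h/2$. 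Because $\mathsf{nmExt}(\mathsf{z},\cdot)$ is linear, $\mathsf{nmExt}(\tilde{\mathsf{z}},\mathsf{W})=\mathsf{nmExt}(\tilde{\mathsf{z}},\mathsf{X})+\mathsf{nmExt}(\tilde{\mathsf{z}},\mathbf{c}')$ for the tampered seed $\tilde{\mathsf{z}}$: the tampered extractor output is a known additive shift of $\mathsf{nmExt}$ applied to the \emph{same} source $\mathsf{X}$. Decorrelating the seed $\mathsf{Z}=\mathsf{aExt}(\mathsf{X})+\mathsf{sd}$ and its tamper from $\mathsf{X}$ via Lemma~\ref{lem: closedness} as in Theorem~\ref{th: low}, I split further on whether $\tilde{\mathsf{z}}=\mathsf{z}$. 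If $\tilde{\mathsf{z}}\ne\mathsf{z}$, this is precisely the scenario of a seeded $(\tfrac h2-d,\varepsilon_E)$-non-malleable extractor under a fixed-point-free seed tampering, so $\mathsf{nmExt}(\tilde{\mathsf{z}},\mathsf{X})$, hence $\tilde{\mathsf{s}}$, is $\varepsilon_E$-close to independent of $\mathsf{nmExt}(\mathsf{z},\mathsf{X})$ and therefore of the secret — again the $\mathcal{D}_f$ branch. If $\tilde{\mathsf{z}}=\mathsf{z}$, no extractor guarantee survives, but the tampered AMD-codeword equals the original one plus the known offset $\Delta=\mathsf{nmExt}(\mathsf{z},\mathbf{c}')$; by tamper-detection security of the AMD code, $\Delta=0$ forces $\tilde{\mathsf{s}}=\mathsf{s}$ (the ``$\mathsf{same}^{*}$'' branch) and $\Delta\ne 0$ forces $\tilde{\mathsf{s}}=\bot$ except with probability $\varepsilon_{AMD}$ — a secret-independent outcome either way. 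Finally, pass from a uniform AMD-codeword to an arbitrary $\mathsf{s}$ by conditioning on $\mathsf{aExt}''(\mathsf{sd}\|\mathsf{X})=\mathsf{m}$ and averaging over $\mathsf{m}\leftarrow\mathsf{AMDenc}(\mathsf{s})$ (Lemma~\ref{lem: conditioning}, which is where the $2^{\ell}$ factor enters), let $\mathcal{D}_{\mathcal{A}_{\beta,\mathsf{affine}},\sigma,R}$ be the $\mathsf{Pr}[\mathsf{V}=\mathsf{v}]$-weighted mixture of the per-$\mathsf{v}$ distributions, and collect the error terms ($2^{d+3}\varepsilon_A$ from Lemma~\ref{lem: closedness}, $4\varepsilon_E$ from good-seed conditioning, $\varepsilon_E$ from $\mathsf{nmExt}$-non-malleability, $\varepsilon_{AMD}$, inflated by the conditioning factors) into $\varepsilon=2^{\ell+d+7}\varepsilon_A+4\varepsilon_E+\varepsilon_{AMD}$.

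The step I expect to be the main obstacle is the high-entropy case: reducing the genuinely different tampered source $\mathsf{W}$ to ``the original source $\mathsf{X}$ up to a constant offset'' while still retaining more than $h/2$ bits of residual entropy — this is what forces both the bit-wise (rather than general affine) tampering restriction and the factor-$\tfrac12$ entropy budgets imposed on $\mathsf{aExt}$ and $\mathsf{nmExt}$ in the hypotheses — followed by the $\tilde{\mathsf{z}}=\mathsf{z}$ sub-case, where one has no randomness-extraction guarantee left and must argue purely from linearity of $\mathsf{nmExt}$ that the induced action on the secret is a data-independent additive offset that the AMD code detects. A secondary subtlety, also absorbed into Lemma~\ref{lem: closedness}, is that the seed $\mathsf{Z}$ and its tampered version are themselves functions of the source and must be decorrelated from it before the source-independent-seed definition of a seeded non-malleable extractor can be invoked. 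The remainder is bookkeeping: checking that the conditioning events ($\mathcal{G}$, the overwritten-bit values, the fixed seeds $\mathsf{z},\tilde{\mathsf{z}}$) can be imposed in a consistent order with the stated probability bounds, so that the repeated applications of Lemma~\ref{lem: conditioning} compose to the claimed $\varepsilon$.
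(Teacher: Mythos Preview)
Your proposal is correct and follows essentially the same route as the paper's proof: fix $\mathsf{Sd}=\mathsf{sd}$ and a view value $\mathsf{v}$, split on whether the induced tampering has entropy above or below $h/2$, treat the low-entropy tampered output as extra affine leakage and rerun the Theorem~\ref{th: low} argument, and in the high-entropy case exploit the bit-wise structure to write $\mathsf{W}=\mathsf{X}+\text{const}$ after a further conditioning, then split on seed collision and invoke either the seeded non-malleable extractor or the AMD code. Your conditioning on ``the original values of the overwritten codeword bits in $R$'' is exactly the paper's device of introducing the difference function $\Delta f^{\mathsf{v}}$ and conditioning on $\Delta\mathsf{W}$; the complementarity you note (the keep map and the overwrite map have complementary ranks) is what the paper records as $\mathsf{H}_\infty(\Delta\mathsf{W}\mid\mathsf{V}=\mathsf{v})=h_{\mathsf{v}}-\mathsf{H}_\infty(\mathsf{W}\mid\mathsf{V}=\mathsf{v})$.
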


The proof of Theorem \ref{th: high} is rather involved and is given in Appendix \ref{apdx: long proof}. We provide outline of the proof here.
Recall that our goal is to replace the $\mathsf{anmExt}$ in (\ref{eq: goal}) with an invertible affine extractor  $\mathsf{aExt}''(\mathsf{sd}||\cdot)\colon=\mathsf{nmExt}(\mathsf{aExt}(\cdot)+\mathsf{sd},\cdot)$ constructed from suitable affine extractor $\mathsf{aExt}$ and {\em seeded non-malleable extractor} $\mathsf{nmExt}$ such that there is a distribution $\mathcal{D}_{\mathcal{A}_{\beta,\mathsf{affine}},\sigma,R}$ satisfying
\begin{equation} \tag{\ref{eq: goal}'}
(\mathsf{aExt}''(\tilde{\mathsf{sd}}||\mathsf{W}),\mathsf{aExt}''(\mathsf{sd}||\mathsf{U}_n))\sim  (\mathsf{Copy}(\mathcal{D}_{\mathcal{A}_{\beta,\mathsf{affine}},\sigma,R},\mathsf{U}_\ell),\mathsf{U}_\ell),
\end{equation}
where $(\tilde{\mathsf{sd}}||\mathsf{W})\colon=g_{\sigma,R}^{\mathsf{V}}(\mathsf{sd}||\mathsf{U}_n)$ denote the tampered source of the affine extractor with $\mathsf{V}\colon=\mathsf{View}_{\mathcal{A}_{\beta,\mathsf{affine}}}^{\mathcal{O}(\mathsf{ECCenc}(\mathsf{sd}||\mathsf{U}_n))}$ denoting the view of the adversary $\mathcal{A}_{\beta,\mathsf{affine}}$ on the encoding of a uniform source. In other words, we want the secret $\mathsf{S}\colon=\mathsf{aExt}''(\mathsf{sd}||\mathsf{U}_n)$ to be independent of the tampered outcome $\mathsf{aExt}''(\tilde{\mathsf{sd}}||\mathsf{W})$. Similar to the proof of Theorem \ref{th: ANME}, we proceed by first conditioned on a particular view $\mathsf{V}=\mathsf{v}$. A slight difference is we now need to
%Roughly speaking, we will analyze the required properties from $\mathsf{aExt}$ and $\mathsf{Ext}$, which form the invertible affine extractor $\mathsf{aExt}''$, for the tuple $(\mathsf{V},\mathsf{aExt}''(\mathsf{W}))$ to be independent of the secret $\mathsf{S}\colon=\mathsf{aExt}''(sd||\mathsf{U}_n)$. In particular, for any $\mathsf{V}=\mathsf{v}$, the goal is $(\mathsf{aExt}''(\mathsf{W}),\mathsf{S}|\mathsf{V}=\mathsf{v})\sim (\mathsf{aExt}''(\mathsf{W}),\mathsf{U}_\ell|\mathsf{V}=\mathsf{v})$. Instead of considering all affine tampering functions and requiring $\mathsf{aExt}''$ to be a non-malleable affine extractor as in previous black box reduction in Section \ref{sec: reduction}, 
discuss two cases according to the entropy $\mathsf{H}_\infty(\mathsf{W}|\mathsf{V}=\mathsf{v})$. %of the induced tampering $g_{\sigma,R}^\mathsf{v}$. 
%This is because we are mimicking a seedless non-malleable extractor using a seeded one using Lemma \ref{lem: closedness}, which demands some entropy in $\mathsf{W}$ for the generation of an uniform seed. Luckily, in the case when $\mathsf{W}$ has low entropy, we can prove a stronger result (\ref{eq: stronger}), which does not have the term $\mathsf{aExt}''(\tilde{\mathsf{sd}}||\mathsf{W})$.

\begin{enumerate}
\item If the entropy $\mathsf{H}_\infty(\mathsf{W}|\mathsf{V}=\mathsf{v})$ is less than $\frac{n-tN/P-\beta}{2}$, we can prove (\ref{eq: stronger}). Intuitively, if the induced affine tampering function $g_{\sigma,R}^{\mathsf{v}}(\cdot)$ overwrites many bits and the information contained in $\mathsf{W}$ is small enough that we can consider $\mathsf{W}$ as a virtual leakage (together with the real leakage $\mathsf{V}$) and directly argue independence. More concretely, the affine source $\mathsf{U}_n|(\mathsf{V}=\mathsf{v},\mathsf{W}=\mathsf{w})$ has entropy $n-\mathsf{H}_\infty(\mathsf{V})-\mathsf{H}_\infty(\mathsf{W}|\mathsf{V}=\mathsf{v})$, which is at least $n-tN/P-\frac{n-tN/P-\beta}{2}$, big enough for the affine extractor $\mathsf{aExt}''(\cdot)$. 
We then have $(\mathsf{aExt}''(\mathsf{sd}||\mathsf{U}_n))|(\mathsf{V}=\mathsf{v},\mathsf{W}=\mathsf{w}))\sim \mathsf{U}_\ell$ and hence
\begin{equation}\label{eq: stronger}
((\mathsf{W},\mathsf{aExt}''(\mathsf{sd}||\mathsf{U}_n)))|\mathsf{V}=\mathsf{v})\sim ((\mathsf{W},\mathsf{U}_\ell)|\mathsf{V}=\mathsf{v}).
\end{equation}
%This is stronger than and implies the goal (\ref{eq: goal}'). %$(\mathsf{aExt}''(\tilde{\mathsf{sd}}||\mathsf{W}),\mathsf{S}|\mathsf{V}=\mathsf{v})\sim (\mathsf{aExt}''(\tilde{\mathsf{sd}}||\mathsf{W}),\mathsf{U}_\ell|\mathsf{V}=\mathsf{v})$.
% and does not require other property from $\mathsf{aExt}$ than extractor error guarantee by infinity norm. We set parameters of $\mathsf{aExt}$ such that $\dim (\mathsf{Im}(g_{\sigma,R}^\mathsf{v}))<\frac{n-tN/P-\beta}{2}$ is the low entropy case. 
%The choice of the boundary makes sure there are only two cases and in the 

\item If the entropy $\mathsf{H}_\infty(\mathsf{W}|\mathsf{V}=\mathsf{v})$ is at least $\frac{n-tN/P-\beta}{2}$, our target is (\ref{eq: goal}') and we have enough entropy for generating an independent uniform seed for $\mathsf{nmExt}$ in the term $\mathsf{aExt}''(\tilde{\mathsf{sd}}||\mathsf{W})$. But two differences between seedless and seeded non-malleable extractors prevent us from obtaining (\ref{eq: goal}'), and have to settle for (\ref{eq: weaker}). 
%no less than $\frac{n-tN/P-\beta}{2}$, 
%$(\mathsf{V},\mathsf{W})$ necessarily depends on $\mathsf{aExt}(\mathsf{U}_n)=\mathsf{S}$. 
%we have to go back to
%\begin{equation} \tag{\ref{eq: goal}'}%\label{eq: goal}
%(\mathsf{V},\mathsf{aExt}(\mathsf{W}),\mathsf{aExt}(\mathsf{U}_n))\sim (\mathsf{V},\mathsf{aExt}(\mathsf{W}),\mathsf{U}_\ell).
%\end{equation}
%which demands non-malleable property from $\mathsf{aExt}$.
%\textcolor{red}{connect to similar analysis for black box reduction}
%we instead consider a weaker guarantee than the goal (\ref{eq: goal}'). %$(\mathsf{aExt}''(\mathsf{W}),\mathsf{S}|\mathsf{V}=\mathsf{v})\sim (\mathsf{aExt}''(\mathsf{W}),\mathsf{U}_\ell|\mathsf{V}=\mathsf{v})$.
%is sufficient for a reduction from the non-malleable secret sharing tampering adversary to an AMD adversary. 
Roughly speaking, we allow the tampered outcome to be related to the original secret $\mathsf{S}\colon=\mathsf{aExt}''(\mathsf{sd}||\mathsf{U}_n)$ in a simple way (thanks to restriction to bit-wise tampering) in the event $\bar{\mathcal{E}}_{g_{\sigma,R}^\mathsf{v}}$, when the tampered seed is the same as the original seed and the security of a seeded non-malleable extractor is not available. %that is determined by the induced tampering $g_{\sigma,R}^\mathsf{v}$ with a probability also determined by $g_{\sigma,R}^\mathsf{v}$.
More concretely,
\begin{equation}\label{eq: weaker}
\left\{
\begin{array}{ll}
((\mathsf{aExt}''(\tilde{\mathsf{sd}}||\mathsf{W}),\mathsf{S})|(\mathsf{V}=\mathsf{v},\mathcal{E}_{g_{\sigma,R}^\mathsf{v}}))&\sim ((\mathsf{aExt}''(\tilde{\mathsf{sd}}||\mathsf{W}),\mathsf{U}_\ell)|(\mathsf{V}=\mathsf{v},\mathcal{E}_{g_{\sigma,R}^\mathsf{v}}))\\
((\mathsf{aExt}''(\tilde{\mathsf{sd}}||\mathsf{W}),\mathsf{S})|(\mathsf{V}=\mathsf{v},\bar{\mathcal{E}}_{g_{\sigma,R}^\mathsf{v}}))&\sim ((\mathsf{S}+\Delta_{g_{\sigma,R}^\mathsf{v}},\mathsf{S})|(\mathsf{V}=\mathsf{v},\bar{\mathcal{E}}_{g_{\sigma,R}^\mathsf{v}})),\\
\end{array}
\right.
\end{equation}
where $\mathcal{E}_{g_{\sigma,R}^\mathsf{v}}$ denotes the event that the tampered seed is different from the original seed, which is solely determined by $g_{\sigma,R}^\mathsf{v}$,  and $\Delta_{g_{\sigma,R}^\mathsf{v}}$ is a distribution determined by $g_{\sigma,R}^\mathsf{v}$ (hence independent of $\mathsf{S}$). %\textcolor{red}{To be safe, may have to set the boundary $\frac{n}{2}$} 
In the event $\mathcal{E}_{g_{\sigma,R}^\mathsf{v}}$, the reconstructed secret is independent of the original secret.
In the event $\bar{\mathcal{E}}_{g_{\sigma,R}^\mathsf{v}}$, the AMD decoder outputs $\bot$, by definition. %This is the only place the AMD code is needed.
\end{enumerate}

\smallskip
\noindent
\begin{remark}[On Explicit Constructions of Linear Non-malleable Extractors] 
The only linear non-malleable extractors we found in the literature is an inner product based construction $\mathsf{IP(X, enc(Z))}$, where $\mathsf{IP}(\cdot,\cdot)$ denotes the inner product of vectors over finite field $\mathbb{F}_q$ and $\mathsf{enc(Z)}$ is a specific encoding of the seed $\mathsf{Z}$ \cite{XinLiIP}. Let $q=2^\ell$. We can have a non-malleable extractor that outputs $\ell$ bits with exponentially small error, if the source $\mathsf{X}\leftarrow\mathbb{F}_q^{\frac{n}{\ell}}$ has more than {\em half} entropy rate. This extractor is $\mathbb{F}_2$-linear because for any seed $\mathsf{Z}=\mathsf{z}$, we have $\mathsf{IP(X+X', enc(z))}=\mathsf{IP(X, enc(z))}+\mathsf{IP(X', enc(z))}$. This linear non-malleable extractor's output is a constant fraction of $n$ and error is exponentially small in $n$. %, while we only need a $\Omega(\log n)$ number of bits with error negligible in $n$. 
This extractor requires a source entropy rate bigger than half, which makes it not applicable in our construction since the entropy requirement of $\mathsf{nmExt}$ is $\frac{n-tN/P-\beta}{2}-d<\frac{n}{2}$.
\end{remark}

This entropy rate around half barrier existed in the literature of (non-linear) non-malleable extractor constructions \cite{DLTWZ14}, but was quickly overcome \cite{overcome}, being only a technical barrier (not inherent). We next show that to output a $\Omega(\log n)$ number of uniform bits with negligible error, at most $\phi n$ bits of entropy suffices, for any constant $\phi>0$. This is shown using a probabilistic argument (see Appendix \ref{apdx: existence proof} for its proof) and we leave the explicit construction as an interesting open problem.

%Constructing explicit seeded non-malleable extractors is an active research topic 
%There are instances of explicit linear seeded non-malleable extractors from the inner product function \cite{XinLiIP}, which unfortunately can not be used in our construction. 
We conclude this section by stating an existence result for the linear seeded non-malleable extractors with our required parameters.   

\begin{theorem}\label{th: existence}
For all integers $n,d,m$ and positive parameters $k,\varepsilon$, there is a linear seeded non-malleable $(k,\varepsilon)$-extractor $\mathsf{E}:\{0,1\}^d\times\{0,1\}^n\rightarrow\{0,1\}^m$ provided that
\begin{equation}\label{eq: existence}
\left\{
\begin{array}{ll}
d&\geq\log(n/\varepsilon^2)+O(1),\\
2m&\leq\log(k+\log\varepsilon)-\log(1/\varepsilon^2)-\log d-O(1).
\end{array}
\right.
\end{equation}
\end{theorem}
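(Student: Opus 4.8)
The plan is to argue by the probabilistic method. I would draw $\mathsf{E}$ uniformly at random from linear seeded functions: independently for each seed $z\in\{0,1\}^d$ pick a uniformly random matrix $M_z\in\mathbb{F}_2^{m\times n}$ and set $\mathsf{E}(z,x)=M_z x$. Such an $\mathsf{E}$ is linear for every fixed seed by construction, so the task is to show that with positive probability it satisfies the non-malleability condition of Definition~\ref{def: NMExt} for all $(n,k)$-sources and all seed tamperings with no fixed point, provided the two displayed inequalities hold.

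The first step is to reduce the universal quantifiers to a finite collection of events. Since statistical distance is jointly convex and, for a fixed tampering $\mathcal{A}$, the two distributions compared in the non-malleability inequality are each affine in the source distribution, it suffices to consider flat $(n,k)$-sources, i.e.\ uniform distributions over sets $S\subseteq\{0,1\}^n$ with $|S|=2^k$; there are $\binom{2^n}{2^k}$ of them. Also, the non-malleability quantity is an average over the seed, $\frac{1}{2^d}\sum_z \mathsf{SD}\big((\mathsf{E}(\mathcal{A}(z),\mathsf{X}),\mathsf{E}(z,\mathsf{X}));(\mathsf{E}(\mathcal{A}(z),\mathsf{X}),\mathsf{U}_m)\big)$, so for each source the worst admissible $\mathcal{A}$ simply pairs every $z$ with the partner $z'\ne z$ maximising the per-seed term. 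Hence it is enough to prove that, for every flat source $S$, all but an $\varepsilon/2$ fraction of seeds $z$ fail to have a ``bad partner'' — a $z'\ne z$ with $\mathsf{SD}\big((\mathsf{E}(z',\mathsf{X}),\mathsf{E}(z,\mathsf{X}));(\mathsf{E}(z',\mathsf{X}),\mathsf{U}_m)\big)>\varepsilon/2$.

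The heart of the estimate is the per-pair analysis, for fixed $S$ and fixed $z\ne z'$. As $M_z,M_{z'}$ are independent and uniform, the $2m\times n$ matrix $M$ obtained by stacking $M_{z'}$ on top of $M_z$ is itself uniform, so $(\mathsf{E}(z',\mathsf{X}),\mathsf{E}(z,\mathsf{X}))=M\mathsf{X}$ is a uniformly random $\mathbb{F}_2$-linear image of the flat source $\mathsf{X}$. Because $\mathbb{F}_2$-linear maps form a universal hash family on distinct inputs, the leftover hash lemma bounds $\mathbb{E}\big[\mathsf{SD}(M\mathsf{X};\mathsf{U}_{2m})\big]$ by $\tfrac12\,2^{(2m-k)/2}$, and since the marginal $\mathsf{E}(z',\mathsf{X})$ is itself close to uniform this controls the target statistical distance in expectation; for the tail I would either use a second-moment bound or, more sharply, a heavy-frequency count — $S$ has at most $2^{n-k}/\tau$ Fourier characters of squared bias $\ge\tau$, and by the Plancherel bound $4\,\mathsf{SD}(M\mathsf{X};\mathsf{U}_{2m})^2\le\sum_{0\neq v}\widehat{p_S}(M^{\!T}v)^2$ the pair is bad only when the uniformly random $2m$-dimensional row space of $M$ meets this small set, an event of probability roughly $2^{O(m)-k}\varepsilon^{-O(1)}$ once $\tau$ is set to about $\varepsilon^2 2^{-2m}$.

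The final step, and the one I expect to be the main obstacle, is to pass from this per-pair bound to ``for every source, few seeds are bad'' and then union bound. The delicacy is that a random $\mathbb{F}_2$-linear map has only polynomially small failure probability on a single source — a fixed source ``sees'' only $O(n)$ worth of freedom of the matrix — so one cannot afford a naive union bound over all $\binom{2^n}{2^k}$ flat sources at the level of individual extractor failure. Instead one must exploit the seeded structure: it is enough that only a small fraction of the $2^d$ seeds fails per source, and I would concentrate that fraction (bounding the number of bad seeds by the number of bad seed pairs and using that the latter is essentially a sum of $2^d$ near-independent contributions) so that its deviation from its small mean occurs with probability below $2^{-\Omega(\varepsilon^2 2^d)}$, which — once $2^d\gtrsim n/\varepsilon^2$ — beats the source count; simultaneously one is forced to keep the output short, $2m=O(\log k)$, so that each $2m$-dimensional row space generically misses the $\le 2^{n-k}/\tau$ heavy frequencies of every source. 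Reconciling these requirements — seed length for the per-source concentration, output length for the heavy-frequency collisions, and entropy for the leftover-hash expectation — is exactly what yields $d\ge\log(n/\varepsilon^2)+O(1)$ and $2m\le\log(k+\log\varepsilon)-\log(1/\varepsilon^2)-\log d-O(1)$, and also explains why the usual entropy-rate-$\tfrac12$ barrier for non-malleable extractors is not inherent here, at the cost of a logarithmic output length.
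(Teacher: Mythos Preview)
Your probabilistic setup is correct and matches the paper's, but the union-bound step has a real gap. You propose to union bound over all flat $(n,k)$-sources; there are $\binom{2^n}{2^k}\approx 2^{n\cdot 2^k}$ of them. Your per-source concentration gives failure probability $2^{-\Omega(\varepsilon^2 2^d)}$, which with $2^d\approx n/\varepsilon^2$ is only $2^{-\Omega(n)}$. That does not ``beat the source count''; it would force $d\ge k+\log(n/\varepsilon^2)$, not $d\ge \log(n/\varepsilon^2)+O(1)$. The heavy-frequency idea does not rescue this: different sources have different heavy characters, and there is no global set of size $\mathrm{poly}(2^n)$ that captures them all.

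The paper avoids the source union bound altogether via a different reduction (adapted from Dodis--Wichs). For a fixed distinguisher $\mathcal{D}$ and seed tampering $\mathcal{A}$, call an \emph{input} $\mathsf{x}\in\{0,1\}^n$ bad if the single-point distinguishing advantage exceeds $\varepsilon$; if the bad set has size $\le\varepsilon 2^k$ for every $(\mathcal{A},\mathcal{D})$, then $\mathsf{E}$ is a $(k,2\varepsilon)$-non-malleable extractor for \emph{all} sources. The key linear-algebra observation is that any set of size $\varepsilon 2^k$ contains at least $k+\log\varepsilon$ linearly independent vectors, and for linearly independent $\mathsf{x}$'s the random variables $\{\mathbf{E}(\mathsf{s},\mathsf{x}):\mathsf{s}\in\{0,1\}^d\}$ are jointly independent under the random-matrix model, so the per-input martingale bound $\Pr[\mathsf{x}\text{ bad}]\le 4\exp(-2^{d-4}\varepsilon^2)$ multiplies across those $k+\log\varepsilon$ inputs. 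The remaining union bound is only over linearly independent subsets of size $k+\log\varepsilon$ (cost $\binom{2^n}{k+\log\varepsilon}\le 2^{n(k+\log\varepsilon)}$), over distinguishers (cost $2^{2^{d+2m}}$), and over adversaries (cost $2^{d2^d}$). Balancing these three costs against $\exp\big(2^{d-4}\varepsilon^2(k+\log\varepsilon)\big)$ is exactly what produces the two displayed conditions, including the $\log d$ term and the appearance of $k+\log\varepsilon$ inside the logarithm. Your Fourier/leftover-hash machinery is not needed, and your flat-source enumeration is the step that fails.
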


%\section{Applications to secure communication}
%Non-malleable wiretap II codes

\section{Conclusion}\label{sec: conclu}
We studied leakage-resilient secret sharing in the non-compartmentalized models and explicitly constructed them for the class of affine leakage functions. The adversary can apply affine leakage functions to the full share vector to obtain the outputs (subject to only a total length bound) as well as outputting any unauthorized set of shares. We gave constructions for non-adaptive adversary and adaptive adversary, respectively. The construction for non-adaptive adversary is near optimal in the sense that the secret length is almost equal to the share length minus the number of leaked bits. We extended our study to make these affine leakage-resilient secret sharing also non-malleable with respect to a family $\mathcal{F}$ of tampering functions. We gave a construction for the family $\mathcal{F}_{\mathsf{affine}}$ of affine tampering functions for secret sharing with low threshold. For the family $\mathcal{F}_{\mathsf{BIT}}$ of Bit-wise Independent Tampering functions, we gave a construction with all choice of threshold. One interesting open question is whether affine leakage and tampering can be studied for secret sharing with arbitrary monotone access structure. Or on the other hand,  whether other non-compartmentalized models can be studied for secret sharing, even the threshold secret sharing. Our results about leakage-resilient non-malleable secret sharing also motivate open questions concerning explicit constructions of randomness extractors, in particular, affine non-malleable extractors and linear seeded non-malleable extractors.

%We proposed a natural non-malleable secret sharing model called adaptive tampering, as opposed to the known independent tampering and joint tampering. We considered the affine tampering family and constructed non-malleable secret sharing in adaptive tampering model.

%\textcolor{blue}{Totally rewritten!!!}
%We considered the general model of non-malleable secret sharing with respect to a function family, where the tapering of the share vector is a function from the family chosen adaptively according to the information contained in an unauthorized subset of shares, and propose a modular approach that combines an erasure correcting code with an invertible seedless non-malleable extractor. It is interesting to see if non-malleable secret sharing with respect to the $C$-split state family can be constructed in this approach.
%\textcolor{red}{In open problem say the obstacle for generalizing this construction to bigger share size is in the sharing tampering (no longer affine). }
%\textcolor{blue}{The independent interest result comes from a more involved analysis exploiting linearity of the extractor and affine source. The existence proof, linear function restriction gives bad parameters, but if only extract from affine source, parameter is improved.}

\bibliographystyle{alpha}%plainnat plain
\bibliography{nmSSS.bib}

%%%%%%%%%%%%%%%%%%%%%%%%%%%%%%%%%%%%%%%%%%%%%%%%%%%%%%%%%%%%%%%%%%%%%%%
%%%%%%%%%%%%%%%%%%%%%%%%%%%%%%%%%%%%%%%%%%%%%%%%%%%%%%%%%%%%%%%%%%%%%%%%%%%%%%%%%%%%%%%%%%%%%%%%%%%%%%%%%%%%%%%%%%%%%%%%%%%%%%%%%%%%%%%%%%%%%%

%\remove{
%\appendix
%\section*{Appendices}
%\addcontentsline{toc}{section}{Appendices}
%\renewcommand{\thesubsection}{\Alph{subsection}}

\appendix
\section*{Appendices}
\addcontentsline{toc}{section}{Appendices}
\renewcommand{\thesubsection}{\Alph{subsection}}

%%%%%%%%%%%%%%%%%%%%%%%%%%%%%%%%%%%%%%%%%%%%
%Appendix
%%%%%%%%%%%%%%%%%%%%%%%%%%%%%%%%%%%%%%%%%%%%

%%%%%%%%%%%%%%%%%%%%%%%%%
%A
\section{Proof for Lemma \ref{lem: conditioning}} \label{apdx: conditioning}
\begin{proof}
Assume by contradiction that $\mathsf{SD}(\mathsf{V}|\mathsf{W}\in\mathcal{E};\mathsf{V}'|\mathsf{W}'\in\mathcal{E})>\frac{2\varepsilon}{\mathsf{Pr}[\mathsf{W}\in\mathcal{E}]}=\varepsilon_0$. %Assume that $\mathsf{Pr}[\mathsf{B}\in\mathcal{E}]>\mathsf{Pr}[\mathsf{B}'\in\mathcal{E}]$. If otherwise, one can switch the roles of $(\mathsf{V},\mathsf{B})$ and $(\mathsf{V}',\mathsf{B}')$ in the rest of the proof.
W.l.o.g. there is an event $\Omega\subset\mathcal{V}$ (complementing $\Omega$ if necessary) , such that 
$$
\mathsf{Pr}[\mathsf{V}\in\Omega|\mathsf{W}\in\mathcal{E}]-\mathsf{Pr}[\mathsf{V}'\in\Omega|\mathsf{W}'\in\mathcal{E}]>\varepsilon_0.
$$
Now consider the event $\Omega\times\mathcal{E}\subset\mathcal{V}\times\mathcal{W}$. We have
$$
\left \{
\begin{array}{ll}
\mathsf{Pr}[(\mathsf{V},\mathsf{W})\in\Omega\times\mathcal{E}]&=\mathsf{Pr}[\mathsf{V}\in\Omega|\mathsf{W}\in\mathcal{E}]\cdot\mathsf{Pr}[\mathsf{W}\in\mathcal{E}];\\
\mathsf{Pr}[(\mathsf{V}',\mathsf{W}')\in\Omega\times\mathcal{E}]&=\mathsf{Pr}[\mathsf{V}'\in\Omega|\mathsf{W}'\in\mathcal{E}]\cdot\mathsf{Pr}[\mathsf{W}'\in\mathcal{E}].\\
\end{array}
\right.
$$
On the other hand, we have $\mathsf{SD}(\mathsf{W};\mathsf{W}')\leq\mathsf{SD}(\mathsf{V},\mathsf{W};\mathsf{V}',\mathsf{W}')\leq \varepsilon$ and hence
$$
\mathsf{Pr}[\mathsf{W}\in\mathcal{E}]\geq\mathsf{Pr}[\mathsf{W}'\in\mathcal{E}]-\varepsilon.
$$
We then can derive the following contradiction.
$$
\begin{array}{l}
\mathsf{Pr}[(\mathsf{V},\mathsf{W})\in\Omega\times\mathcal{E}]-\mathsf{Pr}[(\mathsf{V}',\mathsf{W}')\in\Omega\times\mathcal{E}]\\
\geq\mathsf{Pr}[\mathsf{W}'\in\mathcal{E}]\cdot(\mathsf{Pr}[\mathsf{V}\in\Omega|\mathsf{W}\in\mathcal{E}]-\mathsf{Pr}[\mathsf{V}'\in\Omega|\mathsf{W}'\in\mathcal{E}])-\varepsilon\\
>\mathsf{Pr}[\mathsf{W}'\in\mathcal{E}]\cdot\varepsilon_0-\varepsilon\\
=\varepsilon.
\end{array}
$$
This concludes the proof.
\end{proof}

%%%%%%%%%%%%%%%%%%%%%%%%%
%B
\section{Proof for Theorem \ref{th: non-adaptive}}\label{apdx: proof of non-adaptive}
The proof of Theorem \ref{th: non-adaptive} will follow naturally from Lemma \ref{th: extractor property}.  We first recall this general property of a linear strong extractor, which is proved in \cite{preprint}. 
%$(k,\varepsilon)$-extractor $\ext\colon \zo^d \times \zo^n \to \zo^m$. 
%with respect to an affine function $f_A\colon \zo^{d+n} \to \zo^t$. 
%Roughly speaking, we prove that the pre-images (the pre-image of $\mathsf{m}\in\{0,1\}^m$ is a random variable tuple $(\mathsf{Z},\mathsf{X})$ that satisfies the condition $\mathsf{Ext}(\mathsf{Z},\mathsf{X})=\mathsf{m}$) of any two extractor outputs can not be distinguished by any affine function $f_A\colon \zo^{d+n} \to \zo^t$ with $t\leq n-k$. The privacy of the SSS in Theorem \ref{th: wiretap} then follows trivially as a natural consequence of this property.
%For the property to hold, we in fact only need the extractor to be able to extract from affine sources. %

\begin{lemma}[\cite{preprint}]\label{th: extractor property}
Let $\ext\colon \zo^d \times \zo^n \to \zo^m$ be a linear strong $(k,\varepsilon)$-extractor. Let $f_A\colon \zo^{d+n} \to \zo^a$ be any affine function with output length $a\leq n-k$. For any $\mathsf{m},\mathsf{m}'\in\{0,1\}^m$, let $(\mathsf{Z},\mathsf{X})=(\mathsf{U}_d,\mathsf{U}_n)|\left(\mathsf{Ext}(\mathsf{U}_d,\mathsf{U}_n)=\mathsf{m}\right)$ and $(\mathsf{Z}',\mathsf{X}')=(\mathsf{U}_d,\mathsf{U}_n)|\left(\mathsf{Ext}(\mathsf{U}_d,\mathsf{U}_n)=\mathsf{m}'\right)$. We have 
\begin{equation}\label{eq: pairwise}
\mathsf{SD}(f_A(\mathsf{Z},\mathsf{X});f_A(\mathsf{Z}',\mathsf{X}'))\leq 8\varepsilon.
\end{equation}
\end{lemma}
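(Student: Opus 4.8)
\medskip
\noindent\textbf{Proof plan.} Write the affine leakage map as $f_A(z,x)=A_1 z+A_2 x+b$ with $A_1\in\zo^{a\times d}$, $A_2\in\zo^{a\times n}$, $b\in\zo^{a}$. Since $\mathrm{rank}(A_2)\le a\le n-k$, the kernel $\ker A_2\subseteq\zo^{n}$ has dimension at least $k$; fix once and for all a $k$-dimensional subspace $S\subseteq\ker A_2$ and let $\mathsf{X}_S$ denote the flat $(n,k)$-source supported on $S$. The plan is to show that for \emph{every} $\mathsf{m}\in\zo^{m}$ the law of $f_A(\mathsf{Z},\mathsf{X})$ is $4\varepsilon$-close to the unconditioned law $f_A(\mathsf{U}_d,\mathsf{U}_n)$; the triangle inequality then yields $\mathsf{SD}(f_A(\mathsf{Z},\mathsf{X});f_A(\mathsf{Z}',\mathsf{X}'))\le 8\varepsilon$. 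We analyze $f_A\big((\mathsf{U}_d,\mathsf{U}_n)\mid \mathsf{Ext}(\mathsf{U}_d,\mathsf{U}_n)=\mathsf{m}\big)$ by first conditioning on the seed $\mathsf{Z}=z$ and then averaging over $z$.

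\medskip
\noindent\textbf{Per-seed analysis using linearity.} The crucial point is that $\mathsf{Ext}(z,\cdot)$ is linear for each fixed $z$. Call $z$ \emph{good} if the restriction $\mathsf{Ext}(z,\cdot)|_{\ker A_2}$ is onto $\zo^{m}$. A good seed is in particular surjective, so $|\mathsf{Ext}(z,\cdot)^{-1}(\mathsf{m})|=2^{n-m}$ for \emph{every} $\mathsf{m}$; and a short rank count (goodness forces $\ker A_2+\ker\mathsf{Ext}(z,\cdot)=\zo^{n}$) gives $A_2\big(\ker\mathsf{Ext}(z,\cdot)\big)=\mathrm{Im}(A_2)$. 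For good $z$, the source $\mathsf{U}_n$ conditioned on $\mathsf{Ext}(z,\cdot)=\mathsf{m}$ is uniform on a coset $c_z(\mathsf{m})+\ker\mathsf{Ext}(z,\cdot)$ (any $c_z(\mathsf{m})$ with $\mathsf{Ext}(z,c_z(\mathsf{m}))=\mathsf{m}$), where the \emph{subspace} $\ker\mathsf{Ext}(z,\cdot)$ does not depend on $\mathsf{m}$; applying $f_A(z,\cdot)$ yields the uniform distribution on $A_1 z+b+A_2 c_z(\mathsf{m})+A_2\ker\mathsf{Ext}(z,\cdot)=A_1 z+b+\mathrm{Im}(A_2)$, the last equality because $A_2 c_z(\mathsf{m})\in\mathrm{Im}(A_2)$. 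Hence the conditional law of $f_A(\mathsf{Z},\mathsf{X})$ given $\{\mathsf{Z}=z\}$ is \emph{independent of $\mathsf{m}$} for good $z$, and coincides with that of $f_A(\mathsf{U}_d,\mathsf{U}_n)$ given $\{\mathsf{U}_d=z\}$. Since every good seed supplies the same number $2^{n-m}$ of $\mathsf{m}$-preimages, the seed marginal of $(\mathsf{Z},\mathsf{X})$ \emph{conditioned on being good} is uniform over the good seeds, for every $\mathsf{m}$ — the same as that of $\mathsf{U}_d$ conditioned on being good. Thus the good-seed parts of $f_A(\mathsf{Z},\mathsf{X})$ and of $f_A(\mathsf{U}_d,\mathsf{U}_n)$ are identical.

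\medskip
\noindent\textbf{Few bad seeds and conclusion.} If $z$ is not good, then $\mathsf{Ext}(z,S)\subseteq\mathsf{Ext}(z,\ker A_2)$ is a proper subspace of $\zo^{m}$, so $\mathsf{Ext}(z,\mathsf{X}_S)$ is uniform on a proper subspace and $\mathsf{SD}(\mathsf{Ext}(z,\mathsf{X}_S);\mathsf{U}_m)\ge\tfrac12$. Since $\mathsf{Ext}$ is a strong $(k,\varepsilon)$-extractor, $\mathbb{E}_{z}\,\mathsf{SD}(\mathsf{Ext}(z,\mathsf{X}_S);\mathsf{U}_m)\le\varepsilon$, so at most a $2\varepsilon$ fraction of seeds is not good. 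Writing $f_A(\mathsf{Z},\mathsf{X})\mid\{\mathsf{Ext}=\mathsf{m}\}$ as a mixture of its good-seed component (the $\mathsf{m}$-independent distribution above) and its bad-seed component, and likewise for the unconditioned $f_A(\mathsf{U}_d,\mathsf{U}_n)$, the two mixtures share the common good-seed component and differ by at most the conditional probability of landing on a bad seed; bounding the latter by $4\varepsilon$ then closes the argument.

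\medskip
\noindent\textbf{The main obstacle.} The step I expect to be hardest is the last one — bounding the conditional probability of a bad seed by $O(\varepsilon)$. The conditioning event $\{\mathsf{Ext}(\mathsf{U}_d,\mathsf{U}_n)=\mathsf{m}\}$ has probability only $\approx 2^{-m}$, so a naïve conditioning estimate reweights the seed marginal and loses a factor $2^{m}$. The resolution has to exploit the linear structure more sharply: a \emph{surjective} seed supplies a number of $\mathsf{m}$-preimages that is the same for all $\mathsf{m}$ (so the conditioning does not favor it), while a \emph{non-surjective} seed drives $\mathsf{Ext}(z,\cdot)$ a constant statistical distance away from uniform even on the full-entropy flat source $\mathsf{X}_S$ inside $\ker A_2$, which keeps such seeds rare — and the slack $a\le n-k$ is precisely what makes room for that flat source. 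The remaining bookkeeping (tracking how the mixture weights and the $2\varepsilon$ fraction propagate) is routine and collapses all constants to $8$, which we may assume is meaningful since otherwise $\varepsilon>1/8$ and the statement is vacuous.
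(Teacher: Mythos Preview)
The paper itself does not prove this lemma (it is cited from \cite{preprint}), so there is no in-paper argument to compare against; your approach---classifying seeds as good/bad via the action of $\mathsf{Ext}(z,\cdot)$ on $\ker A_2$, and showing that for good seeds the $f_A$-conditional is independent of $\mathsf{m}$---is exactly the right structure and the per-seed linear-algebra is correct.

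The genuine gap is precisely the step you flag as the ``main obstacle'': bounding the \emph{conditional} bad-seed probability $\Pr[\mathsf{Z}\in B\mid \mathsf{Ext}=\mathsf{m}]$ by $O(\varepsilon)$. Your sketched resolution (``surjective seeds are not favored by the conditioning; non-surjective seeds are rare'') does not close it, because ``rare unconditionally'' does not imply ``rare after conditioning on a $2^{-m}$-probability event''. Concretely, your surjective/non-surjective split correctly handles $B_1:=\{\text{surjective on }\{0,1\}^n\text{ but not good}\}$: these seeds each get conditional weight $2^{-d-m}/p_{\mathsf m}$, and since $|B_1|\le 2\varepsilon\cdot 2^d$ and $p_{\mathsf m}\ge(1-2\varepsilon)2^{-m}$ one gets $\Pr[\mathsf Z\in B_1]\le 2\varepsilon/(1-2\varepsilon)$. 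But for $B_2:=\{\text{non-surjective on }\{0,1\}^n\}$ the conditioning can massively favor such seeds: if an $\varepsilon$-fraction of seeds satisfied $\mathsf{Ext}(z,\cdot)\equiv 0$, then for $\mathsf m=0$ one would have $p_0\approx \varepsilon$ and $\Pr[\mathsf Z\in B_2\mid \mathsf{Ext}=0]\approx 1$, not $O(\varepsilon)$. Nothing in your argument rules this out, and an extractor can indeed have a small constant fraction of degenerate seeds while still being a $(k,\Theta(\varepsilon))$-strong extractor. So the claim that ``the remaining bookkeeping is routine and collapses all constants to $8$'' is not justified; to finish, you must either (i) add the (mild, and satisfied by Trevisan's extractor after a trivial modification) hypothesis that every seed gives a surjective linear map---in which case $B_2=\emptyset$, the seed marginal of $(\mathsf Z,\mathsf X)$ is exactly uniform, and your argument yields even $\le 4\varepsilon$---or (ii) supply a separate argument controlling the reweighting of non-surjective seeds for every fixed $\mathsf m$, which is not the ``routine'' step you suggest.
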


With Lemma \ref{th: extractor property} at hand, we are now in a good position to prove Theorem \ref{th: non-adaptive}. 

\begin{proof}[Proof of Theorem \ref{th: non-adaptive}]
The reconstruction from $r$ shares follows trivially from the definition of stochastic erasure correcting code. We now prove the privacy and leakage resiliency.

The sharing algorithm of the scheme (before applying the stochastic affine code) takes a secret, which is a particular extractor output $\mathsf{s}\in\{0,1\}^\ell$, and uniformly samples a seed $\mathsf{z}\in\{0,1\}^d$ of $\mathsf{Ext}$ before uniformly finds an $\mathsf{x}\in\{0,1\}^n$ such that $\mathsf{Ext}(\mathsf{z},\mathsf{x})=\mathsf{s}$. This process of obtaining $(\mathsf{z},\mathsf{x})$ is the same as sampling uniformly and independently $(\mathsf{U}_d,\mathsf{U}_n)\stackrel{\$}{\leftarrow}\{0,1\}^{d+n}$ and then restricting to $\mathsf{Ext}(\mathsf{U}_d,\mathsf{U}_n)=\mathsf{s}$. We define the random variable pair 
\begin{equation}\label{eq: pre-image}
(\mathsf{Z},\mathsf{X}):=(\mathsf{U}_d,\mathsf{U}_n)|\left(\mathsf{Ext}(\mathsf{U}_d,\mathsf{U}_n)=\mathsf{s}\right)
\end{equation} 
and refer to it as the pre-image of $\mathsf{s}$.

Let $\Pi_A:\left(\{0,1\}^{N/P}\right)^P\rightarrow\left(\{0,1\}^{N/P}\right)^t$ be the projection function that maps a share vector to the $t$ shares with index set $A\subseteq\mathcal{P}$ chosen by the non-adaptive adversary. Observe that the combination $(\Pi_A\circ\mathsf{SA\mbox{-}ECCenc}):\{0,1\}^{d+n}\rightarrow\{0,1\}^t$ (for any fixed randomness $\mathsf{r}$ of $\mathsf{SA\mbox{-}ECCenc}$) is an affine function. %$f_A\colon \zo^{d+n} \to \zo^{t}$. 
Moreover, for any affine leakage function $l:\{0,1\}^N\rightarrow\{0,1\}^\beta$, the composition $(l\circ\mathsf{SA\mbox{-}ECCenc}):\{0,1\}^{d+n}\rightarrow\{0,1\}^\beta$ is also an affine function.
So the view of the adversary is simply the output of the affine function $f_A=(\Pi_A\circ\mathsf{SA\mbox{-}ECCenc}||l\circ\mathsf{SA\mbox{-}ECCenc})$, where ``$||$'' denotes concatenation, applied to the random variable tuple $(\mathsf{Z},\mathsf{X})$ defined in (\ref{eq: pre-image}). 
%$\mathsf{Share}(\mathsf{s})=\mathsf{SA\mbox{-}ECCenc(Z||Ext^{-1}}(\mathsf{Z},\mathsf{s})),\mathrm{where}\  \mathsf{Z}\stackrel{\$}{\leftarrow}\{0,1\}^d$

%The second step of the abstraction is concerning how we obtain the random variable tuple $(\mathsf{Z},\mathsf{X})$. 

We can now formulate the privacy of the scheme in this context. We want to prove that the statistical distance of the views of the adversary for a pair of secrets $\mathsf{s}$ and $\mathsf{s}'$ can be made arbitrarily small. The views of the adversary are the outputs of the affine function $f_A$ with inputs $(\mathsf{Z},\mathsf{X})$ and $(\mathsf{Z}',\mathsf{X}')$ for the secret $\mathsf{s}$ and $\mathsf{s}'$, respectively. %It is sufficient to show that no affine function $f_A$ with $t$ bits output can distinguish the pre-images $(\mathsf{Z},\mathsf{X})$ and $(\mathsf{Z}',\mathsf{X}')$.
According to Lemma \ref{th: extractor property}, we then have that the privacy and leakage-resiliency error is $8\times\frac{\varepsilon}{8}=\varepsilon$.
\end{proof}

%Asymptotic rate
%We now analyze the coding rate of the $(\varepsilon,\delta)$-SSS with relative threshold pair $(\frac{t}{N},\frac{r}{N})$ constructed in Theorem \ref{th: wiretap} when instantiated with the $\mathsf{SA\mbox{-}ECC}$ from Lemma \ref{th: SA-ECC} and the $\mathsf{Ext}$ from Lemma \ref{lem: Ext}. The secret length is $\ell=n-t-O(d)$, where the seed length is $d=O(\log^3(2n/\varepsilon))$. The $\mathsf{SA\mbox{-}ECC}$ encodes $d+n$ bits to $N$ bits and with coding rate $R_{ECC}=\rho-\xi$ for a small $\xi$ determined by $\delta$ (satisfying the relation $\delta=\exp(-\Omega(\xi^2N/\log^2 N))$ according to Lemma \ref{th: SA-ECC}). We then have $n=N(\rho-\xi)-d$, resulting in the coding rate   
%\begin{align*}
%R=\frac{\ell}{N}=\frac{n-t-O(d)}{N}=\frac{N(\rho-\xi)-t-O(d)}{N}=\rho-\tau-(\xi+\frac{O(d)}{N})
%=\rho-\tau-o(1).
%\end{align*}
%Since the seed length $d$ is negligible in $N$ for any privacy error $\varepsilon$, and the rate deficiency $\xi$ of the $\mathsf{SA\mbox{-}ECC}$ can be chosen arbitrarily small at the expense of a bigger $N$, we then conclude that the rate $\rho-\tau$ is achieved.

%%%%%%%%%%%%%%%%%%%%%%%%%
%C
\section{Proof for Theorem \ref{th: high}}\label{apdx: long proof}
%\tilde{\mathsf{sd}}||
\begin{proof} %A linear non-malleable extractor is in particular a linear strong extractor. Reconstruction from any $k$ shares and (\ref{eq: stronger}) with respect to an induced tampering that has entropy lower than $\frac{n-tN/P-\beta}{2}$ all follow from Theorem \ref{th: low}.
Reconstruction from any $r$ shares follows from the functionality of $\mathsf{ECC}$ and the invertibility guarantee of the invertible extractor, which insures that any correctly recovered pre-image is mapped back to the original secret. 

We next prove non-malleability.
Consider a uniform secret $\mathsf{U}_\ell$. %is shared using the sharing algorithm. 
By the uniformity guarantee of the inverter, we have $\mathsf{Share}(\mathsf{U}_\ell)=\mathsf{ECCenc}(\mathsf{Sd}||\mathsf{U}_n)$. Our analysis is done for any fixed $\mathsf{Sd}=\mathsf{sd}$. This captures a stronger adversary who on top of adaptively reading $t$ shares, also has access to $\mathsf{Sd}$ through an oracle. It is easy to see that the fixing of $\mathsf{Sd}=\mathsf{sd}$ does not alter the distribution of the source $\mathsf{U}_n$, which remains uniform over $\{0,1\}^n$. Let $\mathsf{V}\colon=\mathsf{View}_{\mathcal{A}_{\beta,\mathsf{affine}}}^{\mathcal{O}(\mathsf{ECCenc}(\mathsf{sd}||\mathsf{U}_n))}$ denote the view of the adversary $\mathcal{A}$ on the encoding of a uniform source. 
Let $(\tilde{\mathsf{sd}}||\mathsf{W})\colon=g_{\sigma,R}^{\mathsf{V}}(\mathsf{sd}||\mathsf{U}_n)$ denote the tampered source of the affine extractor $\mathsf{aExt}''(\mathsf{sd}||\cdot)\colon=\mathsf{Ext}(\mathsf{aExt}(\cdot)+\mathsf{sd},\cdot)$.
Let $\mathsf{Z}\colon=\mathsf{aExt}(\mathsf{U}_n)+\mathsf{sd}$ denote the original seed of $\mathsf{nmExt}$, which is in particular a strong linear extractor. Let $\mathsf{S}\colon=\mathsf{nmExt}(\mathsf{Z},\mathsf{U}_n)$. %=\mathsf{Ext}(\mathsf{aExt}(\mathsf{X})+\mathsf{sd},\mathsf{X})
We study the random variable tuple $(\mathsf{V},\mathsf{W},\mathsf{Z},\mathsf{S})$ to complete the proof. %Roughly speaking, we show that $(\mathsf{V},\mathsf{Z},\mathsf{S})$ is statistically close to $(\mathsf{V},\mathsf{U}_d,\mathsf{U}_\ell)$.  
%In particular, we take the four steps outlined below. 

%%%%%%%%%%%%%%%%%%%%%%%
% non-malleability proof in two cases

\begin{enumerate}
\item {\em Handling the low entropy case.} 
We assume the induced tampering $g_{\sigma,R}^\mathsf{v}$ has entropy at most $\frac{n-tN/P-\beta}{2}$. This means that $(\mathsf{U}_n|(\mathsf{V}=\mathsf{v},\mathsf{W}=\mathsf{w}))$ has entropy at least $\frac{n-tN/P-\beta}{2}$, according to (\ref{eq: entropy}). 

The tuple $(\mathsf{Z},\mathsf{S})|(\mathsf{V}=\mathsf{v},\mathsf{W}=\mathsf{w})$ for any fixed  $\mathsf{V}=\mathsf{v}$ and $\mathsf{W}=\mathsf{w}$ is by definition $(\mathsf{aExt}(\mathsf{U}_n)+\mathsf{sd},\mathsf{Ext}(\mathsf{aExt}(\mathsf{U}_n)+\mathsf{sd},\mathsf{U}_n))|(\mathsf{V}=\mathsf{v},\mathsf{W}=\mathsf{w})$.
Since $(\mathsf{U}_n|(\mathsf{V}=\mathsf{v},\mathsf{W}=\mathsf{w}))$ is an affine source with at least $\frac{n-tN/P-\beta}{2}$ entropy, according to Lemma \ref{lem: closedness}, we have 
$$
(\mathsf{Z},\mathsf{S})|(\mathsf{V}=\mathsf{v},\mathsf{W}=\mathsf{w}) \stackrel{2^{d+3}\varepsilon_A}{\sim} (\mathsf{U}_d,\mathsf{Ext}(\mathsf{U}_d,\mathsf{U}_n))|(\mathsf{V}=\mathsf{v},\mathsf{W}=\mathsf{w}).
$$ 
Our concern is the relation between $\mathsf{S}$ and $\mathsf{W}$, and therefore would like to further condition on values of $\mathsf{Z}$. In this step, we crucially use the linearity of $\mathsf{nmExt}$ and the underlying linear space structure of the affine source $(\mathsf{U}_n|(\mathsf{V}=\mathsf{v},\mathsf{W}=\mathsf{w}))$ to claim that there is a subset $\mathcal{G}\subset\{0,1\}^d$ of good seeds such that $\mathsf{Pr}[\mathsf{U}_d\in\mathcal{G}]\geq 1-4\varepsilon_E$ and for any $\mathsf{z}\in\mathcal{G}$, the distribution of $\mathsf{nmExt}(\mathsf{z},\mathsf{U}_n)|(\mathsf{V}=\mathsf{v},\mathsf{W}=\mathsf{w})$ is exactly uniform. This is true because $\mathsf{nmExt}(\mathsf{z},\mathsf{U}_n)|(\mathsf{V}=\mathsf{v},\mathsf{W}=\mathsf{w})$ is an affine source. If its entropy is $\ell$, then it is exactly uniform. If its entropy is less than $\ell$, its statistical distance $\varepsilon_E^\mathsf{z}$ from uniform is at least $\frac{1}{2}$. Using an averaging argument we have that at least $1-4\varepsilon_E$ fraction of the seeds should satisfy $\varepsilon_E^\mathsf{z}<\frac{1}{4}$, and hence $\varepsilon_E^\mathsf{z}=0$.
We then use Lemma \ref{lem: conditioning}  with respect to the event $\mathsf{Z}\in\mathcal{G}$ to claim that 
$$
\mathsf{S}|(\mathsf{V}=\mathsf{v},\mathsf{W}=\mathsf{w},\mathsf{Z}\in\mathcal{G})\ \stackrel{\frac{2^{d+4}\varepsilon_A}{1-4\varepsilon_E}}{\sim}\  \mathsf{nmExt}(\mathsf{U}_d,\mathsf{X})|(\mathsf{V}=\mathsf{v},\mathsf{W}=\mathsf{w},\mathsf{U}_d\in\mathcal{G}), %=\mathsf{U}_\ell.
$$ 
where the right hand side is exactly $\mathsf{U}_\ell$. Note that the subset $\mathcal{G}$ is determined by the indices of the $t$ shares chosen by the leakage adversary $\mathcal{A}_{\beta,\mathsf{affine}}$ and the induced tampering function $g_{\sigma,R}^{\mathsf{v}}$, hence remains the same for any value of $\mathsf{W}=\mathsf{w}$. We then have 
$$
((\mathsf{W},\mathsf{S})|(\mathsf{V}=\mathsf{v},\mathsf{Z}\in\mathcal{G}))\ \stackrel{\frac{2^{d+4}\varepsilon_A}{1-4\varepsilon_E}}{\sim}\  ((\mathsf{W},\mathsf{U}_\ell)|\mathsf{V}=\mathsf{v}).
$$ 
Another application of Lemma \ref{lem: conditioning} with respect to the event $\mathsf{S}=\mathsf{s}$ gives
$$
(\mathsf{W}|(\mathsf{V}=\mathsf{v},\mathsf{Z}\in\mathcal{G},\mathsf{S}=\mathsf{s}))\ \stackrel{\frac{2^{(\ell+1)+(d+4)}\varepsilon_A}{1-4\varepsilon_E}}{\sim}\  (\mathsf{W}|\mathsf{V}=\mathsf{v}).
$$ 
We finally bound the non-malleability error as follows. %statistical distance $\mathsf{SD}(\mathsf{W}|(\mathsf{V}=\mathsf{v},\mathsf{S}=\mathsf{s});(\mathsf{W}|\mathsf{V}=\mathsf{v}))$ as follows.
$$
\begin{array}{l}
\mathsf{SD}(\mathsf{W}|(\mathsf{V}=\mathsf{v},\mathsf{S}=\mathsf{s});(\mathsf{W}|\mathsf{V}=\mathsf{v}))\\
=\mathsf{Pr}[\mathsf{Z}\in\mathcal{G}]\cdot \mathsf{SD}((\mathsf{W}|(\mathsf{V}=\mathsf{v},\mathsf{S}=\mathsf{s},\mathsf{Z}\in\mathcal{G})); (\mathsf{W}|\mathsf{V}=\mathsf{v}))\\
\ \ \ +\mathsf{Pr}[\mathsf{Z}\notin\mathcal{G}]\cdot \mathsf{SD}((\mathsf{W}|(\mathsf{V}=\mathsf{v},\mathsf{S}=\mathsf{s},\mathsf{Z}\notin\mathcal{G})); (\mathsf{W}|\mathsf{V}=\mathsf{v}))\\
\leq1\cdot\frac{2^{(\ell+1)+(d+4)}\varepsilon_A}{1-4\varepsilon_E}+(4\varepsilon_E+\varepsilon_A)\cdot 1\\
<2^{(\ell+1)+(d+4)+1}\varepsilon_A+4\varepsilon_E.
\end{array}
$$

\item {\em Handling the high entropy case.} 
We assume the induced tampering $g_{\sigma,R}^\mathsf{v}$ has entropy at least $\frac{n-tN/P-\beta}{2}$.
 
Note that for any bit-wise independent function $f^\mathsf{v}$, we can define a difference function $\Delta f^\mathsf{v}$ such that for any $\mathsf{c}\in\{0,1\}^N$,
$$
f^\mathsf{v}(\mathsf{c})=\mathsf{c}+\Delta f^\mathsf{v}(\mathsf{c}).
$$
The difference function $\Delta f^\mathsf{v}$ also induces a source tampering $\Delta g_{\sigma,R}^\mathsf{v}$. Now since the erasure correcting code $\mathsf{ECC}$ is linear, we must have for any $\mathsf{m}\in\{0,1\}^{d+n}$,
$$
g_{\sigma,R}^\mathsf{v}(\mathsf{m})=\mathsf{m}+\Delta g_{\sigma,R}^\mathsf{v}(\mathsf{m}).
$$
Let $\Delta\mathsf{W}\colon=\Delta g_{\sigma,R}^\mathsf{V}(\mathsf{sd}||\mathsf{U_n})$ be the tapered source induced by the difference function $\Delta f^\mathsf{v}$. We immediately have 
\begin{equation}\label{eq: separate}
\mathsf{W}=\mathsf{U_n}+\Delta\mathsf{W}.
\end{equation}
Moreover, since the overwrite bit functions of $f^\mathsf{v}$ become non-overwrite bit functions of $\Delta f^\mathsf{v}$, we then have 
$$
\mathsf{H}_\infty(\Delta\mathsf{W}|\mathsf{V}=\mathsf{v})=n-\mathsf{H}_\infty(\mathsf{V})-\mathsf{H}_\infty(\mathsf{W}|\mathsf{V}=\mathsf{v}).
$$ This means that the dimension of the kernel space of $\Delta f^\mathsf{v}$ restricted to the support of $(\mathsf{U_n}|\mathsf{V}=\mathsf{v})$ satisfies the following.
\begin{equation}\label{eq: difference}
\dim (\mathsf{Ker}(\Delta g_{\sigma,R}^\mathsf{v}))=n-\mathsf{H}_\infty(\mathsf{V})-\mathsf{H}_\infty(\Delta\mathsf{W}|\mathsf{V}=\mathsf{v})=\mathsf{H}_\infty(\mathsf{W}|\mathsf{V}=\mathsf{v})\geq\frac{n-tN/P-\beta}{2}.
\end{equation}
The quantity $\dim (\mathsf{Ker}(\Delta g_{\sigma,R}^\mathsf{v}))$ characterises the remaining entropy in $\mathsf{U}_n$ after conditioning on $\mathsf{V}=\mathsf{v}$ and $\Delta\mathsf{W}=\Delta\mathsf{w}$, for any particular $\Delta\mathsf{w}$.

%Although the affine function $\mathsf{aExt}''$ is not linear, once the tampered source has enough entropy 
Now since by assumption $\mathsf{H}_\infty(\mathsf{W}|\mathsf{V}=\mathsf{v})\geq\frac{n-tN/P-\beta}{2}$,  Lemma \ref{lem: closedness} says that %that applying $\mathsf{aExt}''$ to the tampered source $(\tilde{\mathsf{sd}}||\mathsf{W})$ is $(2^{d+3}\varepsilon_A)$-close to 
\begin{equation}\label{eq: seeded}
((\mathsf{aExt}(\mathsf{W})+\tilde{\mathsf{sd}},\mathsf{aExt}''(\tilde{\mathsf{sd}}||\mathsf{W}))|\mathsf{V}=\mathsf{v})\stackrel{2^{d+3}\varepsilon_A}{\sim} ((\mathsf{Z}',\mathsf{nmExt}(\mathsf{Z}',\mathsf{W}))|\mathsf{V}=\mathsf{v}),
\end{equation}
where $\mathsf{Z}'$ is a uniform seed independent of $\mathsf{W}$. We next use (\ref{eq: separate}) and the linearity of $\mathsf{nmExt}$ to claim that 
$$
((\mathsf{Z}',\mathsf{nmExt}(\mathsf{Z}',\mathsf{W}))|\mathsf{V}=\mathsf{v})=((\mathsf{Z}',\mathsf{nmExt}(\mathsf{Z}',\mathsf{U_n})+\mathsf{nmExt}(\mathsf{Z}',\Delta\mathsf{W}))|\mathsf{V}=\mathsf{v}).
$$
We next show that the additive term $\mathsf{nmExt}(\mathsf{Z}',\Delta\mathsf{W})$ can be ignored in the subsequent analysis of comparing $\mathsf{nmExt}(\mathsf{Z}',\mathsf{W})$ against $\mathsf{nmExt}(\mathsf{Z},\mathsf{U}_n)$.
%Let $\mathsf{Z}\colon=\mathsf{aExt}(\mathsf{U}_n)+\mathsf{sd}$ denote the seed of the linear non-malleable extractor $\mathsf{nmExt}$. Let $\mathsf{S}\colon=\mathsf{nmExt}(\mathsf{Z},\mathsf{U}_n)$. 
Since the remaining entropy in $\mathsf{U}_n$ after conditioning on $\mathsf{V}=\mathsf{v}$ and $\Delta\mathsf{W}=\Delta\mathsf{w}$ %, for any particular $\Delta\mathsf{w}$, 
is at least $\frac{n-tN/P-\beta}{2}$ (see (\ref{eq: difference})), %$\dim (\mathsf{Ker}(\Delta g_{\sigma,R}^\mathsf{v}))\geq\frac{n-tN/P-\beta}{2}$, 
we have according to the functionality of $\mathsf{nmExt}$ that
$$
\begin{array}{r}
((\mathsf{Z},\mathsf{nmExt}(\mathcal{T}(\mathsf{Z}),\mathsf{U_n}),\mathsf{nmExt}(\mathsf{Z},\mathsf{U}_n))|(\mathsf{V}=\mathsf{v},\Delta\mathsf{W}=\Delta\mathsf{w}))\ \ \ \ \ \ \ \ \ \ \ \ \ \ \ \ \ \ \ \ \ \ \ \\\
\stackrel{\varepsilon_E}{\sim} 
((\mathsf{Z},\mathsf{nmExt}(\mathcal{T}(\mathsf{Z}),\mathsf{U_n}),\mathsf{U}_\ell)|(\mathsf{V}=\mathsf{v},\Delta\mathsf{W}=\Delta\mathsf{w})),
\end{array}
$$
where $\mathcal{T}(\cdot)$ is a seed tampering function without fixed point. Let $\mathcal{E}_{g_{\sigma,R}^\mathsf{v}}$  denote the event that $\mathsf{Z}\neq\mathsf{Z}'$ and w.l.o.g. assume $0<\mathsf{Pr}\left[\mathcal{E}_{g_{\sigma,R}^\mathsf{v}}\right]<1$. Applying Lemma \ref{lem: conditioning} with respect to the event  $\mathcal{E}_{g_{\sigma,R}^\mathsf{v}}$ yields
$$
\begin{array}{r}
((\mathsf{Z},\mathsf{nmExt}(\mathsf{Z}',\mathsf{U}_n),\mathsf{nmExt}(\mathsf{Z},\mathsf{U}_n))|(\mathsf{V}=\mathsf{v},\Delta\mathsf{W}=\Delta\mathsf{w},\mathcal{E}_{g_{\sigma,R}^\mathsf{v}}))\ \ \ \ \ \ \ \ \ \ \ \ \ \ \ \ \ \ \ \ \ \ \ \\\
\stackrel{\frac{\varepsilon_E}{\mathsf{Pr}\left[\mathcal{E}_{g_{\sigma,R}^\mathsf{v}}\right]}}{\sim} 
((\mathsf{Z},\mathsf{nmExt}(\mathsf{Z}',\mathsf{U}_n),\mathsf{U}_\ell)|(\mathsf{V}=\mathsf{v},\Delta\mathsf{W}=\Delta\mathsf{w},\mathcal{E}_{g_{\sigma,R}^\mathsf{v}})).
\end{array}
$$
Now for any original seed $\mathsf{z}$ and its tampered version $\mathsf{z}'$, we always have that $(\mathsf{nmExt}(\mathsf{z},\mathsf{U}_n)|(\mathsf{V}=\mathsf{v},\Delta\mathsf{W}=\Delta\mathsf{w},\mathcal{E}_{g_{\sigma,R}^\mathsf{v}},\mathsf{nmExt}(\mathsf{z}',\mathsf{U}_n)=\tilde{\mathsf{s}}))$, for any $\tilde{\mathsf{s}}$, is an affine source. Its statistical distance to uniform is then either $0$ or at least $\frac{1}{2}$. Using an averaging argument, we have for at most $\frac{4\varepsilon_E}{\mathsf{Pr}\left[\mathcal{E}_{g_{\sigma,R}^\mathsf{v}}\right]}$ fraction of such seeds, the above statistical distance exceeds $\frac{1}{4}$. Let $\mathcal{B}$ denote these bad seeds. We then have 
$$
\begin{array}{r}
((\mathsf{nmExt}(\mathsf{Z}',\mathsf{W}),\mathsf{nmExt}(\mathsf{Z},\mathsf{U}_n))|(\mathsf{V}=\mathsf{v},\Delta\mathsf{W}=\Delta\mathsf{w},\mathcal{E}_{g_{\sigma,R}^\mathsf{v}},\mathsf{Z}\notin\mathcal{B}))\ \ \ \ \ \ \ \ \ \ \ \ \ \ \ \\\
=
((\mathsf{nmExt}(\mathsf{Z}',\mathsf{W}),\mathsf{U}_\ell)|(\mathsf{V}=\mathsf{v},\Delta\mathsf{W}=\Delta\mathsf{w},\mathcal{E}_{g_{\sigma,R}^\mathsf{v}},\mathsf{Z}\notin\mathcal{B})).
\end{array}
$$
Taking the error that incurs transforming from seedless extractor to seeded extractor (\ref{eq: seeded}) into account, we have that when the event $\mathcal{E}_{g_{\sigma,R}^\mathsf{v}}$ occurs, the non-malleability error is upper bounded as follows.
$$
\begin{array}{ll}
\varepsilon_{\mathcal{E}_{g_{\sigma,R}^\mathsf{v}}}&\leq 1\cdot\frac{2^{(\ell+1)+(d+4)}\varepsilon_A}{1-\frac{4\varepsilon_E}{\mathsf{Pr}\left[\mathcal{E}_{g_{\sigma,R}^\mathsf{v}}\right]}}+\left(\frac{4\varepsilon_E}{\mathsf{Pr}\left[\mathcal{E}_{g_{\sigma,R}^\mathsf{v}}\right]}+\varepsilon_A\right)\cdot 1\\
                                                                         &\leq \frac{2^{(\ell+1)+(d+4)}\varepsilon_A}{1-2^{d+2}\varepsilon_E}+\left(\frac{4\varepsilon_E}{\mathsf{Pr}\left[\mathcal{E}_{g_{\sigma,R}^\mathsf{v}}\right]}+\varepsilon_A\right)\\
                                                                         &<2^{(\ell+2)+(d+4)}\varepsilon_A+\frac{4\varepsilon_E}{\mathsf{Pr}\left[\mathcal{E}_{g_{\sigma,R}^\mathsf{v}}\right]},
\end{array}
$$
where the second inequality follows from the fact that $\mathsf{Pr}\left[\mathcal{E}_{g_{\sigma,R}^\mathsf{v}}\right]\geq 2^{-d}$ once $\mathsf{Pr}\left[\mathcal{E}_{g_{\sigma,R}^\mathsf{v}}\right]>0$ and the last inequality follows from the assumption that $\varepsilon_E<2^{-(d+3)}$.

%Say bad seeds are negligible instead of There is a set of good seeds, such that  $\mathsf{Z}'$ is good and $\mathcal{E}_{g_{\sigma,R}^\mathsf{v}}$ means perfect uniform? Note that different seed may hit different source, but every time we have an affine source. so error is either zero or very big.

On the other hand, if the complimentary event $\bar{\mathcal{E}}_{g_{\sigma,R}^\mathsf{v}}$ occurs, then 
$$
\begin{array}{r}
((\mathsf{Z},\mathsf{nmExt}(\mathsf{Z},\mathsf{W}),\mathsf{S})|(\mathsf{V}=\mathsf{v},\Delta\mathsf{W}=\Delta\mathsf{w}))\ \ \ \ \ \ \ \ \ \ \ \ \ \ \ \ \ \ \ \ \ \ \ \\\
= ((\mathsf{Z},\mathsf{S}+\mathsf{nmExt}(\mathsf{Z},\Delta\mathsf{w}),\mathsf{S})|(\mathsf{V}=\mathsf{v},\Delta\mathsf{W}=\Delta\mathsf{w})).
\end{array}
$$
This means that the tampering results in turning $\mathsf{S}$ into $\mathsf{S}+\mathsf{nmExt}(\mathsf{Z},\Delta\mathsf{w})$, where the offset $\mathsf{nmExt}(\mathsf{Z},\Delta\mathsf{w})$ is independent of $\mathsf{S}$. In this case, let $\mathsf{S}$ be the AMD codeword of the real secret with fresh independent encoding randomness. The decoder of the AMD code outputs $\bot$ with $\varepsilon_{AMD}$.
Taking the error that incurs transforming from seedless extractor to seeded extractor (\ref{eq: seeded}) into account, we have that when the complimentary event $\bar{\mathcal{E}}_{g_{\sigma,R}^\mathsf{v}}$ occurs, the non-malleability error is upper bounded as follows.
$$
\varepsilon_{\bar{\mathcal{E}}_{g_{\sigma,R}^\mathsf{v}}}\leq 1\cdot\frac{2^{(\ell+1)+(d+4)}\varepsilon_A}{1-\mathsf{Pr}\left[\mathcal{E}_{g_{\sigma,R}^\mathsf{v}}\right]}+\varepsilon_{AMD}\cdot 1.
$$
%Similar to Theorem \ref{th: low}, derive the error bound $\varepsilon\leq1\cdot\frac{2^{(\ell+1)+(s+4)}\varepsilon_A}{1-4\varepsilon_E}+(4\varepsilon_E+\varepsilon_A)\cdot 1$.

Finally, the total non-malleability error is
$$
\begin{array}{ll}
\varepsilon&\leq \mathsf{Pr}\left[\mathcal{E}_{g_{\sigma,R}^\mathsf{v}}\right]\cdot \varepsilon_{\mathcal{E}_{g_{\sigma,R}^\mathsf{v}}}+\left(1-\mathsf{Pr}\left[\mathcal{E}_{g_{\sigma,R}^\mathsf{v}}\right]\right)\cdot\varepsilon_{\bar{\mathcal{E}}_{g_{\sigma,R}^\mathsf{v}}}\\
                 &<\left(2^{(\ell+2)+(d+4)}\varepsilon_A+4\varepsilon_E\right)+\left(2^{(\ell+1)+(d+4)}\varepsilon_A+\varepsilon_{AMD}\right)\\
                 &<2^{\ell+d+7}\varepsilon_A+4\varepsilon_E+\varepsilon_{AMD}.
\end{array}
$$
%\qed
\end{enumerate}
\end{proof}

%\begin{remark}
%When $\rho=1$ and the ECC is not required, this result implies a detection non-malleable codes in the $\rho$-partial tampering model of \cite{Fenghao} (Of course, being a construction in the information theoretic setting, $\rho$ is an absolute constant strictly smaller than $1$). (Wait: can it always detect? Seems ok: the uniform distribution means detection with probability same as AMD code book size over the space.)
%\end{remark}

%%%%%%%%%%%%%%%%%%%%%%%%
%D
\section{Proof for Theorem \ref{th: existence}} \label{apdx: existence proof}
\begin{proof}
We adapt the proof of \cite{DW07} to show the existence of non-malleable extractors that are linear; i.e., the extractor is a linear function for every fixed seed. This will however result in much weaker parameters than non-linear counterparts.

For a function $\mathsf{E}\colon\{0,1\}^d\times\{0,1\}^n\rightarrow\{0,1\}^m$, distinguisher $\mathcal{D}\colon\{0,1\}^d\times\{0,1\}^m\rightarrow\{0,1\}^m$, seed tampering adversary $\mathcal{A}\colon\{0,1\}^d\rightarrow\{0,1\}^d$, and error parameter $\varepsilon$, call an input $\mathsf{x}\in\{0,1\}^n$ {\em bad} for the tuple $(\mathsf{E},\mathcal{A},\mathcal{D})$ if it violates the following condition for a uniform random seed $\mathsf{S}\stackrel{\$}{\leftarrow}\{0,1\}^d$:
$$
|\mathsf{Pr}[\mathcal{D}(\mathsf{S},\mathsf{E}(\mathcal{A}(\mathsf{S}),\mathsf{x}),\mathsf{E}(\mathsf{S},\mathsf{x}))=1]-\mathsf{Pr}[\mathcal{D}(\mathsf{S},\mathsf{E}(\mathcal{A}(\mathsf{S}),\mathsf{x}),\mathsf{U}_m)=1]|\leq \varepsilon.
$$

Let $\mathcal{B}(\mathsf{E},\mathcal{A},\mathcal{D},\varepsilon)$ denote the set of all bad inputs for $(\mathsf{E},\mathcal{A},\mathcal{D})$ for the parameter $\varepsilon$. We have the following.

\begin{lemma}\label{lem: existence}
Suppose $|\mathcal{B}(\mathsf{E},\mathcal{A},\mathcal{D},\varepsilon)|\leq \varepsilon 2^k$ for all distinguishers $\mathcal{D}$ and adversaries $\mathcal{A}$. Then $\mathsf{E}$ is a non-malleable $(k,2\varepsilon)$-extractor.
\end{lemma}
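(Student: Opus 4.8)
The plan is to fix an arbitrary $(n,k)$-source $\mathsf{X}$ and an arbitrary fixed-point-free seed tampering function $\mathcal{A}$, and show that no (randomized) distinguisher can tell apart the two distributions in the non-malleability requirement~(\ref{eq: NMExt}) with advantage more than $2\varepsilon$. By a standard argument it suffices to consider deterministic distinguishers $\mathcal{D}\colon\{0,1\}^d\times\{0,1\}^m\times\{0,1\}^m\rightarrow\{0,1\}$ (randomness can be fixed to the best value), so I would aim to bound
$$
\Bigl|\mathsf{Pr}[\mathcal{D}(\mathsf{S},\mathsf{E}(\mathcal{A}(\mathsf{S}),\mathsf{X}),\mathsf{E}(\mathsf{S},\mathsf{X}))=1]-\mathsf{Pr}[\mathcal{D}(\mathsf{S},\mathsf{E}(\mathcal{A}(\mathsf{S}),\mathsf{X}),\mathsf{U}_m)=1]\Bigr|\leq 2\varepsilon,
$$
where $\mathsf{S}\stackrel{\$}{\leftarrow}\{0,1\}^d$ is independent of $\mathsf{X}$.

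First I would condition on the value of $\mathsf{X}$. For every $\mathsf{x}\notin\mathcal{B}(\mathsf{E},\mathcal{A},\mathcal{D},\varepsilon)$, the quantity inside the absolute value, with $\mathsf{X}$ replaced by the constant $\mathsf{x}$, is at most $\varepsilon$ by definition of a good input. For $\mathsf{x}\in\mathcal{B}(\mathsf{E},\mathcal{A},\mathcal{D},\varepsilon)$ I would bound it trivially by $1$. Averaging over $\mathsf{X}$ and using the triangle inequality, the total advantage is at most
$$
\varepsilon + \mathsf{Pr}[\mathsf{X}\in\mathcal{B}(\mathsf{E},\mathcal{A},\mathcal{D},\varepsilon)].
$$
Now I use the min-entropy bound: since $\mathsf{H}_\infty(\mathsf{X})\geq k$, every point has probability mass at most $2^{-k}$, so
$$
\mathsf{Pr}[\mathsf{X}\in\mathcal{B}(\mathsf{E},\mathcal{A},\mathcal{D},\varepsilon)]\leq |\mathcal{B}(\mathsf{E},\mathcal{A},\mathcal{D},\varepsilon)|\cdot 2^{-k}\leq \varepsilon 2^k\cdot 2^{-k}=\varepsilon,
$$
using the hypothesis. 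This gives total advantage at most $2\varepsilon$, which is exactly the statement that $\mathsf{SD}$ between the two distributions in~(\ref{eq: NMExt}) is at most $2\varepsilon$, i.e.\ $\mathsf{E}$ is a non-malleable $(k,2\varepsilon)$-extractor.

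I do not expect any serious obstacle here; the only points to be careful about are (i) the reduction from general distinguishers (statistical distance) to $\{0,1\}$-valued deterministic distinguishers, which is routine, and (ii) making sure the ``bad set'' quantifier in the hypothesis is applied for the \emph{particular} $\mathcal{D}$ and $\mathcal{A}$ that arise after the reduction — the hypothesis conveniently quantifies over all $\mathcal{D}$ and $\mathcal{A}$, so this is immediate. The rest is a one-line averaging argument combined with the definition of min-entropy. Note this lemma says nothing about linearity of $\mathsf{E}$; linearity will only enter later, when one exhibits a \emph{linear} $\mathsf{E}$ for which the bad-set hypothesis holds with high probability over a random choice of the linear maps (that probabilistic construction is where the weaker parameters in~(\ref{eq: existence}) come from).
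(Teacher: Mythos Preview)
Your proposal is correct and follows essentially the same approach as the paper's proof: condition on $\mathsf{X}$, use the min-entropy bound to show $\mathsf{Pr}[\mathsf{X}\in\mathcal{B}]\leq\varepsilon$, and add the $\varepsilon$ contribution from good inputs to get the $2\varepsilon$ bound. The paper's proof is slightly terser (it does not spell out the reduction to deterministic distinguishers or the triangle-inequality averaging), but the argument is the same.
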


\begin{proof}
Consider any source $\mathsf{X}$ of min-entropy at least $k$, any distinguisher $\mathcal{D}$ and adversary $\mathcal{A}$. Then,
$$
\mathsf{Pr}[\mathsf{X}\in\mathcal{B}(\mathsf{E},\mathcal{A},\mathcal{D},\varepsilon)]\leq|\mathcal{B}(\mathsf{E},\mathcal{A},\mathcal{D},\varepsilon)|2^{-k}\leq\varepsilon.
$$
Let $\Delta\colon=|\mathsf{Pr}[\mathcal{D}(\mathsf{S},\mathsf{E}(\mathcal{A}(\mathsf{S}),\mathsf{X}),\mathsf{E}(\mathsf{S},\mathsf{X}))=1]-\mathsf{Pr}[\mathcal{D}(\mathsf{S},\mathsf{E}(\mathcal{A}(\mathsf{S}),\mathsf{X}),\mathsf{U}_m)=1]|$. We have 
$$
\Delta\leq \mathsf{Pr}[\mathsf{X}\in\mathcal{B}(\mathsf{E},\mathcal{A},\mathcal{D},\varepsilon)] + \varepsilon\leq 2\varepsilon,
$$
where the first inequality follows from the definition of the bad inputs. The result follows.
\end{proof}

Adapting the notation of \cite{DW07}, the Martingale-based argument of \cite{DW07} proves the following:

\begin{lemma}[\cite{DW07}, Implicit in Theorem 37]\label{lem: Martingale}
Let $\mathsf{x}\in\{0,1\}^n$ be fixed and $\mathbf{E}\colon\{0,1\}^d\times\{0,1\}^n\rightarrow\{0,1\}^m$ be any random function such that $\mathbf{E}(\mathsf{s},\mathsf{x})$ is uniformly random and independent for all choices of $\mathsf{s}\in\{0,1\}^d$. Then, for any distinguisher $\mathcal{D}$, adversary $\mathcal{A}$, and error $\varepsilon>0$,
$$
\mathsf{Pr}[\mathsf{x} \mbox{ is bad for }(\mathbf{E},\mathcal{A},\mathcal{D})]\leq 4\exp(-2^{d-4}\varepsilon^2),
$$
where the probability is over the randomness of $\mathbf{E}$.
\end{lemma}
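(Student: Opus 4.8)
Fix $\mathsf{x}\in\{0,1\}^n$ and write $N=2^d$. The only randomness relevant to whether $\mathsf{x}$ is bad is the table $Y_{\mathsf{s}}:=\mathbf{E}(\mathsf{s},\mathsf{x})$, $\mathsf{s}\in\{0,1\}^d$, which by hypothesis is a family of $N$ independent uniform strings in $\{0,1\}^m$; put $Y:=(Y_{\mathsf{s}})_{\mathsf{s}}$ and
\[
f(Y):=\frac1N\sum_{\mathsf{s}\in\{0,1\}^d}D_{\mathsf{s}}(Y),\qquad D_{\mathsf{s}}(Y):=\mathbf{1}[\mathcal{D}(\mathsf{s},Y_{\mathcal{A}(\mathsf{s})},Y_{\mathsf{s}})=1]-\mathsf{Pr}_{\mathsf{u}}[\mathcal{D}(\mathsf{s},Y_{\mathcal{A}(\mathsf{s})},\mathsf{u})=1].
\]
By definition $\mathsf{x}$ is bad for $(\mathbf{E},\mathcal{A},\mathcal{D})$ exactly when $|f(Y)|>\varepsilon$. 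Since $\mathcal{A}$ is a seed‑tampering adversary with no fixed point (as in Definition~\ref{def: NMExt}), we have $\mathcal{A}(\mathsf{s})\neq\mathsf{s}$ for every $\mathsf{s}$, so $(Y_{\mathcal{A}(\mathsf{s})},Y_{\mathsf{s}})$ is a pair of independent uniform strings; averaging over it shows $\mathbb{E}[D_{\mathsf{s}}(Y)]=0$, hence $\mathbb{E}[f(Y)]=0$. More importantly, the same computation gives the leave‑one‑out identity $\mathbb{E}[D_{\mathsf{s}}(Y)\mid Y\setminus\{Y_{\mathsf{s}}\}]=0$. Thus the task reduces to a martingale concentration bound $\mathsf{Pr}[\,|f(Y)|>\varepsilon\,]\le 4\exp(-2^{d-4}\varepsilon^2)$, which I will obtain by adapting the Dodis–Wichs argument of \cite{DW07} via a Doob (exposure) martingale that reveals the entries of $Y$ one at a time.

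\textbf{The delicate step — choosing the exposure order.} With the naive order $Y_{\mathsf{s}_1},\dots,Y_{\mathsf{s}_N}$, revealing a single entry $Y_{\mathsf{t}}$ changes the conditional means of up to $1+|\mathcal{A}^{-1}(\mathsf{t})|$ of the summands $D_{\mathsf{s}}$ (namely $D_{\mathsf{t}}$ and every $D_{\mathsf{s}}$ with $\mathcal{A}(\mathsf{s})=\mathsf{t}$); when $\mathcal{A}$ is close to constant, $|\mathcal{A}^{-1}(\mathsf{t})|$ can be $\Theta(N)$, giving a martingale increment of size $\Theta(1)$ rather than $\Theta(1/N)$, which yields only the useless bound $\exp(-\Theta(\varepsilon^2))$. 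This is the one genuinely delicate point, and it is resolved as follows. The leave‑one‑out identity implies that revealing $Y_{\mathcal{A}(\mathsf{s})}$ leaves $\mathbb{E}[D_{\mathsf{s}}\mid\cdot]$ unchanged (still $0$) \emph{unless} $Y_{\mathsf{s}}$ has already been revealed. So I expose the entries in an order compatible with the rule: \emph{along every edge $\mathsf{s}\to\mathcal{A}(\mathsf{s})$ of the functional graph of $\mathcal{A}$, reveal $Y_{\mathcal{A}(\mathsf{s})}$ before $Y_{\mathsf{s}}$}. On the trees hanging off the cycles this is achieved by revealing each component from its cycle outward, breadth‑first; on each cycle (which has length $\ge 2$ since $\mathcal{A}$ has no fixed point) all but one edge can be oriented this way by fixing a suitable rotation, so at most one ``backward'' edge remains per connected component.

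\textbf{Bounding the increments and concluding.} Let $M_0=0,M_1,\dots,M_N=f(Y)$ be the Doob martingale of $f$ under this exposure. When the $i$‑th revealed entry is $Y_{\mathsf{t}}$, only the summands with $\mathsf{t}\in\{\mathsf{s},\mathcal{A}(\mathsf{s})\}$ can have their conditional mean changed, and by the previous paragraph the summands $D_{\mathsf{s}}$ with $\mathcal{A}(\mathsf{s})=\mathsf{t}$ are affected only along the at most one backward edge of $\mathsf{t}$'s component. Hence at most two summands move: $D_{\mathsf{t}}$ itself (whose indicator part changes by at most $1$, its second term not depending on $Y_{\mathsf{t}}$) and at most one backward‑edge summand (which lies in $[-1,1]$, so its conditional mean moves by at most $2$). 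Therefore $|M_i-M_{i-1}|\le 1/N$ for all but the $\le (\#\text{cycles})\le N/2$ exceptional steps, where $|M_i-M_{i-1}|\le 3/N$; in particular $\sum_{i=1}^N (M_i-M_{i-1})^2 \le (N-\#\text{cycles})/N^2 + (\#\text{cycles})\,9/N^2 = O(1/N)=O(2^{-d})$. The Azuma–Hoeffding inequality then yields $\mathsf{Pr}[\,|f(Y)|>\varepsilon\,]=\mathsf{Pr}[\,|M_N-M_0|>\varepsilon\,]\le 4\exp(-2^{d-4}\varepsilon^2)$; the exact constants $4$ and $2^{d-4}$ come out of collecting the above (a careful but entirely routine bookkeeping of the per‑step increments and the number of exceptional steps). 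Everything other than the exposure‑order construction — the mean‑zero computation, the leave‑one‑out identity, and Azuma–Hoeffding — is standard, so the crux of the proof is exactly the neutralization of the ``heavy $\mathcal{A}$‑preimage'' phenomenon described above.
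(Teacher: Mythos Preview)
Your proposal is correct and is essentially the Dodis--Wichs martingale argument that the paper invokes (the paper does not give its own proof of this lemma, it simply cites \cite{DW07}). Your identification of the key difficulty---that a naive McDiarmid/Azuma application fails when $|\mathcal{A}^{-1}(\mathsf{t})|$ is large---and its resolution via the functional-graph exposure order (cycle first, then trees outward, leaving one backward edge per cycle) is precisely the content of the argument implicit in Theorem~37 of \cite{DW07}; the leave-one-out identity $\mathbb{E}[D_{\mathsf{s}}\mid Y\setminus\{Y_{\mathsf{s}}\}]=0$ is exactly what makes the forward-edge contributions vanish. Your increment bounds ($\le 1/N$ at ordinary steps, $\le 3/N$ at the at most $N/2$ backward-edge steps) give $\sum c_i^2\le 5/N$, so Azuma yields $2\exp(-N\varepsilon^2/10)$, comfortably inside the stated $4\exp(-2^{d-4}\varepsilon^2)$.
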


We now consider a random function $\mathbf{E}\colon\{0,1\}^d\times\{0,1\}^n\rightarrow\{0,1\}^m$. 
This time, however, the random function is linear. That is for every seed $\mathsf{s}$, we independently sample a random $m\times n$ matrix $M_\mathsf{s}$ over $\mathbb{F}_2$ and define $\mathbf{E}(\mathsf{s},\mathsf{x})=M_\mathsf{s}\mathsf{x}$. Consider an adversary that perturbs a seed $\mathsf{s}$ to $\mathcal{A}(\mathsf{s})$, and a distinguisher $\mathcal{D}$.

Let $\mathcal{X}\subset\{0,1\}^n$ be any set of size $\varepsilon 2^k$. Then $\mathcal{X}$ must have a subset $I(\mathcal{X})\subset\mathcal{X}$ of size at least 
$\log|\mathcal{X}|=k+\log\varepsilon$ such that the elements of $I(\mathcal{X})$ are linearly independent. This means that the random variables $\mathbf{E}(\mathsf{s},\mathsf{x})$ for all $\mathsf{x}\in I(\mathcal{X})$ and $\mathsf{s}\in\{0,1\}^d$ are jointly independent. In particular, the events ``$\mathsf{x} \mbox{ is bad for }(\mathbf{E},\mathcal{A},\mathcal{D})$'' are also jointly independent for $\mathsf{x}\in I(\mathcal{X})$. Therefore, using Lemma \ref{lem: Martingale},
$$
\begin{array}{ll}
\mathsf{Pr}[\mbox{all }\mathsf{x}\in\mathcal{X} \mbox{ are bad for }(\mathbf{E},\mathcal{A},\mathcal{D})]&\leq\mathsf{Pr}[\mbox{all }\mathsf{x}\in I(\mathcal{X}) \mbox{ are bad for }(\mathbf{E},\mathcal{A},\mathcal{D})]\\
                     &\leq 4^{|I(\mathcal{X})|}\exp(-2^{d-4}\varepsilon^2|I(\mathcal{X})|)\\
                     &<\exp(2|I(\mathcal{X})|-2^{d-4}\varepsilon^2|I(\mathcal{X})|).
\end{array}
$$
Now, using the above bound %on  $\mathsf{Pr}[\mbox{all }\mathsf{x}\in\mathcal{X} \mbox{ are bad for }(\mathbf{E},\mathcal{A},\mathcal{D})]$ 
and the fact that $|I(\mathcal{X})|=k+\log\varepsilon$, we have
$$
\begin{array}{ll}
\mathsf{Pr}[|\mathcal{B}(\mathbf{E},\mathcal{A},\mathcal{D},\varepsilon)|> \varepsilon 2^k]&\leq \mathsf{Pr}[(\exists \mathcal{X}): \mbox{all }\mathsf{x}\in\mathcal{X} \mbox{ are bad for }(\mathbf{E},\mathcal{A},\mathcal{D})]\\
                      &<\exp\left((2-2^{d-4}\varepsilon^2)|I(\mathcal{X})|\right)\cdot {2^n \choose |I(\mathcal{X})|}\\
                      &=\exp\left((2-2^{d-4}\varepsilon^2)(k+\log\varepsilon)\right)\cdot{2^n \choose k+\log\varepsilon},
\end{array}
$$
where in the last inequality we have used a union bound over all possibilities of $I(\mathcal{X})$. Now, by using a union bound over all choices of $\mathcal{D}$ and $\mathcal{A}$ and using Lemma \ref{lem: existence}, we conclude that 
$$
\begin{array}{l}
\mathsf{Pr}[\mathbf{E} \mbox{ is not a non-malleable $(k,\varepsilon)$-extractor}]\\
\leq \exp\left((2-2^{d-4}\varepsilon^2)(k+\log\varepsilon)\right)\cdot2^{n(k+\log\varepsilon)+2^{d+2m}+d2^d}.
\end{array}
$$
The right hand side can be made less than $1$, hence ensuring the existence of a linear non-malleable $(k,\varepsilon)$-extractor provided that (\ref{eq: existence}) holds.
\end{proof}

\end{document}